\newcommand{\ind}{\mathds{1}}
\newtheorem{theorem}{Theorem}
\newtheorem{corollary}{Corollary}
\newtheorem{lemma}{Lemma}
\newtheorem{definition}{Definition}
\newtheorem{informaltheorem}{Informal Theorem}
\newcommand{\opt}{\textsc{OPT}}
\newcommand{\som}{\textsc{GFT}_\textsc{SOM}}
\newcommand{\bom}{\textsc{GFT}_\textsc{BOM}}
\newcommand{\gsom}{\textsc{GFT}_\textsc{GSOM}}
\newcommand{\gbom}{\textsc{GFT}_\textsc{GBOM}}
\newcommand{\gft}{\textsc{GFT}}
\newenvironment{prevproof}[2]{\noindent {\em {Proof of {#1}~\ref{#2}:}}}{$\Box$\vskip \belowdisplayskip}
\newcommand{\Bs}{\bold{s}}
\newcommand{\Bb}{\bold{b}}
\newcommand{\johannesnote}[1]{{\color{green}{#1}}}
\newcommand{\notshow}[1]{{}}
\def \MM  {\mathcal{M}}
\def \L{{\mathcal{L}}}
\DeclareMathOperator{\argmax}{argmax}
\definecolor{MyGray}{rgb}{0.8,0.8,0.8}
\title{Approximating Gains from Trade in Two-sided Markets \\via Simple Mechanisms}
\author{Johannes Brustle\footnote{School of Computer Science, McGill University. \href{mailto:johannes.brustle@mail.mcgill.ca}{johannes.brustle@mail.mcgill.ca}.}\and
Yang Cai\footnote{School of Computer Science, McGill University. \href{mailto:cai@cs.mcgill.ca}{cai@cs.mcgill.ca}.}\and
Fa Wu\footnote{Zhejiang University and McGill University. \href{mailto:wufa85@zju.edu.cn}{wufa85@zju.edu.cn}. Work done in part while the author was visiting McGill University.}\and
Mingfei Zhao\footnote{School of Computer Science, McGill University. \href{mailto:mingfei.zhao@mail.mcgill.ca}{mingfei.zhao@mail.mcgill.ca}.}}
\begin{document}
\maketitle
\begin{abstract}
We design simple mechanisms to approximate the Gains from Trade (GFT) in two-sided markets with multiple unit-supply sellers and multiple unit-demand buyers. A classical impossibility result by Myerson and Satterthwaite~\cite{MyersonS83} showed that even with only one seller and one buyer, no Individually Rational (IR), Bayesian Incentive Compatible (BIC) and Budget-Balanced (BB) mechanism can achieve full GFT (trade whenever buyer's value is higher than the seller's cost). On the other hand, they proposed the ``second-best'' mechanism that maximizes the GFT subject to IR, BIC and BB constraints, which is unfortunately rather complex for even the single-seller single-buyer case. Our mechanism is simple, IR, BIC and BB, and achieves $\frac{1}{2}$ of the optimal GFT among all IR, BIC and BB mechanisms. Our result holds for arbitrary distributions of the buyers' and sellers' values and can accommodate any downward-closed feasibility constraints over the allocations. The analysis of our mechanism is facilitated by extending the Cai-Weinberg-Devanur duality framework~\cite{CaiDW16} to two-sided markets.
\end{abstract}
\thispagestyle{empty}

\newpage
\section{Introduction}
Mechanism Design for one-sided market has been extensively studied in Economics in the past few decades and recently investigated in Computer Science. In a one-sided market, the mechanism aims to allocate the items to agents and charge them payments so that (1) the agents are {incentivized} to reveal their true preferences and (2) a certain objective is (approximately) optimized, e.g., social welfare and revenue. Various beautiful results have been discovered building on the foundations laid by Vickrey and Myerson~\cite{Vickrey61, Myerson81}. 
In the past few years, there has been increasing interest in understanding how to design mechanisms for two-sided markets. In a two-sided market, the items are owned by a group of selfish agents known as sellers who form one side of the market, and the other side of the market -- the buyers wish to purchase the items from the sellers. Sellers have values for the items that they own and buyers have values for items that are on the market. Both sides are {assumed} to act strategically in order to maximize their utilities. The goal is to design a mechanism to facilitate trade between the two groups and optimize a certain objective, e.g., efficiency. Usually, the two-sided market is studied in the Bayesian model, where we assume that both the buyers' and sellers' values are drawn from some known distributions. The major difference between one-sided markets and two-sided markets is that all items can be viewed as owned by the mechanism in one-sided markets, thus the mechanism does not need to handle the strategic behavior from the seller side.

The interest on two-sided markets can largely be attributed to many important applications, such as stock exchange, online ad exchange platforms (e.g., Google's Doubleclick, Micorosoft's AdECN. etc.) where advertisers try to purchase ad slots from websites who wish to sell the slots, the FCC Spectrum Auction where the telecommunication companies try to purchase spectrum from television broadcasting companies, online market places (e.g., Amazon, eBay etc.) where buyers and sellers trade on the large-scaled trading platforms, and Uber where passengers try to book trips provided by Uber drivers.

Another crucial difference between one-sided markets and two-sided markets is that the mechanisms are usually required to be {budget-balanced} in two-sided markets. Namely, the mechanism should not gain or lose any money from the market. This is usually referred as \emph{Strong Budget Balance (SBB)}. A relaxed version of this definition is called \emph{Weak Budget Balance (WBB)}, where we only need to make sure that the mechanism does not inject money into the market. An even weaker definition known as \emph{Ex-ante Weak Budget Balance (ex-ante WBB)} only requires the expected sum of payments collected from the buyers is no less than the expected sum of the sellers' gains. All variants of the Budget Balance constraint are natural for two-sided markets, however, adding such a constraint greatly increases the difficulty of designing mechanisms. The simplest two-sided market is usually referred to as the bilateral trading, in which there is a single seller who owns an item and there is a single buyer who wants to purchase the item. Both the buyer and the seller have quasi-linear utility functions, and the buyer's value for the item $b$ and the seller's value $s$ are drawn independently from two known distributions $D^{B}$ and $D^{S}$. Unfortunately, a classical impossibility result by Myerson and Satterthwaite~\cite{MyersonS83} stated that even in bilateral trading there is no \notshow{Bayesian Incentive Compatible (BIC) ex-ante WBB mechanism}{Bayesian Incentive Compatible (BIC), Individual Rational (IR), and ex-ante WBB mechanism} that achieves full efficiency (trade whenever the buyer's value is higher than the seller's value). In the same paper, Myerson and Satterthwaite provided a mechanism that maximizes the efficiency among all mechanisms that are BIC and SBB\footnote{As it turns out (Theorem~\ref{thm:ex-ante WBB to SBB}), under interim IR and BIC constraints, ex-ante WBB is equivalent to SBB. So their mechanism is also optimal among all ex-ante WBB mechanisms.}. This is usually known as the ''second-best'' mechanism. Unfortunately, even in the simple setting of bilateral trading, the ``second-best'' mechanism has extremely complex allocation and price rules that are determined by solving a system of differential equations that depend on the buyer's and seller's distributions. It is difficult to imagine how to implement such a mechanism in practice.

Motivated by the aforementioned results, we aim to design simple IR, BIC and SBB mechanisms that approximately {maximizes} efficiency in two-sided markets. There are two standard {ways} to measure efficiency. The first one is the \emph{Social Welfare} induced by the mechanism. The second one is the \emph{Gains from Trade (GFT)}, which is the expected gain in the social welfare induced by the mechanism. For example, if the buyer has value $\$ 5$ for the item and the seller has value $\$ 3$ and they decide to trade. The social welfare is $\$ 5$, while the GFT is $\$ 2$. An astute reader might have already realized that any mechanism that maximizes the social welfare necessarily also maximizes the GFT. However, the two objectives are rather different regarding approximation. In the example above, the mechanism that does not trade still achieves a social welfare of $\$ 3$ which is $60\%$ of the optimal social welfare, but the GFT of not trading is $ 0$ which is not within any constant factor of the optimal GFT. It is not hard to observe that any constant factor approximation of GFT is necessarily a constant factor approximation of the social welfare, but the other direction does not hold. In this sense, GFT is a more difficult objective to approximate and any approximation to GFT provides a stronger guarantee to the efficiency. There are a few mechanisms that manage to provide constant factor approximation to the social welfare in surprisingly broad settings~\cite{DuttingRT14,Colini-Baldeschi16, Colini-Baldeschi16c,BlumrosenD16}. The results are sparser for GFT, McAfee provided a simple mechanism for bilateral trading that achieves $\frac{1}{2}$ of the optimal GFT if the median of the buyer's value is higher than the median of the seller's value~\cite{McAfee08}, and Blumrosen and Mizrahi provided a simple mechanism for the same setting that achieves $\frac{1}{e}$ of the optimal GFT if the buyer's value distribution has Monotone Hazard Rate (MHR).

\subsection{Our Results and Techniques}

In this paper, we provide simple mechanisms to approximate the optimal GFT obtainable by any IR, BIC and ex-ante WBB mechanisms. We believe this paper has the following three major contributions.
\begin{enumerate}
    \item 
We provide a simple IR, BIC and SBB mechanism that achieves at least $\frac{1}{2}$ of the optimal GFT in bilateral trading \emph{for arbitrary seller and buyer value distributions}.
	\item We extend the $2$-approximation to the double auction settings with \emph{arbitrary downward-closed feasibility constraints} and \emph{arbitrary seller and buyer value distributions}.
	\item We demonstrate the applicability of the Cai-Devanur-Weinberg duality framework for two-sided markets. Although our results only concern single-dimensional settings, our duality based upper bound can be easily extended to multi-dimensional settings.
\end{enumerate}

In particular, for the bilateral trading setting, we consider the following two simple IR, BIC, SBB mechanisms and show that {the better} of the two achieves $\frac{1}{2}$ of the optimal GFT.

\begin{itemize}

\item \textbf{Seller-Offering Mechanism(SOM)}: The seller offers a take-it or leave-it price for the item to the buyer. The buyer receives the item if she pays the offered price to the seller. Otherwise, the seller keeps the item and no payment is transferred. The seller chooses the price that maximizes her expected utility depending on her true value $s$ and the distribution of the buyer's value $D^{{B}}$.
\item \textbf{Buyer-Offering Mechanism(BOM)}: The buyer offers a take-it or leave-it price for the item to the seller. If the seller accepts the offered  price, the buyer receives the item and pays the seller the offered price. Otherwise, the seller keeps the item and no payment is transferred. The buyer chooses the price that maximizes her expected utility depending on her true value $b$ and the distribution of the seller's value $D^{{S}}$.
\end{itemize}

\begin{informaltheorem}\label{inforthm:bilateral trading}
	 For arbitrary buyer value distribution $D^{{B}}$ and seller value distribution $D^{{S}}$, either the SOM or the BOM achieves at least $\frac{1}{2}$ of the optimal GFT obtainable by any IR, BIC and ex-ante WBB mechanism in bilateral trading.
\end{informaltheorem}

The SOM is proposed by Blumroson and Mizrahi~\cite{BlumrosenM16}, and the BOM is the same mechanism with the roles of buyer and seller exchanged. Both mechanisms are BIC, and one might wonder whether it is possible to use a {Dominant Strategy Incentive Compatible} (DSIC) mechanism to approximate the GFT. Unfortunately, as shown by~\cite{BlumrosenD16,BlumrosenM16}, no IR, DSIC and SBB mechanism can achieve a constant fraction of the GFT\footnote{The original result was shown for the optimal GFT, but the result by Blumrosen and Mizrahi~\cite{BlumrosenM16} implies that for the hard instance the ``second-best" mechanism achieves a constant fraction of the optimal GFT. Hence, this inapproximability result extends to the best GFT obtainable by any IR, BIC and ex-ante WBB mechanisms.}, so in order to obtain a constant factor approximation for GFT one has to relax one of the three conditions. {We will see later that one can also use an IR, DSIC and  ex-ante WBB mechanism to achieve a $2$-approximation to the optimal GFT.}

 We extend our result to a more general setting that is known as the \emph{double auction} in the literature. In a double auction, there is a single type of item on the market. The buyers each want a single copy of the item and the sellers each own a single copy of the item. So, there is a single number associated with each buyer or seller's value. We assume that these values are all drawn independently from possibly different distributions. Moreover, we allow any downward-closed feasibility constraint on which buyer-seller pairs can trade. We show that for any double auction with arbitrary downward-closed feasibility constraint, one of the two IR, DSIC, ex-ante WBB mechanisms achieves $\frac{1}{2}$ of the optimal GFT obtainable by any IR, BIC and ex-ante WBB mechanisms.

 \begin{itemize}
 	\item \textbf{Generalized Seller-Offering Mechanism (GSOM)}: Given value profile $(\bold{b},\bold{s})$, assign weight $\tilde{\varphi}_i(b_i)-s_j$ to the pair of buyer $i$ and seller $j$, where $\tilde{\varphi_i}(\cdot)$ is Myerson's (ironed) virtual value function for buyer $i$~\cite{Myerson81}. Find a maximum weight matching {subject to the feasibility constraint} between the sellers and buyers according to the weights defined above. We use $\mathcal{M}^{1}(\bold{b},\bold{s})$ to denote the maximum weight matching\footnote{If there are multiple max weight matchings, we break ties lexicographically.}.  For each $(i,j)\in \mathcal{M}^{1}(\bold{b},\bold{s})$, buyer $i$ trades with seller $j$. This allocation rule is monotone and the buyer (or the seller) pays (or receives) the \emph{threshold payment}. 	
\item \textbf{Generalized Buyer-Offering Mechanism (GBOM)}: Given value profile $(\bold{b},\bold{s})$, assign weight $b_i-\tilde{\tau}_j(s_j)$ to the pair of buyer $i$ and seller $j$,  where $\tilde{\tau}_j(\cdot)$ is the (ironed) virtual value function for seller $j$\footnote{ It has a similar form as the Myerson's virtual value. Let $f^S_j$ and $F^S_j$ be the pdf and cdf for seller $j$'s value distribution, then the virtual value of seller $j$ is $s_j+\frac{F^S_j(s_j)}{f^S_j(s_j)}$.
{ If the virtual value function is not monotonically increasing, then we apply a procedure similar to the one used in~\cite{Myerson81, CaiDW16} to iron it.} See Lemma~\ref{lem:flow without ironing},~\ref{lem:canonial flow} and Appendix~\ref{appx:ironing} for more details.}. Find a maximum weight matching subject to the feasibility constraint between the sellers and buyers according to the weights defined above. We use $\mathcal{M}^{2}(\Bb,\Bs)$ to denote the maximum weight matching. For each $(i,j)\in \mathcal{M}^{2}(\Bb,\Bs)$, buyer $i$ trades with seller $j$. This allocation rule is monotone and the buyer (or the seller) pays (or receives) the \emph{threshold payment}.
 \end{itemize}

 \begin{informaltheorem}\label{inforthm:double auction}
 	In any double auction with arbitrary buyer and seller value distributions and arbitrary downward-closed feasibility constraint $\mathcal{F}$, both of {GSOM and GBOM} are IR, DSIC and ex-ante WBB. Furthermore, the better of the two mechanisms above achieves $\frac{1}{2}$ of the optimal GFT obtainable by any IR, BIC, ex-ante WBB mechanism.
 \end{informaltheorem}

It turns out there exists a generic transformation (Theorem~\ref{thm:ex-ante WBB to SBB}) that allows us to convert any IR, DSIC, ex-ante WBB mechanism to an IR, BIC, SBB mechanism without hurting the GFT. By applying this transformation to the GSOM and BSOM, we obtain the following theorem.

  \begin{informaltheorem}\label{inforthm:double auction SBB}
  In any double auction with arbitrary buyer and seller value distributions and arbitrary downward-closed feasibility constraint $\mathcal{F}$, we can design an IR, BIC, SBB mechanism that achieves $\frac{1}{2}$ of the optimal GFT obtainable by any IR, BIC, ex-ante WBB mechanism.
   \end{informaltheorem}

Our result is facilitated by applying the Cai-Devanur-Weinberg duality framework to double auctions. We first characterize the set of dual variables that has finite dual objective values as flows. For readers that are familiar with the duality framework, this sounds rather similar to the one-sided market case. Indeed the conclusion is quite similar, but the argument is different and requires us to observe a nice symmetry between the sellers and buyers. From the duality, we obtained an upper bound for the optimal GFT obtainable by any IR, BIC and ex-ante WBB mechanism. Interestingly, the upper bound has a natural format and suggests the allocation rule of our simple mechanisms. 

\subsection{More Related Work}
Our results are motivated by two different lines of work. The first one is designing simple and approximately revenue-optimal mechanisms in one-sided markets, where we have simple, deterministic and DSIC mechanisms that achieve a constant fraction of the optimal revenue obtainable by any randomized and BIC mechanisms in  fairly general multi-dimensional settings~\cite{HartlineR09, ChawlaHK07, ChawlaHMS10, HartN12, CaiH13, LiY13, BabaioffILW14, Yao15, CaiDW16, CaiZ17}. The second one is designing mechanisms that approximates the GFT. The result presented in this paper is sparser and to the best of our knowledge.~\cite{McAfee08} and~\cite{BlumrosenM16} are the only two other papers that studied this problem. In both of these papers, a constant factor approximation is obtained with respect to the optimal GFT in bilateral trading (although not achievable by any IR, BIC, and WBB mechanism) by making assumptions on the buyer and seller's value distributions. Our result obtains a constant fraction of the GFT of the ``second-best'' mechanism, but does not rely on any assumption of the value distribution and generalizes nicely to the double auction setting. It is shown by Blumrosen and Mizrahi that there is at least a constant gap between the GFT of the  ``second-best'' mechanism and the optimal GFT in bilateral trading~\cite{BlumrosenM16} when the distribution is MHR, the gap for general distributions remains open.

For the objective of social welfare, Duetting et al. \cite{DuttingRT14} considered a setting similar to ours, where the double auction can have matroid, knapsack and matching feasibility constraints over the set of buyer and seller that can be involved in the trade. They proposed a modular approach based on the deferred-acceptance heuristics from~\cite{MilgromS13}, and they obtain an IR, DSIC and WBB mechanism that achieves a constant fraction of the optimal social welfare. Recently, Colini-Baldeschi et al.~\cite{Colini-Baldeschi16} showed how to design an IR, DSIC and SBB mechanism for the same setting when the feasibility constraint is a matroid constraint. Blumrosen and Dobzinski showed that a posted price mechanism achieves $1-\frac{1}{e}$ of optimal social welfare in dominant strategies for the bilateral trading setting, and the result can be generalized to certain multi-dimensional settings. Finally, Colini-Baldeschi et al. \cite{Colini-Baldeschi16c} considered a two-sided combinatorial auctions, where the market has multiple types of items for sale. Each seller owns a few items and she has additive valuation over her items. Every buyer has XOS valuation over the items. Rather surprisingly, they showed that a variant of a sequential posted price mechanism can achieve a constant fraction of the optimal social welfare. The mechanism is simple, IR, DSIC and SBB. {A different line of work~\cite{Wilson85,SatterthwaiteW02,McAfee92, RustichiniSW94, FudenbergMZ07} showed that when the two-sided market has symmetric sellers and symmetric buyers, then the inefficiency goes away quickly in simple auctions, such as $k$-double-auction~\cite{SatterthwaiteW02} and McAffee's trade reduction mechanism~\cite{McAfee92}, when the market size grows large with symmetric sellers and symmetric buyers.}

Our proof is based on the Cai-Devanur-Weinberg duality framework, which has been used to address a number of challenging problems in one-sided markets notably gives non-trivial upper bounds for a variety of settings \cite{CaiDW16, CaiZ17, EdenFFTW16a, EdenFFTW16b}. Our paper provides an extension of the framework to two-sided markets.\\

\section{Preliminaries}\label{sec:prelim}

\noindent\textbf{Two-sided Market Settings}
\vspace{.05in}

We use $m$ to denote the number of sellers and $n$ to denote the number of buyers. For each seller $j$ (or buyer $i$), her type $s_j$ (or type $b_i$), is drawn independently from her type distribution $D^S_j$ (or $D^B_i$). Let $D^S=\times_{j=1}^m D^S_j$ and $D^B=\times_{i=1}^n D^B_i$ be the product distribution of sellers' and buyers' type profile respectively. For notational convenience, let $D^S_{-j}$ (or $D^B_{-i}$) be the distribution of types of all sellers (or buyers) except $j$ (or $i$). We use $T^S_j$ (or $T^S$, $T^B_i$, $T^B$, $T^S_{-j}$, $T^B_{-i}$) and $f^S_j$ (or $f^S$, $f^B_i$, $f^B$, $f^S_{-j}$, $f^B_{-i}$) to denote the support and density function of $D^S_j$ (or $D^S, D^S_i, D^B, D^S_{-j}, D^B_{-i}$). Let $D=(D^S, D^B)$.

\vspace{.1in}
\noindent\textbf{Double Auction Settings}
\vspace{.05in}

In this paper, we focus on a special case of two-sided markets -- double auctions, where sellers are unit-supply with homogeneous items and buyers are unit-demand. In other words, each seller only owns one item, and each buyer only wishes to buy one item and treat all the items as the same. This is a single-dimensional setting as every buyer's or seller's type can be represented as a single number. 
Let $V=\left\{(i,j)\ |\ i\in [n], j\in [m]\right\}$ be the set of all possible trading pairs between the sellers and buyers. We use $\mathcal{F}\subseteq 2^V$ to denote the \emph{feasibility constraint}. More specifically, $\mathcal{F}$ is a set system that contains all of the feasible sets of seller-buyer pairs that can be traded simultaneously. We allow any $\mathcal{F}$ that satisfies the following two properties:
\begin{itemize}
\item Every $S\in \mathcal{F}$ is a matching, in other words, for every buyer $i$ there is at most one seller $j\in[m]$ such that $(i,j)\in S$. Same for the sellers.
\item $\mathcal{F}$ is downward closed, i.e., if $S\in \mathcal{F}$, and $S' \subseteq S$, then $S'\in \mathcal{F}$.
\end{itemize}

\vspace{.05in}
\noindent\textbf{Mechanisms for Two-sided Markets}
\vspace{.05in}

Any mechanism in two-sided markets can be specified as a tuple $(A, p^B, p^S)$, where $A$ represents the allocation rule, $p^{B}$ and $p^{S}$ are the payment rule for buyers and sellers. Given a type profile $\Bs$ and $\Bb$, $A(\Bb,\Bs)\in \mathcal{F}$ is a {(random)} matching that contains all the pairs of sellers and buyers who trade with each other under this type profile. In other words, for every pairs $(i,j)\in A(\Bb,\Bs)$, buyer $i$ receives the item from seller $j$ and pays $p^B_i(\Bb,\Bs)$ to the mechanism. $p^S_j(\Bb,\Bs)$ is the amount of money seller $j$ gains from the mechanism. For notational convenience, we slightly abuse notation to let $p^B_i(b_i)=\mathbb{E}_{b_{-i},\Bs}[p^B_i(b_i,b_{-i},\Bs)]$ be buyer $i$'s  expected payment when she reports type $b_i$, over the randomness of mechanism and other agents' types. Similarly, let $p^S_j(s_j)$ be seller $j$'s expected gains when she reports type $s_j$.

In our analysis, we usually use an alternative representation of the allocation rule. For any type profile $\Bb$ and $\Bs$, for every $i\in [n]$, we use $x^B_i(\Bb,\Bs)$ to denote the probability that buyer $i$ gets an item under this type profile, i.e., the probability that $(i,j)\in A(\Bb,\Bs)$ for some $j$. Similarly, for every $j\in [m]$, we use $x^S_j(\Bb,\Bs)$ to denote the probability that seller $j$ sells her item. Given a mechanism $(x^B, x^S, p^B, p^S)$, for every type profile $(\Bb,\Bs)$, buyer $i$'s utility is $b_i\cdot x^B_i(\Bb,\Bs)-p^B_i(\Bb,\Bs)$ and similarly seller $j$'s utility is  $p^S_j(\Bb,\Bs)-s_j\cdot x^S_j(\Bb,\Bs)$.

We again slightly abuse notation to let $x^B_i(b_i)=\mathbb{E}_{b_{-i},\Bs}[x^B_i(b_i,b_{-i},\Bs)]$ be the interim probability that buyer $i$ gets an item when she reports type $b_i$, over the randomness of mechanism and other agents' types. Similarly, let $x^S_j(s_j)$ be seller $j$'s interim probability to sell her item when she reports type $s_j$.

An allocation rule $x=(x^B, x^S)$ is \emph{monotone} if for every buyer $i\in [n]$, $x^B_i(b_i,b_{-i},\Bs)$ is non-decreasing in $b_i$ for any fixed $b_{-i}$ and $\Bs$, and for every seller $j\in [m]$, $x^S_j(\Bb,s_j,s_{-j})$ is non-increasing in $s_j$ for any fixed $\Bb$ and $s_{-j}$.

An allocation rule $x=(x^B, x^S)$ is \textit{implementable} if for every $\Bs$ and $\Bb$, the allocation can be expressed as a distribution over matchings in $\mathcal{F}$. The set of all implementable allocation rule is closed and convex, which depends on the feasibility constraint $\mathcal{F}$. We denote such set as $P(\mathcal{F})$.

\vspace{.15in}
\noindent\textbf{Budget Balance Constraints}
\vspace{.05in}

There are a few variants of the budget balance constraints.
\begin{itemize}
\item \textbf{Strong Budget Balance (SBB)}: Under any type profile, the sum of all buyers' (expected) payment is equal to the sum of all sellers' (expected) gains, over the randomness of mechanism.
\item \textbf{Weak Budget Balance (WBB)}: Under any type profile, the sum of all buyers' (expected) payment is at least the sum of all sellers' (expected) gains, over the randomness of mechanism.
\item \textbf{Ex-ante Strong Budget Balance (Ex-ante SBB)}: The sum of all buyers' expected payment is equal to the sum of all sellers' expected gains, over the randomness of mechanism and the type profile of all agents.
\item \textbf{Ex-ante Weak Budget Balance (Ex-ante WBB)}: The sum of all buyers' expected payment is at least the sum of all sellers' expected gains, over the randomness of mechanism and the type profile of all agents.
\end{itemize}
We discuss the connections between these variants in Section~\ref{sec:transformation}.

\vspace{.15in}
\noindent\textbf{Gains from Trade}
\vspace{.05in}

\notshow{Gains from Trade(GFT) is the objective function we study. It describes the change of efficiency after running the mechanism. In other words, the gains from trade is the social welfare minus the sum of sellers' value. Formally, given a mechanism $M$, the expected Gains from Trade for the mechanism is written as}
{Gains from Trade(GFT) is the objective function we study. It describes the gains of social welfare induced by the mechanism. In other words, GFT is the social welfare of the allocation selected by the mechanism minus the sum of sellers' value. Formally, given a mechanism $M=(A, p^B, p^S)$, the expected GFT for the mechanism is defined as}
\begin{equation}
\gft(M)=\sum_{\Bb\in T^B}\sum_{\Bs\in T^S}f^B(\Bb)f^S(\Bs)\cdot \sum_{(i,j)\in A(\Bb,\Bs)}(b_i-s_j)
\end{equation}
or using the definition of $x^B, x^S$,
\begin{equation}
\begin{aligned}
\gft(M)=\sum_{\Bb\in T^B}\sum_{\Bs\in T^S}f^B(\Bb)f^S(\Bs)\cdot \left(\sum_{i=1}^nx^B_i(\Bb,\Bs)\cdot b_i-\sum_{j=1}^mx^S_j(\Bb,\Bs)\cdot s_j\right)
\end{aligned}
\end{equation}
We use $\opt$ to denote the highest expected GFT obtainable by any IR, BIC, ex-ante WBB mechanism. 

\vspace{.15in}
\noindent\textbf{Continuous vs. Discrete Distributions}
\vspace{.05in}

We explicitly assume that the input distributions are discrete. Nevertheless, if the distributions are continuous, we can write a continuous LP and derive the duality based upper bound similarly. All the rest of the analysis holds for continuous distributions without much modification. So our results also apply to  continuous distributions.

\section{Duality}
{
In this section, we apply the Cai-Devanur-Weinberg duality framework to obtain a benchmark for the optimal GFT obtainable by any IR, BIC and ex-ante WBB mechanism. The same formulation can easily be extended to the general two-sided markets. The main theorem for this section is shown as follows:

\begin{theorem}\label{cor:duality upper bound}
Given a double auction, let $\opt$ be the optimal GFT among all IR, BIC and ex-ante WBB mechanisms, for any {$\alpha\geq 0$},
\[ \opt\leq \max_{x\in P(\mathcal{F})}~~\sum_{\Bb,\Bs}f^B(\Bb)f^S(\Bs)\left(\sum_{i=1}^n x^B_i(\Bb,\Bs)\cdot (b_i+\alpha\cdot \tilde{\varphi}_i(b_i))-\sum_{j=1}^m x^S_j(\Bb,\Bs)\cdot (s_j+\alpha\cdot \tilde{\tau}_j(s_j))\right).\]

 $\tilde{\varphi}_i(\cdot)$ is Myerson's ironed virtual value function for buyer $i$. $\tilde{\tau}_j(\cdot)$ is symmetric to $\tilde{\varphi}_i(\cdot)$, and we will refer to it as the Myerson's ironed virtual value function for seller $j$. See Appendix~\ref{appx:ironing} for more details.
\end{theorem}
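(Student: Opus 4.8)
The plan is to run the Cai--Devanur--Weinberg recipe in the two-sided setting. \textbf{Step 1 (the LP).} Write $\opt$ as the value of a linear program whose variables are an implementable allocation rule $x=(x^B,x^S)$ together with interim payments $\{p^B_i(b_i)\}$, $\{p^S_j(s_j)\}$; the constraints are interim BIC and interim IR for every buyer and every seller, the single ex-ante WBB inequality $\sum_i\E_{\Bb}[p^B_i(b_i)]\ge\sum_j\E_{\Bs}[p^S_j(s_j)]$, and $x\in P(\mathcal{F})$; the objective is $\gft$. \textbf{Step 2 (partial Lagrangian and the flow characterization).} Dualize BIC, IR, and ex-ante WBB while keeping $x\in P(\mathcal{F})$ as a hard constraint, with multipliers $\lambda^B_i(b_i,b_i'),\lambda^S_j(s_j,s_j')\ge 0$ for the BIC constraints (``type $t$ should not report $t'$''), $\mu^B_i(b_i),\mu^S_j(s_j)\ge 0$ for IR, and a single $\alpha\ge 0$ for ex-ante WBB. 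The Lagrangian is affine in the payments and the payments are unconstrained, so a finite dual value forces the coefficient of every payment variable to vanish. Collecting the coefficient of $p^B_i(b_i)$ gives, for each buyer $i$ and type $b_i$,
\[
\sum_{b_i'}\lambda^B_i(b_i,b_i')=\alpha f^B_i(b_i)-\mu^B_i(b_i)+\sum_{b_i'}\lambda^B_i(b_i',b_i),
\]
i.e.\ $\lambda^B_i$ is a flow on the (totally ordered) type space of buyer $i$ with a source of mass $\alpha f^B_i(b_i)$ and a sink of mass $\mu^B_i(b_i)$ at each node $b_i$ (in particular $\sum_{b_i}\mu^B_i(b_i)=\alpha$). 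The coefficient of $p^S_j(s_j)$ yields the mirror statement for sellers: since a seller's utility is $p-sx$ rather than $bx-p$ and the ex-ante WBB term enters the seller payments with the opposite sign, a seller of cost $s$ behaves like a buyer of value $-s$, so the seller flow is the buyer flow with the type order reversed --- in particular the IR ``sink'' sits at the \emph{highest} seller type rather than the lowest buyer type. Recognizing this symmetry is the only structural point the two-sided extension really needs.

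\textbf{Step 3 (virtual welfare, weak duality, canonical flow).} For any dual satisfying the two families of flow equations, substitute them back into the Lagrangian to cancel $\mu^B,\mu^S$; what remains is a pure functional of $x$ in which the coefficient of the interim quantity $x^B_i(b_i)$ collapses to $f^B_i(b_i)b_i+\alpha f^B_i(b_i)b_i+\sum_{b_i'}(b_i-b_i')\lambda^B_i(b_i',b_i)$, with the analogous expression for $x^S_j(s_j)$ via the buyer/seller symmetry of Step~2. By weak duality $\opt$ is at most $\max_{x\in P(\mathcal F)}$ of this functional for \emph{every} feasible dual; in particular it holds for the fixed $\alpha$ in the theorem statement together with the \emph{canonical} flows: place all IR mass on the lowest buyer type and the highest seller type, and route each node's source mass monotonically toward it. For the canonical flow the edge into $b_i$ from its successor carries exactly $\alpha\big(1-F^B_i(b_i)\big)$, so $\sum_{b_i'}(b_i-b_i')\lambda^B_i(b_i',b_i)$ telescopes to (the discrete analogue of) $-\alpha\big(1-F^B_i(b_i)\big)$ and the coefficient becomes $f^B_i(b_i)\big(b_i+\alpha\varphi_i(b_i)\big)$ with $\varphi_i$ Myerson's virtual value; symmetrically the seller coefficient becomes $-f^S_j(s_j)\big(s_j+\alpha\tau_j(s_j)\big)$ with $\tau_j(s_j)=s_j+F^S_j(s_j)/f^S_j(s_j)$. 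Rewriting the interim sum as an expectation over full profiles $(\Bb,\Bs)$ yields exactly the displayed bound.

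I expect the genuinely non-routine part to be the ironing: the canonical flow above is feasible only when $\varphi_i$ and $\tau_j$ are monotone, so in general one must repair the flow on each ironed interval so that the realized coefficients are the \emph{ironed} virtual values $\tilde\varphi_i$ and $\tilde\tau_j$ --- this is what Lemma~\ref{lem:flow without ironing}, Lemma~\ref{lem:canonial flow} and Appendix~\ref{appx:ironing} carry out, and it is the only place beyond routine manipulation. The secondary thing to be careful about is the sign bookkeeping on the seller side, namely checking that the IR multipliers really cancel in Step~3 and that the seller's canonical flow points toward the top type, which is precisely where the buyer/seller symmetry observed in Step~2 does the work.
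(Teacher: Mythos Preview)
Your argument is correct, but it differs from the paper's in a way worth noting. The paper first invokes Theorem~\ref{thm:ex-ante WBB to SBB} to replace ex-ante WBB by the \emph{ex-post} equality (SBB), writes the LP with ex-post payments subject to SBB, and dualizes only the BIC/IR constraints; the parameter $\alpha$ then \emph{emerges} from the characterization of useful duals (Lemma~\ref{lem:useful dual}): because balanced payments can be shifted simultaneously on both sides, finiteness of the partial Lagrangian forces all the buyer and seller ``excess flow'' quantities to equal a common nonnegative constant $\alpha$. You instead keep ex-ante WBB as a single inequality in the primal and let $\alpha\ge 0$ be its Lagrange multiplier directly, with separate IR multipliers $\mu$ that you later cancel via the flow equations. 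This bypasses both Theorem~\ref{thm:ex-ante WBB to SBB} and Lemma~\ref{lem:useful dual}, and is arguably the cleaner route to the bound: the ``for any $\alpha\ge 0$'' in the theorem is then just weak duality at a fixed multiplier. What the paper's detour buys is the independent structural fact that SBB and ex-ante WBB are equivalent (Theorem~\ref{thm:ex-ante WBB to SBB}) and an explanation of where $\alpha$ comes from in the SBB formulation; what your route buys is a shorter proof that does not need the payment-transformation machinery. Both approaches converge on the same canonical-flow/ironing endgame (Lemmas~\ref{lem:flow without ironing} and~\ref{lem:canonial flow}), which you correctly flag as the only non-routine step. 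One small imprecision: the unironed canonical flow is always a feasible dual; what goes wrong without monotonicity is only that it yields $\varphi_i,\tau_j$ rather than $\tilde\varphi_i,\tilde\tau_j$, so ``repairing'' means adding loops to realize the ironed values, not restoring feasibility.
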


}


First, let us state an interesting equivalence among the ex-ante WBB constraint and the SBB constraint for IR and BIC mechanisms. {A SBB mechanism is clearly ex-ante WBB, and we demonstrate the other direction in Theorem~\ref{thm:ex-ante WBB to SBB}.}
{The intuition behind Theorem~\ref{thm:ex-ante WBB to SBB} is that if our mechanism has positive surplus, we can simply divide the surplus to the agents evenly and independently from their reported types (Lemma~\ref{lem:ewbb to ebb}). Now we have an ex-ante SBB mechanism. Next, we massage the payments so that all interim payments remain unchanged, while under every type profile the sum of buyers' payments equal the sum of sellers' gains. This is achieved via an interesting linear transformation on the payments (Theorem~\ref{thm:ebb to sbb}).}

\begin{theorem}\label{thm:ex-ante WBB to SBB}
	Given an IR, BIC, ex-ante WBB mechanism $M=(x^B, x^S, p^B, p^S)$ with non-negative payment rule, there exists another non-negative payment rule $(p^{B'}, p^{S'})$ such that the mechanism $M'=(x^B, x^S, p^{B'}, p^{S'})$ is IR, BIC and SBB.
\end{theorem}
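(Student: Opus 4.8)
The surrounding text already sketches the route: reduce ex-ante WBB to ex-ante SBB, then reduce ex-ante SBB to SBB, in each step keeping IR, BIC and non-negativity of payments. So the plan is to prove Theorem~\ref{thm:ex-ante WBB to SBB} as the composition of Lemma~\ref{lem:ewbb to ebb} (ex-ante WBB $\Rightarrow$ ex-ante SBB) and Theorem~\ref{thm:ebb to sbb} (ex-ante SBB $\Rightarrow$ SBB), without ever changing the allocation rule $(x^B,x^S)$.

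\textbf{Step 1: from ex-ante WBB to ex-ante SBB.} Let $\Delta=\E_{\Bb,\Bs}\big[\sum_i p^B_i(\Bb,\Bs)-\sum_j p^S_j(\Bb,\Bs)\big]\ge 0$ be the ex-ante surplus. First I would redistribute $\Delta$ to the agents as a \emph{type-independent} lump sum — e.g. add $\Delta/m$ to every seller $j$'s gain (any split keeping all payments non-negative works). Since the added amount depends on no reported type, each agent's interim utility changes only by an additive constant, so BIC is untouched; since gains only increase, interim IR and non-negativity are preserved; and by construction the expected total buyer payment now equals the expected total seller gain, i.e. the mechanism is ex-ante SBB with the same allocation rule.

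\textbf{Step 2: from ex-ante SBB to SBB.} Let $h(\Bb,\Bs)=\sum_i p^B_i(\Bb,\Bs)-\sum_j p^S_j(\Bb,\Bs)$ be the per-profile surplus; ex-ante SBB says $\E_{\Bb,\Bs}[h(\Bb,\Bs)]=0$. I want to add adjustments $a^B_i$ to buyer payments and $a^S_j$ to seller gains so that (a) $\sum_i a^B_i(\Bb,\Bs)-\sum_j a^S_j(\Bb,\Bs)=-h(\Bb,\Bs)$ for every profile, which makes the new surplus identically $0$ (SBB); (b) each adjustment has zero mean conditioned on the corresponding agent's own reported type, i.e. $\E_{b_{-i},\Bs}[a^B_i(b_i,b_{-i},\Bs)]=0$ for all $b_i$ and symmetrically for sellers, which keeps every interim payment — hence every interim utility — unchanged, so interim IR and BIC are inherited from $M$; and (c) $p^B_i+a^B_i\ge 0$ and $p^S_j+a^S_j\ge 0$. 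Conditions (a) and (b) on their own are cheap: one can route the whole correction through a single buyer and a single seller, e.g. $a^B_1(\Bb,\Bs)=-h(\Bb,\Bs)+\E_{b_{-1},\Bs}[h(b_1,b_{-1},\Bs)]$ and $a^S_1(\Bb,\Bs)=-\E_{b_{-1},\Bs}[h(b_1,b_{-1},\Bs)]$ (a function of $b_1$ only, with overall mean $\E[h]=0$), all other adjustments zero; one checks directly that this satisfies (a) and (b). \textbf{The hard part is (c):} this naive reroute can push $p^B_1+a^B_1$ below zero whenever $|h|$ is large compared to $p^B_1$. The fix is to spread the correction across the agents so that each one absorbs only as much of $h$ as its current payment can safely cover, while still preserving the conditional-mean-zero property of (b); concretely this is realized as an explicit linear map on the payment vectors (the ``interesting linear transformation'' referred to in the text), and the technical core of Theorem~\ref{thm:ebb to sbb} is to verify, profile by profile, that the image of this map stays in the non-negative orthant — using that the total buyer payment and total seller gain are both non-negative on every profile and have equal expectation.

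\textbf{Step 3: conclusion.} Applying Step 1 to $M$ yields an IR, BIC, ex-ante SBB mechanism with non-negative payments and allocation rule $(x^B,x^S)$; applying Step 2 to it yields $M'=(x^B,x^S,p^{B'},p^{S'})$ which is IR, BIC, SBB with non-negative payments. Since the allocation rule never changes, $M'$ realizes the same GFT as $M$ — the fact that lets the GSOM/GBOM guarantees transfer to the SBB setting (Informal Theorem~\ref{inforthm:double auction SBB}).
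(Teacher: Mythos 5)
Your top-level plan is exactly the paper's: compose Lemma~\ref{lem:ewbb to ebb} (ex-ante WBB $\Rightarrow$ ex-ante SBB) with Theorem~\ref{thm:ebb to sbb} (ex-ante SBB $\Rightarrow$ SBB), never touching the allocation rule. Step~1 coincides with Lemma~\ref{lem:ewbb to ebb} and is correct. The issue is in Step~2.

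Two comments. First, a small slip in your naive reroute: with $a^B_1=-h+\E_{b_{-1},\Bs}[h(b_1,\cdot)]$ and $a^S_1=-\E_{b_{-1},\Bs}[h(b_1,\cdot)]$ one gets $a^B_1-a^S_1=-h+2\,\E_{b_{-1},\Bs}[h(b_1,\cdot)]$, not $-h$; you want $a^S_1=+\E_{b_{-1},\Bs}[h(b_1,\cdot)]$. Second, and substantively: your Step~2 is a plan, not a proof. You correctly write down the three requirements --- (a) $\sum_i a^B_i-\sum_j a^S_j=-h$, (b) conditional-mean-zero given each agent's own type, (c) non-negativity --- and correctly flag (c) as the obstruction, but you then assert that ``spreading the correction so each agent absorbs what its payment can cover'' resolves it, verified ``profile by profile.'' No such spreading rule is written down, and it is not clear one can always reconcile (b) and (c) by this kind of local bookkeeping. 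That is exactly the step that needs a construction.

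The paper's Theorem~\ref{thm:ebb to sbb} avoids this difficulty by not attempting any additive correction of $p$. Instead of starting from $h(\Bb,\Bs)$ and trying to cancel it pointwise, it throws away the original pointwise payments and rebuilds them from the \emph{interim} payments alone: for $i\in\Omega^B$ (agents with positive expected payment), $p^{B'}_i(\Bb,\Bs)$ is set proportional to $\prod_{i'\in\Omega^B}p^B_{i'}(b_{i'})\cdot\prod_{j'\in\Omega^S}p^S_{j'}(s_{j'})$, normalized so that integrating out $(b_{-i},\Bs)$ recovers $p^B_i(b_i)$ --- giving (b) --- while sellers' new gains carry the identical profile-dependent prefactor, so (a) follows from ex-ante SBB at the level of ex-ante expectations. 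Crucially, non-negativity (c) is then immediate, as the formula is a product of non-negative interim payments; there is no profile-by-profile verification at all. Despite the introduction's phrase ``linear transformation,'' the paper's map is multiplicative in the interim payments, not an additive perturbation of $p$, and the per-profile surplus $h$ never appears. Your (a)--(c) framework is a valid characterization of the target set, and one could view $p^{B'}-p^B$ as an adjustment satisfying it after the fact, but proving existence directly via an additive spreading argument looks considerably harder than the paper's product-form construction.
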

\begin{proof}
It follows from Lemma~\ref{lem:ewbb to ebb} and Theorem~\ref{thm:ebb to sbb}. See Section~\ref{sec:transformation} for more details.
\end{proof}

\subsection{Duality Framework}
We aim to find an IR, BIC, ex-ante WBB mechanism that maximizes the GFT in a double auction. Due to Theorem~\ref{thm:ex-ante WBB to SBB}, this is equivalent to finding an IR, BIC, SBB mechanism that maximizes the GFT, which is captured by the following LP (Figure~\ref{fig:primal}). To ease notation, we use a special type $\varnothing$ to represent the choice of not participating in the mechanism. More specifically, $x^B_i(\varnothing,b_{-i},\Bs)=p^B_i(\varnothing,b_{-i},\Bs)=0$ for any $b_{-i}$ and $\Bs$. Now the interim IR constraint can be described as another BIC constraint: for any type $b_i$, buyer $i$ does not want to report type $\varnothing$. Let $T_i^{B+}=T^B_i\cup \{\varnothing\}$. Similarly, we define the same type for every seller $j$ and let $T_j^{S+}=T^S_j\cup \{\varnothing\}$.

\begin{figure}[ht]\label{fig:primal}
\colorbox{MyGray}{
\begin{minipage}{\textwidth} {
\noindent\textbf{Variables:}
\begin{itemize}
\item $p^B_i(\Bb,\Bs), p^S_j(\Bb,\Bs)$, for all buyers $i\in [n]$, sellers $j\in [m]$, and value profile $\Bb\in T^B, \Bs\in T^S$ denoting buyer $i$'s payment and seller $j$'s gains accordingly, under type profile $(\Bb,\Bs)$.
\item $x^B_i(\Bb,\Bs), x^S_j(\Bb,\Bs)$, for all buyers $i\in [n]$, sellers $j\in [m]$, and value profile $\Bb\in T^B, \Bs\in T^S$ denoting the probability that buyer $i$ gets an item and seller $j$ sells her item accordingly, under type profile $(\Bb,\Bs)$.
\end{itemize}
\textbf{Constraints:}
\begin{itemize}
\item $\mathbb{E}_{b_{-i},\Bs}\left[\left(b_i\cdot x^B_i(b_i,b_{-i},\Bs)-p^B_i(b_i,b_{-i},\Bs)\right)-\left(b_i'\cdot x^B_i(b_i',b_{-i},\Bs)-p^B_i(b_i',b_{-i},\Bs)\right)\right]\geq 0$, for all buyer $i$, and types $b_i\in T^B_i, b_i'\in T^{B+}_i$, guaranteeing that the mechanism is BIC and IR for all buyers.
\item $\mathbb{E}_{\Bb,s_{-j}}\left[\left(p^S_j(\Bb,s_j,s_{-j})-s_j\cdot x^S_j(\Bb,s_j,s_{-j})\right)-\left(p^S_j(\Bb,s_j',s_{-j})-s_j'\cdot x^S_j(\Bb,s_j',s_{-j})\right)\right]\geq 0$, for all seller $j$, and types $s_j\in T^S_j, s_j'\in T^{S+}_j$, guaranteeing that the mechanism is BIC and IR for all sellers.
\item $\sum_{i=1}^n p^B_i(\Bb,\Bs)=\sum_{j=1}^m p^S_j(\Bb,\Bs)$ for every $\Bb\in T^B, \Bs\in T^S$, guaranteeing that the mechanism is SBB.
\item $x \in P(\mathcal{F})$, guaranteeing $x=(x^B, x^S)$ is implementable.
\end{itemize}
\textbf{Objective:}
\begin{itemize}
\item $\max \mathbb{E}_{\Bb,\Bs}\left[\sum_{i=1}^nx^B_i(\Bb,\Bs)\cdot b_i-\sum_{j=1}^mx^S_j(\Bb,\Bs)\cdot s_j\right]$, the expected Gains from Trade.\\
\end{itemize}}
\end{minipage}}
\caption{A Linear Program (LP) for Maximizing Gains from Trade.}
\label{fig:primal}
\end{figure}

We take the partial Lagrangian dual of the LP in Figure~\ref{fig:primal} by lagrangifying the BIC constraints for all agents. Let $\lambda^B_i(b_i,b_i')$ be the Lagrange multiplier associated with buyer $i$'s BIC constraints, and $\lambda^S_j(s_j,s_j')$ be the Lagrange multiplier associated with seller $j$'s BIC constraints (see Figure~\ref{fig:Lagrangian}).

\begin{figure}[ht]
\colorbox{MyGray}{
\begin{minipage}{\textwidth} {
\noindent\textbf{Variables:}
\begin{itemize}
\item $\lambda^B_i(b_i,b_i')$ for all $i, b_i\in T^B_i, b_i'\in T^{B+}_i$, the Lagrangian multiplier for buyers' BIC and IR constraints.
\item $\lambda^S_j(s_j,s_j')$ for all $j, s_j\in T^S_j, s_j'\in T^{S+}_j$, the Lagrangian multiplier for sellers' BIC and IR constraints.
\item $p^B_i(\Bb,\Bs), p^S_j(\Bb,\Bs), x^B_i(\Bb,\Bs), x^S_j(\Bb,\Bs)$ for all $i, j, \Bb\in T^B, \Bs\in T^S$.
\end{itemize}
\textbf{Constraints:}
\begin{itemize}
\item $\lambda^B_i(b_i,b_i')\geq 0$ for all $i, b_i\in T^B_i, b_i'\in T^{B+}_i$.
\item $\lambda^S_j(s_j,s_j')\geq 0$ for all $j, s_j\in T^S_j, s_j'\in T^{S+}_j$.
\item $\sum_{i=1}^n p^B_i(\Bb,\Bs)=\sum_{j=1}^m p^S_j(\Bb,\Bs)$ for every $\Bb\in T^B, \Bs\in T^S$.
\item $x \in P(\mathcal{F})$.
\end{itemize}
\textbf{Objective:}
\begin{itemize}
\item $\min_{\lambda}\max_{x, p} \L(\lambda, x, p)$.\\
\end{itemize}}
\end{minipage}}
\caption{Partial Lagrangian of the LP for Maximizing Gains from Trade.}
\label{fig:Lagrangian}
\end{figure}

Here $\L(\lambda, x, p)$ is the partial Lagrangian that is defined as:
\begin{equation}\label{equ:language function}
\begin{aligned}
&\L(\lambda, x, p)\\
&=\mathbb{E}_{\Bb,\Bs}\left[\sum_{i=1}^nx^B_i(\Bb,\Bs)\cdot b_i-\sum_{j=1}^mx^S_j(\Bb,\Bs)\cdot s_j\right]\\
&+\sum_i\sum_{b_i,b_i'}\lambda^B_i(b_i,b_i')\cdot \mathbb{E}_{b_{-i},\Bs}\left[\left(b_i\cdot x^B_i(b_i,b_{-i},\Bs)-p^B_i(b_i,b_{-i},\Bs)\right)-\left(b_i'\cdot x^B_i(b_i',b_{-i},\Bs)-p^B_i(b_i',b_{-i},\Bs)\right)\right]\\
&+\sum_j\sum_{s_j,s_j'}\lambda^S_j(s_j,s_j')\cdot \mathbb{E}_{\Bb,s_{-j}}\left[\left(p^S_j(\Bb,s_j,s_{-j})-s_j\cdot x^S_j(\Bb,s_j,s_{-j})\right)-\left(p^S_j(\Bb,s_j',s_{-j})-s_j'\cdot x^S_j(\Bb,s_j',s_{-j})\right)\right]\\
\end{aligned}
\end{equation}

Similar to the framework in~\cite{CaiDW16}, we show that the dual variables need to have certain nice format in order for the partial Lagrangian to be finite. Notice that the argument here is different from the one in one-sided markets, as the payments are no longer unconstrained and have to satisfy the SBB constraint. We call a payment rule $p=(p^B, p^S)$ \textit{balanced} if it satisfies the SBB constraint.

\begin{equation}\label{equ:language function}
\begin{aligned}
\L(\lambda, x, p)&=\sum_i\sum_{b_i\in T^B_i}\bigg( \sum_{b_i'\in T^{B+}_i}\lambda^B_i(b_i,b_i')-\sum_{b_i'\in T^{B}_i}\lambda^B_i(b_i',b_i) \bigg)\cdot \mathbb{E}_{b_{-i},\Bs}\left[p^B_i(b_i,b_{-i},\Bs)\right]\\
&+\sum_j\sum_{s_j\in T^S_j}\bigg(\sum_{s_j'\in T^{S}_j}\lambda^S_j(s_j',s_j)-\sum_{s_j'\in T^{S+}_j}\lambda^S_j(s_j,s_j') \bigg)\cdot \mathbb{E}_{\Bb,s_{-j}}\left[p^S_j(\Bb,s_j,s_{-j})\right]\\
&+\sum_{\Bb}\sum_{\Bs}f^B(\Bb)f^S(\Bs)\cdot \left(\sum_{i=1}^nx^B_i(\Bb,\Bs)\cdot b_i-\sum_{j=1}^mx^S_j(\Bb,\Bs)\cdot s_j\right)\\
&+\sum_i\sum_{b_i,b_i'}\lambda^B_i(b_i,b_i')\cdot \mathbb{E}_{b_{-i},\Bs}\left[\left(b_i\cdot x^B_i(b_i,b_{-i},\Bs)\right)-\left(b_i'\cdot x^B_i(b_i',b_{-i},\Bs)\right)\right]\\
&+\sum_j\sum_{s_j,s_j'}\lambda^S_j(s_j,s_j')\cdot \mathbb{E}_{\Bb,s_{-j}}\left[\left(s_j'\cdot x^S_j(\Bb,s_j',s_{-j})\right)-\left(s_j\cdot x^S_j(\Bb,s_j,s_{-j})\right)\right]\\
\end{aligned}
\end{equation}

\begin{definition}[Useful Dual Variables~\cite{CaiDW16}]
A set of feasible duals $\lambda$ is \textbf{useful} if
\[\max_{x \in P(\mathcal{F}),~p~\text{balanced}} \L(\lambda, x, p)< \infty.\]
\end{definition}

\begin{lemma}\label{lem:useful dual}
A set of dual variables is useful iff there exists a nonnegative number 
{$\alpha$} such that for every $i, j$ and every $b_i\in T^B_i, s_j\in T^S_j$,
\begin{equation}\label{equ:def of alpha}
\frac{1}{f^B_i(b_i)}(\sum_{b_i'\in T^{B+}_i}\lambda^B_i(b_i,b_i')-\sum_{b_i'\in T^{B}_i}\lambda^B_i(b_i',b_i))=\frac{1}{f^S_j(s_j)}(\sum_{s_j'\in T^{S+}_j}\lambda^S_j(s_j,s_j')-\sum_{s_j'\in T^{S}_j}\lambda^S_j(s_j',s_j))=\alpha
\end{equation}

Equivalently, it means for each buyer $i$ (or seller $j$), $\lambda^B_i$ (or $\lambda^S_j$) forms a valid flow on the following graph. We use buyer $\lambda^B_i$ as an example to illustrate this flow:
\begin{itemize}
\item Nodes: A super source $S$, a super sink $\varnothing$, and a node $b_i$ for every type $b_i\in T^B_{i}$.
\item An edge from $S$ to $b_i$ of weight $\alpha\cdot f^B_i(b_i)$, for all $b_i\in T^B_{i}$.
\item An edge from $b_i$ to $b_i'$ of weight $\lambda^B_i(b_i,b_i')$ for all $b_i\in T^B_{i}$, and $b_i'\in T^{B+}_{i}$ (including the sink).
\end{itemize}
Equation~\eqref{equ:def of alpha} guarantees that we have flow conservation on every node of this graph.
\end{lemma}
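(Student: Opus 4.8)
The plan is to mimic the ``finiteness forces a flow'' argument of \cite{CaiDW16}, with the twist that here the payments are not free but must be \emph{balanced}; this is exactly where the symmetry between buyers and sellers enters. Starting from the rewritten form of $\L(\lambda,x,p)$, the payment variables appear only through the linear functional
\[
\Phi(p)=\sum_{\Bb,\Bs}f^B(\Bb)f^S(\Bs)\left(\sum_{i=1}^n \frac{c^B_i(b_i)}{f^B_i(b_i)}\,p^B_i(\Bb,\Bs)+\sum_{j=1}^m \frac{c^S_j(s_j)}{f^S_j(s_j)}\,p^S_j(\Bb,\Bs)\right),
\]
where $c^B_i(b_i)=\sum_{b_i'\in T^{B+}_i}\lambda^B_i(b_i,b_i')-\sum_{b_i'\in T^{B}_i}\lambda^B_i(b_i',b_i)$ and $c^S_j(s_j)=\sum_{s_j'\in T^{S}_j}\lambda^S_j(s_j',s_j)-\sum_{s_j'\in T^{S+}_j}\lambda^S_j(s_j,s_j')$; the factor $1/f^B_i(b_i)$ appears when each interim expectation $\mathbb{E}_{b_{-i},\Bs}[\cdot]$ is expanded over full profiles using $f^B(\Bb)=f^B_i(b_i)f^B_{-i}(b_{-i})$. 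Everything else in $\L$ is affine in $x$ with finite coefficients, and $x$ lives in the bounded polytope $P(\mathcal{F})$, hence that part is always finite; so $\lambda$ is useful iff $\sup_{p\ \text{balanced}}\Phi(p)<\infty$.

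Next, I would note that the balanced payment rules form a linear subspace that splits as a direct sum over type profiles: at $(\Bb,\Bs)$ the only constraint is $\sum_i p^B_i(\Bb,\Bs)=\sum_j p^S_j(\Bb,\Bs)$, and perturbations at distinct profiles are independent. A linear functional is bounded above on a linear subspace iff it is identically zero on it; restricted to the per-profile subspace $\{q\in\mathbb{R}^{n+m}:\sum_i q^B_i=\sum_j q^S_j\}$ this forces the coefficient vector to be a scalar multiple of the normal $(\mathbf{1},-\mathbf{1})$. Thus for each profile there is a scalar $\gamma(\Bb,\Bs)$ with $c^B_i(b_i)/f^B_i(b_i)=\gamma(\Bb,\Bs)$ for all $i$ and $c^S_j(s_j)/f^S_j(s_j)=-\gamma(\Bb,\Bs)$ for all $j$. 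Since each left-hand side depends only on a single coordinate of the profile and $D=D^B\times D^S$ has full support, varying one coordinate at a time shows $\gamma$ cannot depend on the profile at all and that all of the one-variable functions $b_i\mapsto c^B_i(b_i)/f^B_i(b_i)$ and $s_j\mapsto -c^S_j(s_j)/f^S_j(s_j)$ equal a single constant; call it $\alpha$. Then $c^B_i(b_i)=\alpha f^B_i(b_i)$ and $c^S_j(s_j)=-\alpha f^S_j(s_j)$, which is exactly \eqref{equ:def of alpha}; conversely, if such an $\alpha$ exists then $\Phi$ is a scalar multiple of the balance constraint and vanishes on balanced $p$, so $\L$ is finite. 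This gives the ``iff''. That $\alpha\ge 0$ then follows by summing the buyer identity over $b_i\in T^B_i$: the two double sums cancel all common terms and leave only the edges into the sink, so $\alpha=\sum_{b_i}c^B_i(b_i)=\sum_{b_i\in T^B_i}\lambda^B_i(b_i,\varnothing)\ge 0$ (the symmetric seller computation gives $\alpha=\sum_{s_j\in T^S_j}\lambda^S_j(s_j,\varnothing)$, a pleasant by-product). Finally, the flow reformulation is just a reading of \eqref{equ:def of alpha}: it says that at every type node $b_i$ the incoming flow $\alpha f^B_i(b_i)$ from the source plus $\sum_{b_i'\in T^B_i}\lambda^B_i(b_i',b_i)$ from the other type nodes equals the outgoing flow $\sum_{b_i'\in T^{B+}_i}\lambda^B_i(b_i,b_i')$ to the type nodes and the sink, i.e.\ conservation at every node, with total source flow $\alpha$; symmetrically for the sellers.

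The step I expect to be the main obstacle is turning ``$\L$ finite'' into the pointwise identity cleanly: one must argue that boundedness of $\Phi$ over the \emph{entire} balanced subspace forces, profile by profile, proportionality of the coefficient vector to $(\mathbf{1},-\mathbf{1})$, and then use the product/full-support structure of $D$ to collapse the a priori profile-dependent scalars $\gamma(\Bb,\Bs)$ into one global $\alpha$. Once that is in place, the nonnegativity of $\alpha$ and the equivalence with the flow picture are short and purely bookkeeping.
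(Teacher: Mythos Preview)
Your proposal is correct and follows essentially the same approach as the paper. The paper exhibits the concrete balanced perturbation---increase $p^B_i(\Bb,\Bs)$ and $p^S_j(\Bb,\Bs)$ simultaneously by $\zeta$ at a single profile---to drive $\L$ to $+\infty$ whenever two normalized coefficients differ, whereas you phrase the same step abstractly (a linear functional bounded above on the balanced linear subspace must vanish on it, hence at each profile the coefficient vector is a multiple of $(\mathbf{1},-\mathbf{1})$); the nonnegativity of $\alpha$ via the telescoping sum $\alpha=\sum_{b_i}\lambda^B_i(b_i,\varnothing)$ and the flow reading are identical to the paper's.
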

\begin{proof}
For every $i\in [n], j\in [m]$ denote
\[A_i(b_i)=\frac{1}{f^B_i(b_i)}\left(\sum_{b_i'\in T^{B+}_i}\lambda^B_i(b_i,b_i')-\sum_{b_i'\in T^{B}_i}\lambda^B_i(b_i',b_i)\right)\]
\[B_j(s_j)=\frac{1}{f^S_j(s_j)}\left(\sum_{s_j'\in T^{S+}_j}\lambda^S_j(s_j,s_j')-\sum_{s_j'\in T^{S}_j}\lambda^S_j(s_j',s_j)\right)\]
If there exists $i,j,b_i,s_j$ such that $A_i(b_i)\not=B_j(s_j)$. WLOG, assume $A_i(b_i)>B_j(s_j)$. We choose some arbitrary $b_{-i}, s_{-j}$ and increase $p_i^B(\Bb,\Bs)$ and $p_j^S(\Bb,\Bs)$ simultaneously by $\zeta$, while keeping the payment for other type profiles unchanged.  The payment rule stays balanced and with Equation~\eqref{equ:language function}, $\L(\lambda, x, p)$ will change by
\begin{equation*}
\begin{aligned}
\Delta\L(\lambda, x, p)&=\zeta\cdot f^B_i(b_i)A_i(b_i)\cdot f^B_{-i}(b_{-i})f^S(\Bs)-\zeta\cdot f^S_j(s_j)B_j(s_j)\cdot f^B(\Bb)f^S_{-j}(s_{-j})\\
&=\zeta\cdot f^B(\Bb)f^S(\Bs)(A_i(b_i)-B_j(s_j))>0
\end{aligned}
\end{equation*}
If we let $\zeta\to \infty$, clearly the payment rule is still balanced, but $\L(\lambda,x,p)$ goes to infinity. Hence, $\lambda$ is not useful. Thus for every $i,j$ and $b_i,s_j$, $A_i(b_i)=B_j(s_j)$. There exists an $\alpha$ such that Equation~(\ref{equ:def of alpha}) holds. Moreover, we can view Equation~\eqref{equ:def of alpha} as the flow conservation condition for $b_i$ (or $s_j$), therefore $\lambda_i^B$ (or $\lambda_j^S$) must form a flow.

Next, we argue that $\alpha$ must be nonnegative. Note that $$\sum_{b_i\in T_i^B} \lambda^B_i(b_i,\emptyset) = \sum_{b_i\in T_i^B}\left(\sum_{b_i'\in T_i^{B+}} \lambda^B(b_i,b_i')-\sum_{b_i'\in T_i^{B}} \lambda^B(b'_i,b_i) \right) = \alpha\cdot \sum_{b_i\in T_i^B}f_i^B(b_i).$$ Since $\lambda_i^B(\emptyset, b_i)$ is nonnegative for all type $b_i$, $\alpha$ must also be nonnegative.

{For the other direction, if $A_i(b_i)=B_j(s_j)$ for any $i,j,b_i,s_j$, $L(\lambda,x,p)$ only depends on $x$. Since $x$ is bounded, the maximum of the partial Lagrangian is finite.}
\end{proof}

\vspace{.05in}
According to Lemma~\ref{lem:useful dual}, for useful dual variables, we replace $\sum_{b'_i\in T^{B+}_i}\lambda^B_i(b_i,b_i')$ with $\sum_{b'_i\in T^{B}_i}\lambda^B_i(b_i',b_i)+\alpha\cdot f^B_i(b_i)$ and $\sum_{s'_j\in T^{S+}_j}\lambda^S_j(s_j,s_j')$ with $\sum_{s'_j\in T^{S}_j}\lambda^S_j(s_j',s_j)+\alpha\cdot f^S_j(s_j)$ for every $i, j$ in the partial Lagrangian $\L(\lambda, x, p)$. After simplification, we have
\begin{equation}\label{equ:lagrangian with phi}
\L(\lambda, x, p)=\sum_{\Bb,\Bs}f^B(\Bb)f^S(\Bs)\left(\sum_{i=1}^n x^B_i(\Bb,\Bs)\cdot \Phi^B_i(b_i)-\sum_{j=1}^m x^S_j(\Bb,\Bs)\cdot \Phi^S_j(s_j)\right)
\end{equation}
where $\Phi^B_i(b_i)$ and $\Phi^S_j(s_j)$ are defined in Definition~\ref{def:virtual value}.

\begin{definition}[Virtual Value Function]\label{def:virtual value}
For any $\lambda$ that satisfies 
Equation~\eqref{equ:def of alpha} for all buyer $i$ and her type $b_i$ and seller $j$ and her type $s_j$, we define a corresponding virtual value function $\Phi^B_i(\cdot)$, such that for every type $b_i\in T^B_i$,
\[\Phi^B_i(b_i)=(\alpha+1)\cdot b_i-\frac{1}{f^B_i(b_i)}\sum_{b_i'\in T^B_i}\lambda^B_i(b_i',b_i)\cdot(b_i'-b_i)\]

\noindent Similarly for every seller $j$, define $\Phi^S_j(\cdot)$ such that for every type $s_j\in T^S_j$,
\[\Phi^S_j(s_j)=(\alpha+1)\cdot s_j-\frac{1}{f^S_j(s_j)}\sum_{s_j'\in T^S_j}\lambda^S_j(s_j',s_j)\cdot(s_j'-s_j)\]
\end{definition}

\subsection{The Canonical Flow}~\label{sec:canonical flow}
In this section we describe a flow and using which we derive an upper bound for the GFT. For buyers and sellers, we use different flows. For every $i$, $\lambda^B_i(b_i',b_i)>0$ if and only if $b_i'$ is the predecessor type of $b_i$, i.e., the smallest value in the support set $T^B_i$ that is greater than $b_i$. For every $j$, $\lambda^S_j(s_j',s_j)>0$ if and only if $s_j'$ is the successor type of $s_j$, i.e., the largest value in the support set $T^S_j$ that is smaller than $s_j$. In other words, for buyer, the flow goes from higher types to lower types while for the seller, the flow goes from lower types to higher ones. Our flow $\lambda$ has the following property.

\begin{lemma}\label{lem:flow without ironing}
{For any $\alpha\geq 0$}, there exists a set of dual variables such that:
\begin{itemize}
	\item for every buyer $i$ and every type $b_i$, \[\Phi^B_i(b_i)= b_i+\alpha\cdot \left(b_i-\frac{\sum_{t>b_i}f^B_i(t)\cdot (\hat{b}_i-b_i)}{f^B_i(b_i)}\right),\] where $\hat{b}_i$ is the predecessor type of $b_i$;
	\item for every seller $j$ and every type $s_j$, \[\Phi^S_j(s_j)= s_j+\alpha\cdot \left(s_j-\frac{\sum_{t<s_j}f^S_j(t)\cdot {(\hat{s}_j-{s}_j)}}{f^S_j(s_j)}\right),\] where $\hat{s}_j$ is the successor type of $s_j$.
\end{itemize}
 Let $\varphi_i(b_i):= b_i-\frac{\sum_{t>b_i}f^B_i(t)\cdot (\hat{b}_i-b_i)}{f^B_i(b_i)}$ and $\tau_j(s_j):= s_j-\frac{\sum_{t<s_j}f^S_j(t)\cdot {(\hat{s}_j-{s}_j)}}{f^S_j(s_j)}$.
 $\varphi_i(\cdot)$ is Myerson's virtual value function for buyer $i$. As $\tau_j(\cdot)$ is symmetric to $\varphi_i(\cdot)$, we will refer to it as the Myerson's virtual value function for seller $j$. Furthermore, if $\varphi_i(\cdot)$ (or $\tau_j(\cdot)$) is monotonically non-decreasing, we say $D^B_i$ (or $D^S_j$) is regular.
\end{lemma}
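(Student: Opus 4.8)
The plan is to explicitly construct the canonical flow described informally just before the statement and then compute the resulting virtual value functions from Definition~\ref{def:virtual value}. For each buyer $i$, enumerate the support $T^B_i = \{b^{(1)} < b^{(2)} < \dots < b^{(K)}\}$ and set $\lambda^B_i(b^{(k+1)}, b^{(k)})$ to be the total source flow that must reach $b^{(k)}$ or below, i.e.\ $\lambda^B_i(b^{(k+1)}, b^{(k)}) = \alpha \cdot \sum_{\ell \le k} f^B_i(b^{(\ell)}) = \alpha \cdot F^B_i(b^{(k)})$, with all other $\lambda^B_i(\cdot,\cdot) = 0$ except the flow into the sink $\lambda^B_i(b^{(K)}, \varnothing) = \alpha$. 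First I would verify that this is a valid flow: at node $b^{(k)}$, the incoming flow is $\alpha f^B_i(b^{(k)})$ from the source plus $\alpha F^B_i(b^{(k-1)})$ from $b^{(k+1)}$, and the outgoing flow is $\alpha F^B_i(b^{(k)}) = \alpha F^B_i(b^{(k-1)}) + \alpha f^B_i(b^{(k)})$, so Equation~\eqref{equ:def of alpha} holds with exactly this $\alpha \ge 0$; this is the check that makes the dual variables \emph{useful} via Lemma~\ref{lem:useful dual}. The seller side is symmetric, reversing the order: set $\lambda^S_j(s^{(k-1)}, s^{(k)}) = \alpha \sum_{\ell \ge k} f^S_j(s^{(\ell)})$, so flow runs from low types to high types.

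Next I would plug this flow into the formula $\Phi^B_i(b_i) = (\alpha+1) b_i - \frac{1}{f^B_i(b_i)}\sum_{b_i' \in T^B_i} \lambda^B_i(b_i', b_i)(b_i' - b_i)$. By construction only one term survives, namely $b_i' = \hat{b}_i$, the predecessor (next larger) type, with coefficient $\lambda^B_i(\hat b_i, b_i) = \alpha \sum_{t > b_i} f^B_i(t)$ — wait, that is not quite right; it should be $\alpha F^B_i(b_i)$. I need to recheck the indexing: the edge into $b^{(k)}$ carries $\alpha F^B_i(b^{(k)})$, but the informal description writes $\sum_{t > b_i} f^B_i(t)$ in the numerator, so the correct assignment is $\lambda^B_i(b^{(k+1)}, b^{(k)}) = \alpha \sum_{\ell \ge k+1} f^B_i(b^{(\ell)}) = \alpha \sum_{t > b^{(k)}} f^B_i(t)$, i.e.\ the flow on that edge equals $\alpha$ times the measure of types strictly above the tail point — consistent with flow conservation once one checks node $b^{(K)}$ sends $\alpha$ to the sink. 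With this corrected assignment, $\Phi^B_i(b_i) = (\alpha+1)b_i - \frac{1}{f^B_i(b_i)} \cdot \alpha\big(\sum_{t>b_i} f^B_i(t)\big)(\hat b_i - b_i) = b_i + \alpha\big(b_i - \frac{\sum_{t>b_i} f^B_i(t)(\hat b_i - b_i)}{f^B_i(b_i)}\big)$, which is exactly the claimed expression. The seller computation is identical with the roles of "above" and "below" swapped and $\hat s_j$ the successor (next smaller) type.

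Finally I would note that $\varphi_i$ is indeed the standard discrete analogue of Myerson's virtual value: telescoping $\sum_{t > b_i} f^B_i(t)(\hat b_i - b_i)$ recovers, in the continuous limit with $\hat b_i - b_i \to 0$, the familiar $\frac{1 - F^B_i(b_i)}{f^B_i(b_i)}$, so $\varphi_i(b_i) \to b_i - \frac{1-F^B_i(b_i)}{f^B_i(b_i)}$; and the seller version similarly limits to $s_j + \frac{F^S_j(s_j)}{f^S_j(s_j)}$, matching the footnote definition. This identification and the regularity terminology are definitional and need no further argument. The only real obstacle is bookkeeping: getting the direction of the flow and the precise index ranges in the partial sums right so that flow conservation holds \emph{with the same constant $\alpha$ at every node and on both sides}, since an off-by-one in the tail sum breaks conservation at the endpoint types; once the endpoint/sink edges are pinned down the rest is a one-line substitution.
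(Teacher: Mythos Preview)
Your approach is essentially the paper's: construct the canonical path flow explicitly and plug into Definition~\ref{def:virtual value}. The paper's proof is even terser --- it simply states that the only incoming edge at $b_i$ (besides the source) is from $\hat b_i$ with weight $\alpha\sum_{b_i'>b_i} f^B_i(b_i')$ and substitutes --- so your corrected edge weights $\lambda^B_i(b^{(k+1)},b^{(k)})=\alpha\sum_{t>b^{(k)}}f^B_i(t)$ and the ensuing virtual-value computation match exactly.

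There is one genuine bookkeeping slip, however, of precisely the kind you flagged as the only obstacle: the sink edge is at the \emph{lowest} type, not the highest. With the corrected weights, at $b^{(K)}$ you have inflow $\alpha f^B_i(b^{(K)})$ from the source and outflow $\alpha\sum_{t>b^{(K-1)}}f^B_i(t)=\alpha f^B_i(b^{(K)})$ to $b^{(K-1)}$, so $\lambda^B_i(b^{(K)},\varnothing)=0$. It is $b^{(1)}$ that absorbs total inflow $\alpha f^B_i(b^{(1)})+\alpha\sum_{t>b^{(1)}}f^B_i(t)=\alpha$ and must dump it to the sink. This does not affect the virtual-value formula (the sink does not appear in the sum over $T^B_i$), but it does mean your flow as written is not conservative. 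Symmetrically, on the seller side the sink edge is at the \emph{highest} type $s^{(K')}$.
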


\begin{proof}
For each buyer $i$ and her type $b_i$, $b_i$ only gets flow from $\hat{b}_i$, which is originally the flow from source to a type $b_i'>b_i$. The total amount of flow goes into $b_i$ is $\alpha\cdot \sum_{b_i'>b_i}f_i^B(b_i')$. By Definition~\ref{def:virtual value},
\[\Phi^B_i(b_i)=(\alpha+1)\cdot b_i-\frac{\alpha\cdot \sum_{b_i'>b_i}f_i^B(b_i')\cdot (\hat{b}_i-b_i)}{f_i^B(b_i)}=b_i+\alpha\cdot \varphi_i(b_i)\]
Similarly, for each seller $j$ and her type $s_j$, $s_j$ only gets flow from $\hat{s}_j$, and total amount of flow is $\alpha\cdot \sum_{s_j'<s_j}f_j^S(s_j')$.
\[\Phi^S_j(s_j)=(\alpha+1)\cdot s_j-\frac{\alpha\cdot \sum_{s_j'<s_j}f_j^S(s_j')\cdot (\hat{s}_j-s_j)}{f_j^S(s_j)}=s_j+\alpha\cdot \tau_j(s_j)\]
\end{proof}

When the distributions are irregular, we can iron the virtual value function with the method shown in~\cite{CaiDW16} by adding loops to the flow. {See Appendix~\ref{appx:ironing} for more details about the ironing process and the proof of Lemma~\ref{lem:canonial flow}.}

\begin{lemma}\label{lem:canonial flow}
For any $\alpha\geq 0$, there exists a set of dual variables $\lambda$ such that
\[\Phi^B_i(b_i)= b_i+\alpha\cdot \tilde{\varphi}_i(b_i)\]
\[\Phi^S_j(s_j)= s_j+\alpha\cdot \tilde{\tau}_j(s_j)\]
\end{lemma}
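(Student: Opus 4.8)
\textbf{Proof proposal for Lemma~\ref{lem:canonial flow}.}

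The plan is to build on the canonical flow of Lemma~\ref{lem:flow without ironing}, which already gives dual variables realizing $\Phi^B_i(b_i)=b_i+\alpha\cdot\varphi_i(b_i)$ and $\Phi^S_j(s_j)=s_j+\alpha\cdot\tau_j(s_j)$, and then modify the flow locally on each agent's graph to replace $\varphi_i$ (resp.\ $\tau_j$) by its ironed counterpart $\tilde\varphi_i$ (resp.\ $\tilde\tau_j$). The key observation is that the agents' flow graphs are entirely decoupled in the definition of $\Phi^B_i$ and $\Phi^S_j$, so it suffices to carry out the ironing construction for a single buyer $i$ (the seller case being symmetric after reversing the order of the type space). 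First I would recall from Lemma~\ref{lem:flow without ironing} that on buyer $i$'s graph the canonical flow sends $\alpha\cdot f^B_i(t)$ out of the source into each type node $t$ and routes it monotonically downward through consecutive types to the sink; in this flow the net inflow to $b_i$ from strictly larger types is $\alpha\sum_{t>b_i}f^B_i(t)$. Since the only property of the flow that matters for $\Phi^B_i(b_i)$ is the quantity $\frac{1}{f^B_i(b_i)}\sum_{b_i'\in T^B_i}\lambda^B_i(b_i',b_i)\cdot(b_i'-b_i)$, any reweighting of how flow is passed along the chain of types — in particular the insertion of circulations (loops) — changes $\Phi^B_i$ in a controlled way.

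Next I would invoke the ironing procedure of~\cite{CaiDW16}: given any maximal interval of types $[b_i^{(\ell)},b_i^{(r)}]$ on which Myerson's virtual value $\varphi_i$ fails to be monotone, one adds a loop of flow that cyclically redistributes mass among the type nodes in that interval, with the loop weights chosen exactly so that the resulting virtual value is constant on the interval and equal to the average of $\varphi_i$ (weighted by $f^B_i$) over that interval — this is precisely the definition of the ironed virtual value $\tilde\varphi_i$. Concretely, one would verify that after adding such loops (i) flow conservation is preserved at every node, since loops are circulations and hence do not change the net divergence; (ii) the $\alpha=0$ part of $\Phi^B_i$ is untouched, so $\Phi^B_i$ still has the form $b_i+\alpha\cdot(\text{something})$; and (iii) the ``something'' is exactly $\tilde\varphi_i(b_i)$ by the standard computation showing loop-ironing averages the virtual value over ironed intervals. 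Since the added loops have nonnegative weight and the base flow from Lemma~\ref{lem:flow without ironing} already has all nonnegative $\lambda^B_i(b_i',b_i)$ (each type only receives flow from its predecessor), the resulting dual variables remain feasible, i.e.\ nonnegative. The same argument verbatim, with the chain of types traversed in increasing rather than decreasing order, handles the seller side and yields $\Phi^S_j(s_j)=s_j+\alpha\cdot\tilde\tau_j(s_j)$. Finally, since this construction is performed independently for each agent and all agents share the same value of $\alpha$ (the one from Lemma~\ref{lem:useful dual}), the collection of all these flows forms a single set of useful dual variables, which is what the lemma asserts.

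The main obstacle — and the reason this is deferred to Appendix~\ref{appx:ironing} — is the bookkeeping in step (iii): one must set up the loop weights so that the telescoping sum $\sum_{b_i'}\lambda^B_i(b_i',b_i)(b_i'-b_i)$ collapses to give exactly the ironed virtual value, handling nested and adjacent ironed intervals correctly and checking that the constant value on each ironed interval matches the $f^B_i$-weighted average of $\varphi_i$ there. This is essentially the same calculation as in~\cite{CaiDW16} for one-sided markets, so I would present it by reducing to their lemma after noting that $\Phi^B_i$ depends on the flow in precisely the same functional form as the virtual surplus term there; the only genuinely new point is the seller-side symmetry, which requires checking that reversing the type order turns Myerson ironing ``from above'' into the correct ironing ``from below'' for $\tau_j$, and that the sign conventions in Definition~\ref{def:virtual value} are consistent with a flow that runs from low types to high types.
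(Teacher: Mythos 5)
Your proposal is correct and takes essentially the same route as the paper's proof in Appendix~\ref{appx:ironing}: start from the unironed canonical flow of Lemma~\ref{lem:flow without ironing}, add per-agent loops (circulations) to realize the ironed virtual values $\tilde\varphi_i$ and $\tilde\tau_j$, note that circulations preserve flow conservation and nonnegativity, and handle the seller side by the symmetric construction with types traversed in the opposite order. The paper carries out the bookkeeping you defer — iteratively adding a pairwise loop of weight $w$ between any out-of-order pair $(b_i,b_i')$ so that $\Phi_i^B(b_i)-b_i$ decreases by $w(b_i'-b_i)/f_i^B(b_i)$ and $\Phi_i^B(b_i')-b_i'$ increases by $w(b_i'-b_i)/f_i^B(b_i')$ until equality holds, and giving the explicit convex-hull definition of $\tilde\tau_j$ on the seller side — but this is exactly the computation you flag as the remaining work.
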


\hspace{0.1in}

Now we are ready to prove our main theorem of this section.

\begin{prevproof}{Theorem}{cor:duality upper bound}
It follows directly from Equation~\eqref{equ:lagrangian with phi} and Lemma~\ref{lem:canonial flow}.
\end{prevproof}


\section{Single Buyer, Single Seller Bilateral Trading}\label{sec:single}
To warm up, we first study the classical bilateral trading setting, when there is only one buyer and one seller. In this section, we show two simple IR, BIC, SBB mechanisms and prove that the better one obtains at least half of the optimal GFT.

\begin{enumerate}
\item \textbf{Seller-Offering Mechanism(SOM)}: The seller posts a take-it or leave-it price $q_B(s)$ for the item to the buyer. The item price depends on her true type $s$ and the buyer's value distribution. The buyer has to pay $q_B(s)$ to the seller if she chooses to take the item.
\item \textbf{Buyer-Offering Mechanism(BOM)}: The buyer posts a take-it or leave-it price $q_S(b)$ for the item to the seller. The item price depends on her true type $b$ and the seller's value distribution. The seller can get $q_S(b)$ from the buyer if she chooses to sell the item.
\end{enumerate}

Both mechanisms are clearly IR and SBB. For SOM, the seller has the information of her true type $s$ and the buyer's type distribution $D^B$. She will choose $q_B(s)$ to maximize her expected utility and thus the mechanism is BIC for the seller. Then the buyer sees the posted price and will buy if and only if her value $b$ is greater than $q_B(s)$, no matter what $s$ is. Hence the mechanism is DSIC for the buyer. Similarly, BOM is BIC for the buyer and DSIC for the seller. Let $\som$ and $\bom$ be the expected GFT for SOM and BOM. In the following Lemma, we characterize the value of $\som$ and $\bom$.

\begin{lemma}\label{lem:single_ppm}
\[\som=\mathbb{E}_{b,s}[(b-s)\cdot \ind[\tilde{\varphi}(b)> s]]\]
\[\bom=\mathbb{E}_{b,s}[(b-s)\cdot \ind[b>\tilde{\tau}(s)]]\]
\end{lemma}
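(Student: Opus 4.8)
The plan is to compute the optimal posted price for the seller in SOM directly, using the standard revenue-maximization argument from Myerson's theory, and then observe that the resulting trade indicator is exactly $\ind[\tilde\varphi(b) > s]$; the BOM case follows by the symmetry between buyers and sellers already established in the duality section.

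First I would fix the seller's true value $s$ and write her expected utility from posting price $q$ to the buyer. If the buyer's value is $b \sim D^B$, trade happens iff $b \ge q$, in which case the seller receives $q$ and gives up an item worth $s$; otherwise nothing happens. So the seller's expected utility is $(q - s)\cdot \Pr_{b}[b \ge q]$. The seller picks $q_B(s) \in \argmax_q (q-s)\cdot \Pr_b[b\ge q]$. The key step is to recognize this as a virtual-value maximization: by the standard Myerson argument (in the discrete form matching Lemma~\ref{lem:flow without ironing}), $q\cdot \Pr_b[b \ge q]$ restricted to support points equals $\E_b[\tilde\varphi(b)\cdot \ind[b \ge q]]$ after ironing, so $(q-s)\cdot\Pr_b[b\ge q] = \E_b[(\tilde\varphi(b) - s)\cdot \ind[b\ge q]]$. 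Maximizing over threshold prices $q$, the optimal choice includes a type $b$ in the winning set exactly when $\tilde\varphi(b) - s > 0$ (with the appropriate tie-breaking so that the set $\{b : \tilde\varphi(b) > s\}$ is an up-set, which holds because $\tilde\varphi$ is monotone non-decreasing after ironing). Hence under the optimal $q_B(s)$, trade occurs iff $\tilde\varphi(b) > s$, and the realized gains from trade is $(b-s)\cdot\ind[\tilde\varphi(b)>s]$. Taking expectation over both $b$ and $s$ gives the claimed formula for $\som$.

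For $\bom$, I would invoke the symmetry: in BOM the buyer with true value $b$ posts price $q$ to the seller, trades iff $s \le q$, and gets utility $(b - q)\cdot \Pr_s[s \le q]$, which by the seller-side virtual value $\tilde\tau$ (defined symmetrically in Lemma~\ref{lem:flow without ironing} and Appendix~\ref{appx:ironing}) equals $\E_s[(b - \tilde\tau(s))\cdot \ind[s \le q]]$. The optimal threshold makes the buyer trade with exactly the seller types satisfying $b > \tilde\tau(s)$, again using monotonicity of $\tilde\tau$, giving $\bom = \E_{b,s}[(b-s)\cdot\ind[b > \tilde\tau(s)]]$.

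The main obstacle is handling the discrete/ironing subtleties carefully: one must verify that the equality $q\cdot\Pr_b[b\ge q] = \E_b[\tilde\varphi(b)\cdot\ind[b\ge q]]$ holds at support points after ironing (this is exactly the content of the flow construction in Lemma~\ref{lem:flow without ironing} and the ironing procedure in Appendix~\ref{appx:ironing}), and that ties are broken consistently so the winning set is precisely $\{b:\tilde\varphi(b)>s\}$ rather than $\{b:\tilde\varphi(b)\ge s\}$ — a measure-zero issue for continuous distributions but one that should be stated cleanly. The rest is a routine unrolling of definitions.
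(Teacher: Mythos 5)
Your approach is essentially the paper's: write the seller's utility as $(q-s)\Pr_b[b\ge q]$, convert to virtual-value form, and read off the trade indicator. However, the central identity you assert is too strong, and this is exactly where the real work lies.

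You claim that $q\cdot\Pr_b[b\ge q]=\E_b[\tilde\varphi(b)\cdot\ind[b\ge q]]$ ``holds at support points after ironing,'' attributing this to Lemma~\ref{lem:flow without ironing} and the ironing procedure. This is false for irregular distributions: the correct statement (Lemma~\ref{lem:myerson-phi} in the paper) is the one-sided inequality
\[
\sum_{r\ge k}\tilde{\varphi}(\beta_r)f^B(\beta_r)\;\geq\;\beta_k\sum_{r\ge k}f^B(\beta_r),
\]
with equality holding only when $\beta_k$ is \emph{not} in the interior of an ironed interval (geometrically: the ironed revenue curve lies weakly above the true revenue curve, and they meet at non-interior quantiles). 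If you take your equality at face value, you are implicitly asserting that ironing never changes the revenue curve, which contradicts the point of ironing. The paper's argument bridges the gap in two moves that your writeup elides: (i) maximize the virtual surrogate $\sum_{r\ge k}(\tilde\varphi(\beta_r)-s)f^B(\beta_r)$ over $k$, obtaining $k^*=\min\{k:\tilde\varphi(\beta_k)>s\}$, and (ii) observe that $\tilde\varphi(\beta_{k^*-1})\le s<\tilde\varphi(\beta_{k^*})$ forces $\beta_{k^*}$ to be a non-interior point, so equality \emph{does} hold there and the virtual optimizer coincides with the true optimizer. Without step (ii) you have only an upper bound on the seller's utility, not the claimed characterization of the optimal posted price. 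The same remark applies to the BOM side with Lemma~\ref{lem:myerson-tau}. So the route is right, but the proposal would not survive the irregular case as written; you need the one-sided inequality plus the non-interiority of the optimizer, not a blanket equality.
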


\begin{proof}
 If the seller uses $q$ as the posted price in the SOM, the seller's expected utility is $(q-s)\cdot \Pr_{b\sim D^B}[b\geq q]$\footnote{We assume here the buyer will buy the item with zero utility. It makes no difference from buying the item with strict positive utility since the seller can decrease the posted price for an arbitrarily small constant and the seller's utility will be changed as small as possible.}. And $q_S(s)=\argmax_q \left((q-s)\cdot \Pr_{b\sim D^B}[b\geq q]\right)$. Let $T^B=\{\beta_1,...,\beta_{K}\}$ where $\beta_1<\beta_2<...<\beta_{K}$. Clearly the optimal item price $q_B(s)$ must take one of the values from support set $T^B$, i.e., $q_S(s)=\beta_k$ where $k$ maximizes $(\beta_k-s)\cdot \sum_{r=k}^{K} f^B(\beta_r)$.
{The following Lemma characterizes the seller's gains, in function of the buyer's ironed virtual value.} The proof is postponed to the Appendix~\ref{sec:proof single}.

\begin{lemma}\label{lem:myerson-phi}
Let $T^B=\{\beta_1,...,\beta_{K}\}$ where $\beta_1<\beta_2<...<\beta_{K}$. For any $k\in [K]$,
\[\sum_{r=k}^{K}\tilde{\varphi}(\beta_r)\cdot f^B(\beta_r)\geq\beta_k\cdot \sum_{r=k}^{K} f^B(\beta_r).\]
The equality holds if type $\beta_k$ is not in the interior of any ironed interval.
\end{lemma}

According to Lemma~\ref{lem:myerson-phi}, $(\beta_k-s)\cdot \sum_{r=k}^{K} f^B(\beta_r)\leq\sum_{r=k}^{K} (\tilde{\varphi}(\beta_r)-s)\cdot f^B(\beta_r)$ for all $k\in [K]$.  {Moreover, equality holds for the optimal $k$. That is,} to maximize $\sum_{r=k}^{K} (\tilde{\varphi}(\beta_r)-s)\cdot f^B(\beta_r)$, since $\tilde{\varphi}(\beta_r)$ is non-decreasing in $r$, the optimal $k^{*}$ is the smallest $k$ such that $\tilde{\varphi}(\beta_k)>s$. Clearly, $\beta_{k^{*}}$ is not in the interior of any ironed interval because $\tilde{\varphi}(\beta_{k^*-1})\leq s < \tilde{\varphi}(\beta_{k^*})$.
Thus, $(\beta_{k^{*}}-s)\cdot \sum_{r=k}^{K} f^B(\beta_r)=\sum_{r=k^{*}}^{K} (\tilde{\varphi}(\beta_r)-s)\cdot f^B(\beta_r)$ and the seller should post $\beta_{k^{*}}$ as the price. In SOM, the trade will be made if and only if $b\geq \beta_{k^{*}}$, which is equivalent to $\tilde{\varphi}(b)>s$. Hence, $\som=\mathbb{E}_{b,s}[(b-s)\cdot \ind[\tilde{\varphi}(b)>s]]$.

Similarly, if the buyer chooses $q$ as the posted price in BOM, the buyer's expected utility is $(b-q)\cdot \Pr_{s\sim D^S}[s\leq q]$. Let $T^S=\{\alpha_1,...,\alpha_{K'}\}$ where $\alpha_1<\alpha_2<...<\alpha_{K'}$. The optimal item price $q_S(b)$ must take one value from support set $T^S$, i.e., $q_B(b)=\alpha_k$ where $k$ maximizes $(b-\alpha_k)\cdot \sum_{r=1}^{k} f^S(\alpha_r)$. {The following Lemma characterizes the buyer's cost, in function of the seller's ironed virtual value.} The proof is postponed to the Appendix~\ref{sec:proof single}.

\begin{lemma}\label{lem:myerson-tau}
Let $T^S=\{\alpha_1,...,\alpha_{K'}\}$ where $\alpha_1<\alpha_2<...<\alpha_{K'}$. For any $k\in [K']$,
\[\sum_{r=1}^{k}\tilde{\tau}(\alpha_r)\cdot f^S(\alpha_r)\leq\alpha_k\cdot \sum_{r=1}^{k} f^S(\alpha_r)\]
The equality holds if type $\alpha_k$ is not in the interior of an ironed interval.
\end{lemma}

We use an argument similar to the one for SOM.
{From Lemma~\ref{lem:myerson-tau} we know that $(b-\alpha_k)\cdot \sum_{r=1}^{k} f^S(\alpha_r)\leq\sum_{r=1}^{k} (b-\tilde{\tau}(\alpha_r))\cdot f^S(\alpha_r)$. Since $\tilde{\tau}(\alpha_r)$ is non-decreasing on $r$, the optimal $k^{\dagger}$ that maximizes $\sum_{r=1}^{k} (b-\tilde{\tau}(\alpha_r))\cdot f^S(\alpha_r)$}
is the largest $k$ such that $\tilde{\tau}(\alpha_k)<b$. For the same reason, $\alpha_{k^{\dagger}}$ cannot lie in the interior of any ironed interval, hence equality holds and the best posted price for the buyer is $\alpha_{k^{\dagger}}$. In BOM, the trade will be made if and only if $s\leq \alpha_{k^{\dagger}}$, which is equivalent to $b>\tilde{\tau}(s)$. Hence, $\bom=\mathbb{E}_{b,s}[(b-s)\cdot \ind[b>\tilde{\tau}(s)]]$.
\end{proof}

\vspace{.1in}
With Lemma~\ref{lem:single_ppm}, we are ready to show that the better of SOM and BOM has GFT at least $\frac{1}{2}\opt$.

\begin{theorem}\label{thm:single}
$$\som+\bom\geq \opt$$
\end{theorem}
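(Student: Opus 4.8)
The plan is to combine the two characterizations from Lemma~\ref{lem:single_ppm} with the duality upper bound of Theorem~\ref{cor:duality upper bound}, instantiated at the natural choice $\alpha=1$. By Theorem~\ref{cor:duality upper bound} with $n=m=1$ and $\alpha=1$, we have
\[\opt\leq \max_{x\in P(\mathcal{F})}\mathbb{E}_{b,s}\Big[x^B(b,s)\cdot(b+\tilde{\varphi}(b))-x^S(b,s)\cdot(s+\tilde{\tau}(s))\Big].\]
In bilateral trading the feasibility constraint only allows the single pair to trade or not, so $x^B(b,s)=x^S(b,s)=:x(b,s)\in[0,1]$, and the pointwise optimum sets $x(b,s)=1$ exactly when $(b+\tilde{\varphi}(b))-(s+\tilde{\tau}(s))>0$ and $x(b,s)=0$ otherwise. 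Hence
\[\opt\leq \mathbb{E}_{b,s}\Big[(b-s)\cdot\ind[\,b+\tilde{\varphi}(b)>s+\tilde{\tau}(s)\,]\Big].\]

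The key step is then a pointwise comparison. I would argue that for every pair $(b,s)$,
\[(b-s)\cdot\ind[\,b+\tilde{\varphi}(b)>s+\tilde{\tau}(s)\,]\;\le\;(b-s)\cdot\ind[\tilde{\varphi}(b)>s]\;+\;(b-s)\cdot\ind[b>\tilde{\tau}(s)],\]
which, after taking expectations and applying Lemma~\ref{lem:single_ppm}, immediately yields $\opt\le\som+\bom$. To prove the pointwise inequality, note the event $\{b+\tilde{\varphi}(b)>s+\tilde{\tau}(s)\}$ is equivalent to $\{(b-\tilde{\tau}(s)) + (\tilde{\varphi}(b)-s)>0\}$, i.e. at least one of $b-\tilde\tau(s)$ and $\tilde\varphi(b)-s$ is strictly positive. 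The two cases to handle are $b\ge s$ and $b<s$. When $b\ge s$, the left side is at most $(b-s)$, and on the event in question at least one of the two indicators on the right is $1$ while the other term is nonnegative (since $b-s\ge 0$), so the inequality holds. When $b<s$, the left side is at most $0$; here I would check that whenever $b+\tilde\varphi(b)>s+\tilde\tau(s)$ holds, at least one of the right-hand indicators is also $1$, so the corresponding term contributes the negative quantity $(b-s)$ — but actually we want the right side to be $\ge$ a nonpositive number, which is automatic once both right-hand terms are $\le 0$ individually when their indicators fire (they fire only with coefficient $(b-s)<0$). The cleanest formulation: since $\tilde\varphi(b)\le b$ and $\tilde\tau(s)\ge s$ (monotonicity/definition of the ironed virtual values — each is a "shrink toward the boundary"), the event $\{\tilde\varphi(b)>s\}$ implies $b>s$ implies $b>\tilde\tau(s)$, and symmetrically $\{b>\tilde\tau(s)\}$ implies... wait, this needs care; the honest route is the two-case split above, using only $\tilde\varphi(b)\le b$ and $\tilde\tau(s)\ge s$.

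The main obstacle is making the pointwise inequality airtight, in particular the sign bookkeeping when $b<s$ and verifying the elementary facts $\tilde\varphi(b)\le b$ and $\tilde\tau(s)\ge s$ for the ironed virtual values (these follow from Lemma~\ref{lem:myerson-phi} and Lemma~\ref{lem:myerson-tau} with $k=K$ and $k=1$ respectively, or directly from the ironing construction in Appendix~\ref{appx:ironing}). Given those, the case $b\ge s$: RHS $\ge (b-s)\cdot\max\{\ind[\tilde\varphi(b)>s],\ind[b>\tilde\tau(s)]\}\ge (b-s)\cdot\ind[b+\tilde\varphi(b)>s+\tilde\tau(s)]=$ LHS, where the last inequality uses that $b+\tilde\varphi(b)>s+\tilde\tau(s)$ forces $\tilde\varphi(b)>s$ or $b>\tilde\tau(s)$ (if both failed, $\tilde\varphi(b)\le s$ and $\tilde\tau(s)\ge b$ would give $b+\tilde\varphi(b)\le b+s\le s+\tilde\tau(s)$, wait that's not quite it either — use $\tilde\varphi(b)\le s$ and $b\le\tilde\tau(s)$ so $b+\tilde\varphi(b)\le \tilde\tau(s)+s$, contradiction). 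Case $b<s$: LHS $\le 0\le$ RHS is false in general since RHS terms are negative; instead LHS $=0$ here because $\tilde\varphi(b)\le b<s\le\tilde\tau(s)$ means $\tilde\varphi(b)<\tilde\tau(s)$, i.e. $b+\tilde\varphi(b)<s+\tilde\tau(s)$, wait $b<s$ and $\tilde\varphi(b)<\tilde\tau(s)$ gives $b+\tilde\varphi(b)<s+\tilde\tau(s)$, so the LHS indicator is $0$ and LHS $=0$; meanwhile on the RHS both indicators are also $0$ (since $\tilde\varphi(b)\le b<s$ and $b<s\le\tilde\tau(s)$), so RHS $=0$ too. Thus the inequality holds with equality $0=0$ when $b<s$, and the displayed summation inequality is established, completing the proof that $\som+\bom\ge\opt$.
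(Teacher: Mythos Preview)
There is a genuine gap at the ``Hence'' step. Maximizing the duality bound over $x\in[0,1]$ gives
\[
\opt\;\le\;\mathbb{E}_{b,s}\Bigl[\bigl(b+\tilde\varphi(b)-s-\tilde\tau(s)\bigr)\cdot\ind\bigl[b+\tilde\varphi(b)>s+\tilde\tau(s)\bigr]\Bigr],
\]
\emph{not} $\mathbb{E}_{b,s}\bigl[(b-s)\cdot\ind[\cdots]\bigr]$ as you wrote: the objective coefficient in the Lagrangian is $(b+\tilde\varphi(b))-(s+\tilde\tau(s))$, and this is what the indicator multiplies after the pointwise optimization. Your displayed inequality is in fact false in general. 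For instance, take $b$ uniform on $[0,1]$ and $s$ (nearly) deterministic at $0$; then $\tilde\varphi(b)=2b-1$, the indicator becomes $\ind[b>1/3]$, and your claimed upper bound evaluates to $\int_{1/3}^1 b\,db=4/9$, whereas $\opt\approx \mathbb{E}[b]=1/2$ since the seller has essentially no private information and full efficiency is attainable.

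The good news is that your proof is easily repaired: simply keep the correct factor $(b+\tilde\varphi(b)-s-\tilde\tau(s))$ on the left side of the pointwise inequality. Using only $\tilde\varphi(b)\le b$ and $\tilde\tau(s)\ge s$, your own case split then goes through verbatim: when $b<s$ both sides are $0$ (both indicators on the right vanish, and $(\tilde\varphi(b)-s)+(b-\tilde\tau(s))<0$ kills the left indicator); when $b\ge s$ and the left indicator fires, at least one of $\tilde\varphi(b)>s$, $b>\tilde\tau(s)$ holds, and in each of the three sub-cases one checks $(\tilde\varphi(b)-s)+(b-\tilde\tau(s))\le (b-s)\bigl(\ind[\tilde\varphi(b)>s]+\ind[b>\tilde\tau(s)]\bigr)$ directly from $\tilde\varphi(b)\le b$, $\tilde\tau(s)\ge s$. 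Taking expectations and invoking Lemma~\ref{lem:single_ppm} finishes the argument. This corrected pointwise route is slightly different from the paper's published proof, which instead relaxes to $\mathbb{E}\bigl[(\tilde\varphi(b)-s)^+\bigr]+\mathbb{E}\bigl[(b-\tilde\tau(s))^+\bigr]$ and then uses Lemma~\ref{lem:myerson-phi} and Lemma~\ref{lem:myerson-tau} to pass from virtual values back to values in expectation; your approach is a touch more elementary once the slip is fixed.
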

\begin{proof}

By Theorem~\ref{cor:duality upper bound},
\[\opt\leq \max_{x\in P(\mathcal{F})}\sum_b\sum_s f^B(b)f^S(s)\left(x^B(b,s)\cdot(b+\tilde{\varphi}(b))-x^S(b,s)\cdot(s+\tilde{\tau}(s))\right)\]

Notice that $x^B(b,s)=x^S(b,s)$ for all $b,s$, the maximum value is achieved when $x^B(b,s)=1$ if $b+\tilde{\varphi}(b)-s-\tilde{\tau}(s)>0$ and $x^B(b,s)=0$ otherwise. In other words,
\begin{equation}\label{equ:upper bound of gft}
\opt\leq \sum_b\sum_s f^B(b)f^S(s)\cdot \left(b+\tilde{\varphi}(b)-s-\tilde{\tau}(s)\right)\cdot \ind[b+\tilde{\varphi}(b)-s-\tilde{\tau}(s)> 0]
\end{equation}

We further relax the RHS of Inequality~(\ref{equ:upper bound of gft}),
\begin{equation}\label{equ:gft-separate}
\begin{aligned}
\opt&\leq \mathbb{E}_{b,s}\left[\left((\tilde{\varphi}(b)-s)+(b-\tilde{\tau}(s))\right)\cdot \ind[b+\tilde{\varphi}(b)-s-\tilde{\tau}(s)> 0]\right]\\
&\leq \mathbb{E}_{b,s}\left[(\tilde{\varphi}(b)-s)\cdot \ind[\tilde{\varphi}(b)-s>0]\right]+\mathbb{E}_{b,s}\left[(b-\tilde{\tau}(s))\cdot \ind[b-\tilde{\tau}(s)>0]\right]
\end{aligned}
\end{equation}

Notice that for any $s$, let $k$ be the smallest value such that $\tilde{\varphi}(\beta_k)-s>0$.
\begin{equation}
\begin{aligned}
\mathbb{E}_{b}[(\tilde{\varphi}(b)-s)\cdot \ind[\tilde{\varphi}(b)-s>0]]
&=\sum_{r=k}^K\tilde{\varphi}(\beta_r)\cdot f^B(\beta_r)-s\cdot \Pr_{b}[\tilde{\varphi}(b)-s>0]\\
&=\beta_k\cdot \sum_{r=k}^K f^B(\beta_r)-s\cdot \Pr_{b}[\tilde{\varphi}(b)-s>0]\\
&\leq \sum_{r=k}^K \beta_r\cdot f^B(\beta_r)-s\cdot \Pr_{b}[\tilde{\varphi}(b)-s>0]\\
&=\mathbb{E}_{b}[(b-s)\cdot \ind[\tilde{\varphi}(b)-s>0]]
\end{aligned}
\end{equation}
where the second equation is because $\beta_k$ must not lie in the interior of any ironed interval, so Lemma~\ref{lem:myerson-phi} implies that the two quantities are equal. If we take expectation over $s$, $\mathbb{E}_{b,s}\left[(\tilde{\varphi}(b)-s)\cdot \ind[\tilde{\varphi}(b)-s>0]\right]\leq \som$ according to Lemma~\ref{lem:single_ppm}. Similarly, we have $\mathbb{E}_{b,s}\left[(b-\tilde{\tau}(s))\cdot \ind[b-\tilde{\tau}(s)>0]\right]\leq \bom$. Our claim follows from combining these two inequalities with Inequality~\eqref{equ:gft-separate}.
\end{proof}

\section{Double Auctions}\label{sec:single dimensional}
In Section~\ref{sec:single}, we proposed two simple mechanisms approximating the optimal GFT for the bilateral trading setting. Both of the mechanisms are IR, BIC and SBB. However in double auctions, such a mechanism appears to be hard to design directly. One significant barrier is to find a payment rule that simultaneously satisfies all three conditions mentioned above. Indeed, given an allocation rule, even monotone, such a payment rule is not guaranteed to exist. Myerson and Satterthwaite~\cite{MyersonS83} characterized the monotone allocation rules for which such payment rules exist in the bilateral trading setting. They also provided an explicit description of the payment. In Section~\ref{sec:characterization}, we provide a similar characterization for the double auction setting (see Theorem~\ref{thm:characterization}). However, even knowing there exists a payment rule that makes the mechanism IR, BIC and SBB, it is still not easy to explicitly describe these payments.

We circumvent this difficulty by first considering a weaker BB constraint -- ex-ante WBB, then apply Theorem~\ref{thm:ex-ante WBB to SBB} to transform the designed mechanism into an SBB one. In this section, we propose a simple, IR, DSIC and ex-ante WBB mechanism whose GFT is at least $\frac{\opt}{2}$.

\begin{theorem}\label{thm:ewbb mechanism}
In any double auction, there exists an IR, DSIC and ex-ante WBB mechanism whose GFT is at least $\frac{\opt}{2}$.
\end{theorem}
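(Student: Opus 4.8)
The plan is to take as the mechanism whichever of GSOM and GBOM produces the larger expected GFT, and to establish two things: that both GSOM and GBOM are IR, DSIC and ex-ante WBB, and that $\gsom+\gbom\geq\opt$. Since one of the two then has GFT at least $\opt/2$, the theorem follows. Incentive compatibility and individual rationality are the easy part: GSOM's allocation is monotone (buyer $i$'s interim probability is non-decreasing in $b_i$ because her virtual value $\tilde\varphi_i$ is non-decreasing and GSOM's matching is an affine maximizer over the virtual values $\tilde\varphi_i(b_i)$ and costs $s_j$, and symmetrically $x^S_j$ is non-increasing in $s_j$), so the Myerson threshold payments render it DSIC and IR; GBOM is handled identically with $\tilde\tau_j$ in place of $\tilde\varphi_i$.

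For ex-ante WBB of GSOM I would argue through Myerson's payment identity. The expected total payment collected from the buyers equals the expected virtual welfare $\mathbb{E}[\sum_i\varphi_i(b_i)x^B_i]$, and since in GSOM buyer $i$'s allocation depends on $b_i$ only through $\tilde\varphi_i(b_i)$, it is constant on every ironed interval, so this equals $\mathbb{E}[\sum_i\tilde\varphi_i(b_i)x^B_i]=\mathbb{E}[\sum_{(i,j)\in\mathcal{M}^1(\Bb,\Bs)}\tilde\varphi_i(b_i)]$. On the seller side, each matched seller receives only her critical cost, and raising seller $j$'s cost above the ironed virtual value $\tilde\varphi_i(b_i)$ of the buyer $i$ she is matched to makes the edge $(i,j)$ unprofitable; checking (with a suitable tie-breaking rule among maximum-weight matchings) that this forces $j$ out of the matching shows her payment is at most $\tilde\varphi_i(b_i)$, so the expected total payment to the sellers is at most $\mathbb{E}[\sum_{(i,j)\in\mathcal{M}^1}\tilde\varphi_i(b_i)]$ as well. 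Hence buyer payments dominate seller payments in expectation, which is ex-ante WBB; the argument for GBOM is the mirror image.

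For $\gsom+\gbom\geq\opt$ I would invoke the duality bound of Theorem~\ref{cor:duality upper bound} with $\alpha=1$: $\opt\leq\mathbb{E}_{\Bb,\Bs}[W(\Bb,\Bs)]$, where $W(\Bb,\Bs)$ is the maximum over matchings $M\in\mathcal{F}$ of $\sum_{(i,j)\in M}\big((b_i+\tilde\varphi_i(b_i))-(s_j+\tilde\tau_j(s_j))\big)$. Splitting the edge weight as $(\tilde\varphi_i(b_i)-s_j)+(b_i-\tilde\tau_j(s_j))$ and fixing an optimal matching $M^\ast$, downward-closedness of $\mathcal{F}$ lets me delete from $M^\ast$ all edges with $\tilde\varphi_i(b_i)-s_j\le 0$ while staying in $\mathcal{F}$, so $\sum_{(i,j)\in M^\ast}(\tilde\varphi_i(b_i)-s_j)$ is at most the weight of GSOM's matching $\mathcal{M}^1(\Bb,\Bs)$, and symmetrically $\sum_{(i,j)\in M^\ast}(b_i-\tilde\tau_j(s_j))$ is at most the weight of GBOM's matching. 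Since $\tilde\varphi_i(b_i)\leq b_i$ and $\tilde\tau_j(s_j)\geq s_j$ pointwise (ironing preserves these inequalities), the weight of $\mathcal{M}^1(\Bb,\Bs)$ is at most $\sum_{(i,j)\in\mathcal{M}^1}(b_i-s_j)$, whose expectation is exactly $\gsom$, and likewise the expectation of the weight of GBOM's matching is at most $\gbom$; combining, $\opt\leq\mathbb{E}[W]\leq\gsom+\gbom$.

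The main obstacle I anticipate is the ex-ante WBB claim — specifically, pinning down that the threshold payments to the sellers never exceed, in expectation, the (ironed-virtual) revenue extracted from the buyers, which requires understanding how the critical values in the maximum-weight matching relate to the buyers' virtual values (and fixing the tie-breaking accordingly). The GFT inequality is, by contrast, a fairly direct consequence of the duality benchmark, the two one-line virtual-value inequalities, and downward-closedness, and the incentive guarantees are the standard monotone-allocation-plus-threshold-payment argument.
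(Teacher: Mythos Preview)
Your overall plan matches the paper's: establish that GSOM and GBOM are IR, DSIC, and ex-ante WBB (the paper's Lemma~\ref{lem:ewbb proof}), then prove $\gsom+\gbom\geq\opt$ from the $\alpha=1$ duality bound, and take the better of the two. The DSIC/IR part is the standard monotone-allocation-plus-threshold-payment argument in both.

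For $\gsom+\gbom\geq\opt$ the paper takes a slightly different route than you. After splitting the edge weight, the paper bounds $\mathbb{E}_{\Bb,\Bs}\big[\sum_{(i,j)\in\mathcal{M}^1}(\tilde\varphi_i(b_i)-s_j)\big]\leq\gsom$ by fixing $b_{-i},\Bs$, writing $x_{ij}=\ind[b_i\geq\beta_{k(s_j)}]$, and invoking Lemma~\ref{lem:myerson-phi} to replace $\sum_{r\geq k}\tilde\varphi(\beta_r)f^B_i(\beta_r)$ by $\beta_k\sum_{r\geq k}f^B_i(\beta_r)$. Your route---use $\tilde\varphi_i(b_i)\leq b_i$ pointwise so that $\tilde\varphi_i(b_i)-s_j\leq b_i-s_j$ on every edge of $\mathcal{M}^1$---is shorter and correct (the pointwise inequality does survive ironing: on an ironed block with lowest type $\beta_a$, the common ironed value is at most $\beta_a$). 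Either way works.

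On ex-ante WBB you have correctly located the obstacle but not quite resolved it. What you need is that if $(i,j)\in\mathcal{M}^1(\Bb,\Bs)$ and seller $j$ raises $s_j$ past her threshold, she leaves the matching \emph{entirely}---she does not switch to some other buyer $i'$. Tie-breaking alone does not deliver this; the paper deduces it from the edge-monotonicity Lemma~\ref{lem:matching}: if $(i',j)$ were in the max-weight matching at the higher $s_j$, then by that lemma $(i',j)$ would also be in the max-weight matching at the original $s_j$, contradicting that $(i,j)$ is. Once you have this, your bound $p^S_j(\Bb,\Bs)\leq\tilde\varphi_i(b_i)$ for the matched $i$ follows (any edge present in a max-weight matching over a downward-closed $\mathcal{F}$ has nonnegative weight), and together with Myerson's identity on the buyer side your WBB argument goes through---if anything, a bit more directly than the paper's per-pair expectation computation.
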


Again, we propose two mechanisms and show that the better one has GFT at least $\frac{1}{2}\opt$. The allocation and payment rule are defined as follows:

\begin{definition}\label{def:som and bom}
We consider the following two mechanisms which are the generalizations of SOM and BOM in double auctions.
\begin{itemize}
	\item \textbf{Generalized Seller-Offering Mechanism (GSOM)}: Given type profile $\Bb,\Bs$, find a matching $A(\Bb,\Bs)\in \mathcal{F}$ that maximizes $\sum_{(i,j)\in A(\Bb,\Bs)}(\tilde{\varphi}_i(b_i)-s_j)$. We use $A^{1}(\Bb,\Bs)$ to denote the maximum weight matching. For each $(i,j)\in A^{1}(\Bb,\Bs)$, buyer $i$ trades with seller $j$. According to Lemma~\ref{lem:matching}, this allocation rule is monotone, and the buyer (or the seller) pays (or receives) the \emph{threshold payment}. {See below for more details about the payment rule.}

	\item \textbf{Generalized Buyer-Offering Mechanism (GBOM)}: Given type profile $\Bb,\Bs$, find a matching $A(\Bb,\Bs)\in \mathcal{F}$ that maximizes $\sum_{(i,j)\in A(\Bb,\Bs)}(b_i-\tilde{\tau}_j(s_j))$. We use $A^{2}(\Bb,\Bs)$ to denote the maximum weight matching. For each $(i,j)\in A^{2}(\Bb,\Bs)$, buyer $i$ trades with seller $j$. According to Lemma~\ref{lem:matching}, this allocation rule is monotone, and the buyer (or the seller) pays (or receives) the \emph{threshold payment}.
\end{itemize}
\end{definition}

Given type profile $(\Bb,\Bs)$, both mechanisms first build a complete bipartite graph between the buyers and the sellers with edge weight $w_{ij}(b_i,s_j)$ equal to $\tilde{\varphi}_i(b_i)-s_j$ (or $b_i-\tilde{\tau}_j(s_j)$) for the edge between buyer $i$ and seller $j$. Then a maximum weight matching $A(\Bb,\Bs)$ within the family $\mathcal{F}$ is chosen.

\begin{lemma}\label{lem:matching}
Suppose $w_{ij}(b_i,s_j)$ is non-decreasing in $b_i$ and non-increasing in $s_j$ for every buyer $i$ and seller $j$. For any type profile $\Bb,\Bs$, if edge $(i,j)$ is in the maximum weight matching $\MM$, this edge is in the maximum weight matching under type profile $(b_i',b_{-i},\Bs)$ (or $(\Bb,s_j',s_{-j})$) for any $b_i'>b_i$ (or any $s_j'<s_j$).
\end{lemma}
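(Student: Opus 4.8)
\textbf{Proof plan for Lemma~\ref{lem:matching}.}

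The plan is to prove the statement for a single buyer $i$ whose value increases from $b_i$ to $b_i' > b_i$, with everyone else's types fixed; the case of a seller whose value decreases is entirely symmetric (it corresponds to flipping the sign of $s_j$, under which $w_{ij}$ is non-decreasing). Fix the type profile $(\Bb,\Bs)$, let $\MM = A(\Bb,\Bs)$ be the (lexicographically-broken) maximum weight matching, and suppose $(i,j) \in \MM$. Write $\MM'$ for the maximum weight matching under $(b_i', b_{-i}, \Bs)$, and suppose for contradiction that $(i,j) \notin \MM'$.

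The key step is an exchange argument on the symmetric difference $\MM \triangle \MM'$. Since both $\MM$ and $\MM'$ are matchings, this symmetric difference is a disjoint union of paths and even cycles in the bipartite graph. First I would dispose of the cycles and of the paths not containing $i$: on any such component $C$, neither $\MM$ nor $\MM'$ touches buyer $i$, so all edge weights on $C$ are identical under the two profiles; hence swapping the $\MM$-edges of $C$ for the $\MM'$-edges of $C$ changes the weight of $\MM$ and the weight of $\MM'$ by equal and opposite amounts, and since $\MM$ is optimal and $\MM'$ is optimal for their respective profiles, downward-closedness lets us conclude each such component contributes zero net change and can be ignored (or, more cleanly, that we may assume $\MM \triangle \MM'$ consists of a single component). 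That leaves the component $P$ containing buyer $i$: it is a path with $i$ as one endpoint (since $(i,j) \in \MM \setminus \MM'$, buyer $i$ is matched in $\MM$ but the $\MM$-edge at $i$ is not in $\MM'$). Let $M_P = \MM \cap P$ and $M'_P = \MM' \cap P$ be the two alternating edge sets along $P$; note $(i,j) \in M_P$, and $M_P$ covers $i$ while $M'_P$ may or may not.

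Now the exchange: consider $\MM_1 = (\MM \setminus M_P) \cup M'_P$ and $\MM_2 = (\MM' \setminus M'_P) \cup M_P$; both are matchings in $\mathcal{F}$ by downward-closedness and the path structure. The only edges whose weight differs between the two profiles are the edges incident to buyer $i$, of which there is exactly one in each of $M_P$ and (possibly) $M'_P$. Writing $w$ for weights under $(\Bb,\Bs)$ and $w'$ for weights under $(b_i',b_{-i},\Bs)$, and using that $w'_{ij} \ge w_{ij}$ for the edge at $i$ (and $w' = w$ on all other edges), optimality of $\MM$ gives $w(M_P) \ge w(M'_P)$, and optimality of $\MM'$ gives $w'(M'_P) \ge w'(M_P)$, i.e. $w'(M'_P) \ge w'(M_P) = w(M_P) + (w'_{ij} - w_{ij}) \cdot \ind[(i,j)\in M_P]$. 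Combined with monotonicity these force $w(M_P) = w(M'_P)$ and $w'_{ij} = w_{ij}$, so $\MM_2$ is also a maximum weight matching under $(\Bb,\Bs)$ and $\MM_1$ also one under $(b_i',b_{-i},\Bs)$; then I would invoke the lexicographic tie-breaking rule to derive a contradiction with $(i,j) \notin \MM'$ — since $\MM$ contains $(i,j)$ and is the lex-optimal max-weight matching at $(\Bb,\Bs)$, and $\MM_2$ achieves the same weight, we can arrange that the lex-optimal choice at $(b_i',b_{-i},\Bs)$ also retains $(i,j)$.

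\textbf{Main obstacle.} The delicate point is handling the tie-breaking rigorously: monotonicity of the edge weights only yields a \emph{weakly} optimal exchanged matching, so the conclusion "$(i,j)$ stays in \emph{the} matching" genuinely relies on the lexicographic tie-breaking being consistent across the two profiles, and one must check that raising $b_i$ cannot cause the lex rule to prefer a different max-weight matching that excludes $(i,j)$. I expect the cleanest fix is to first prove the weight of every matching using edge $(i,j)$ goes up by exactly $w'_{ij}-w_{ij}$ while the max weight over matchings \emph{avoiding} $i$'s current partner can go up by at most that much, so $(i,j)$ remains in \emph{some} max-weight matching, and then argue the lex rule (applied identically at both profiles) preserves it; alternatively, one perturbs weights generically to make the max-weight matching unique, which sidesteps tie-breaking entirely.
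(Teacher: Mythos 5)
Your exchange argument runs into two substantive problems that the paper's much simpler direct argument avoids.

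\textbf{Feasibility of the exchanged matchings.} You claim $\MM_1=(\MM\setminus M_P)\cup M'_P$ and $\MM_2=(\MM'\setminus M'_P)\cup M_P$ are ``matchings in $\mathcal{F}$ by downward-closedness,'' but downward-closedness only guarantees that \emph{subsets} of $\MM$ or of $\MM'$ are feasible; a set that mixes edges from both is in general not in $\mathcal{F}$. For instance, take $\mathcal{F}$ to be the downward closure of $\left\{\{(1,1),(2,2),(3,3),(4,4)\},\ \{(1,2),(3,4)\}\right\}$, $\MM=\{(1,1),(2,2),(3,3),(4,4)\}$, $\MM'=\{(1,2),(3,4)\}$, and $i=1$: the component $P$ of $\MM\triangle\MM'$ containing buyer $1$ is the path through buyers $1,2$ and sellers $1,2$, so $\MM_1=\{(1,2),(3,3),(4,4)\}$, which is a subset of neither generator and hence infeasible. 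The inequalities ``optimality of $\MM$ gives $w(M_P)\geq w(M'_P)$'' and its mirror are obtained by comparing $\MM$ against $\MM_1$ and $\MM'$ against $\MM_2$, so they do not follow once those sets leave $\mathcal{F}$. The same issue invalidates the preliminary step where you ``dispose of'' cycles and paths not containing $i$ by swapping along them. This is not a gap you can patch: any exchange-style argument is fundamentally mismatched with an \emph{arbitrary} downward-closed $\mathcal{F}$.

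\textbf{The endpoint claim.} You assert that $P$ ``is a path with $i$ as one endpoint'' because $(i,j)\in\MM\setminus\MM'$. This inference is wrong: it only tells you that the particular edge $(i,j)$ is absent from $\MM'$, not that buyer $i$ is unmatched in $\MM'$. If $(i,j'')\in\MM'$ for some $j''\neq j$, then $i$ has degree $2$ in $\MM\triangle\MM'$ and is either an interior vertex of a path or lies on a cycle. Your own parenthetical ``($M'_P$ may or may not [cover $i$])'' contradicts the endpoint claim. In the interior/cycle cases, $M'_P$ contains its own $i$-incident edge $(i,j'')$ whose weight also changes when $b_i$ increases, so the chain of equalities that forces $w'_{ij}=w_{ij}$ breaks down, and you only obtain $w'_{ij''}-w_{ij''}\geq w'_{ij}-w_{ij}$, which gives no contradiction without further information about the weights.

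\textbf{Comparison to the paper's proof.} The paper sidesteps both issues by not taking symmetric differences at all. It compares $\MM$'s new weight directly against the new weight of an arbitrary $\MM'\in\mathcal{F}$: $\MM$'s weight goes up by $w_{ij}(b_i',s_j)-w_{ij}(b_i,s_j)\geq 0$ (since $(i,j)$ is $\MM$'s unique $i$-edge), and it then argues $\MM'$'s weight goes up by no more, so $\MM$ remains optimal. No recombined matching is ever constructed, so feasibility of $\MM'$ is given for free, and no case analysis over the structure of $\MM\triangle\MM'$ is needed. (For what it is worth, the paper's own write-up of the last step is also a touch loose in the case $(i,j'')\in\MM'$ with $j''\neq j$, but for the weight functions actually used, $\tilde\varphi_i(b_i)-s_j$ and $b_i-\tilde\tau_j(s_j)$, the increment is independent of the matched seller, so $\MM'$'s increase never exceeds $\MM$'s.) Your ``main obstacle''---the lexicographic tie-break---is a real and shared subtlety, but it is not the proof's central difficulty; the feasibility and endpoint issues above are what actually block your argument.
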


\begin{proof}
We prove that for any $b_i'>b_i$, $\MM$ is still a maximum weight matching under type profile $(b_i',b_{-i},\Bs)$. For convenience, we use $w_{ij}(\Bb,\Bs)$ to represent the weight of edge $(i,j)$ under type profile $(\Bb,\Bs)$. $w_{ij}(\Bb,\Bs)=w_{ij}(b_i,s_j)$. For every matching $\MM'\in \mathcal{F}$, notice that $w_{i'j'}(\Bb,\Bs)=w_{i'j'}(b_i',b_{-i},\Bs)$ for all $i'\not=i$ and
all $j'\in[m]$. Hence,
\begin{equation}
\begin{aligned}
\sum_{(i',j')\in \MM}w_{i'j'}(b_i',b_{-i},\Bs)
&=\sum_{(i',j')\in \MM}w_{i'j'}(\Bb,\Bs)+\left(w_{ij}(b_i',s_j)-w_{ij}(b_i,s_j)\right)\\
&\geq \sum_{(i',j')\in \MM'}w_{i'j'}(\Bb,\Bs)+(w_{ij}(b_i',s_j)-w_{ij}(b_i,s_j))~~\text{(Optimality of $\MM$)}\\
&\geq \sum_{(i',j')\in \MM'}w_{i'j'}(b_i',b_{-i},s)
\end{aligned}
\end{equation}

The last inequality is an equality if $(i,j)\in \MM'$. If $(i,j)\notin \MM'$, the inequality is because $w_{ij}(b_i',s_j)-w_{ij}(b_i,s_j)\geq 0$ and
$\sum_{(i',j')\in \MM'}w_{i',j'}(\Bb,\Bs)=\sum_{(i',j')\in \MM'}w_{i'j'}(b_i',b_{-i},\Bs)$. Thus, $\MM$ is a maximum weight matching under type profile $(b_i',b_{-i},\Bs)$. Similarly, we can show that $\MM$ is  a maximum weight matching under type profile $(\Bb,s_j,s_{j'})$ for all $s_j'<s_j$.
\end{proof}

By Lemma~\ref{lem:matching}, the allocation rules for GSOM and GBOM are both monotone due to the monotonicity of functions $\tilde{\varphi}_i(\cdot)$ and $\tilde{\tau}_j(\cdot)$ for all $i,j$. We use the \emph{threshold payment}, that is, given any type profile $\Bb,\Bs$, for every buyer $i$, if $x_i^B(\Bb,\Bs)=0$, $p_i^B(\Bb,\Bs)$ is also $0$, and if $x_i^B(\Bb,\Bs)=1$, $p_i^B(\Bb,\Bs)$ equals the smallest $b_i'$ such that $x_i^B(b_i',b_{-i},\Bs)=1$. Similarly, for every seller $j$, if $x_j^S(\Bb,\Bs)=0$, $p_j^S(\Bb,\Bs)$ is also $0$, and if $x_j^S(\Bb,\Bs)=1$, $p_j^S(\Bb,\Bs)$ equals to the largest $s_j'$ such that $x_j^S(\Bb,s_j',s_{-j})=1$.
 As the allocation rule is monotone and we use {threshold payments}, the mechanism is IR and DSIC for every agent.

\subsection{GSOM and GBOM in Bilateral Trading}
As a warmup, we show that the GSOM (or the GBOM) is ex-ante WBB in the bilateral trading setting.
Let $T^B=\{\beta_1,...,\beta_{K}\}$ where $\beta_1<\beta_2<...<\beta_{K}$ and $T^S=\{\alpha_1,...,\alpha_{K'}\}$ where $\alpha_1<\alpha_2<...<\alpha_{K'}$. For GSOM, the pair is selected in the optimal matching if and only if the weight $\tilde{\varphi}(b)-s>0$. Notice that this is exactly the allocation rule used in SOM.
\begin{lemma}\label{lem:GSOM in bt}
In the Bilateral Trading setting, GSOM is an IR, DSIC and ex-ante WBB mechanism.
\end{lemma}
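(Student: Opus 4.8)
GSOM is immediately IR and DSIC: it uses a monotone allocation rule (by Lemma~\ref{lem:matching} together with the monotonicity of $\tilde\varphi$) and threshold payments, exactly as in the general double‑auction discussion above. So the whole content of the lemma is the ex‑ante WBB inequality $\mathbb{E}_{b,s}[p^B]\ge\mathbb{E}_{b,s}[p^S]$. A word of warning before starting: this inequality is \emph{false} pointwise — one can construct a profile on which the seller's threshold receipt strictly exceeds the buyer's payment — so the argument must genuinely average over the type profile.

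The plan is to evaluate both sides against the common benchmark $\mathbb{E}_{b,s}\big[\tilde\varphi(b)\cdot\ind[\tilde\varphi(b)>s]\big]$. First fix the seller's report $s$. In bilateral trading the GSOM allocation trades iff $\tilde\varphi(b)-s>0$, which is exactly the SOM allocation, and the buyer's threshold price is the smallest $\beta_k\in T^B$ with $\tilde\varphi(\beta_k)>s$ — this is precisely the SOM posted price $q_B(s)$ identified in the proof of Lemma~\ref{lem:single_ppm}. Writing $k^*=k^*(s)$ for that index, we have $\tilde\varphi(\beta_{k^*-1})\le s<\tilde\varphi(\beta_{k^*})$, so $\beta_{k^*}$ is not in the interior of any ironed interval, and the equality case of Lemma~\ref{lem:myerson-phi} gives
\[
\mathbb{E}_b[p^B(b,s)]=\beta_{k^*(s)}\cdot\Pr_b[\tilde\varphi(b)>s]=\sum_{r\ge k^*(s)}\tilde\varphi(\beta_r)f^B(\beta_r)=\mathbb{E}_b\big[\tilde\varphi(b)\cdot\ind[\tilde\varphi(b)>s]\big].
\]
Taking expectation over $s$ identifies $\mathbb{E}_{b,s}[p^B]$ with the benchmark exactly.

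For the seller side a crude bound suffices. Fix the buyer's report $b$; the seller trades iff $s<\tilde\varphi(b)$, and her threshold receipt is then $s^\dagger(b):=\max\{\alpha\in T^S:\alpha<\tilde\varphi(b)\}$ (with no trade at all when this set is empty). Since $s^\dagger(b)<\tilde\varphi(b)$ and $\Pr_s[s<\tilde\varphi(b)]\ge 0$,
\[
\mathbb{E}_s[p^S(b,s)]=s^\dagger(b)\cdot\Pr_s[s<\tilde\varphi(b)]\le\tilde\varphi(b)\cdot\Pr_s[s<\tilde\varphi(b)]=\mathbb{E}_s\big[\tilde\varphi(b)\cdot\ind[\tilde\varphi(b)>s]\big],
\]
and averaging over $b$ yields $\mathbb{E}_{b,s}[p^S]\le\mathbb{E}_{b,s}\big[\tilde\varphi(b)\cdot\ind[\tilde\varphi(b)>s]\big]=\mathbb{E}_{b,s}[p^B]$, which is ex‑ante WBB.

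The only real subtlety is recognizing that the seller's transfer can be controlled by the buyer's payment only on average, with the ironed virtual welfare as the bridge: the buyer's payment equals it exactly (here one needs the fact, packaged in Lemma~\ref{lem:myerson-phi}, that the optimal posted price sits at the left endpoint of an ironed interval), while the seller's threshold receipt stays strictly below $\tilde\varphi(b)$ on every realized trade and hence is dominated by $\tilde\varphi(b)$ after averaging. No estimate of the seller's (ironed) virtual value is needed for this direction; that machinery will only become relevant for GBOM and for the $\frac{1}{2}$‑approximation guarantee.
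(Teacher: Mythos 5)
Your proposal is correct and takes essentially the same route as the paper's proof: both identify the buyer's interim payment exactly with $\mathbb{E}_b[\tilde\varphi(b)\cdot\ind[\tilde\varphi(b)>s]]$ via the equality case of Lemma~\ref{lem:myerson-phi} (using that the threshold type is not in the interior of an ironed interval), and both upper-bound the seller's interim receipt by observing $p^S\le\tilde\varphi(b)$ on every realized trade. The only cosmetic difference is that you condition on $b$ for the seller-side bound whereas the paper conditions on $s$ and sums $\alpha_{c(\beta_r)}\le\tilde\varphi(\beta_r)$ over buyer types; after averaging the two computations coincide.
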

\begin{proof}
We only need to prove that GSOM is ex-ante WBB. By definition of the mechanism, if $x^B(b,s)=x^S(b,s)=1$, $p^B(b,s)=\beta_{k(s)}$ where $k(s)$ is the smallest index such that $\tilde{\varphi}\left(\beta_{k(s)}\right)\geq s$, which depends on $s$. On the other hand, $p^S(b,s)=\alpha_{c(b)}$ where $c(b)$ is the largest index such that $\alpha_{c(b)}\leq\tilde{\varphi}(b)$, which depends on $b$.

For every $s$,
$$\mathbb{E}_b\left[p^B(b,s)\right]=\beta_{k(s)}\cdot \sum_{r=k(s)}^{K} f^B(\beta_r),$$ and
$$\mathbb{E}_b\left[p^S(b,s)\right]=\sum_{r=k(s)}^{K}\alpha_{c(\beta_r)}\cdot f^B(\beta_r)
\leq\sum_{r=k(s)}^{K}\tilde{\varphi}(\beta_r)\cdot f^B(\beta_r).$$

By definition of $k(s)$, it is easy to see that $k(s)$ is not in the interior of any ironed interval, so $\beta_{k(s)}\cdot\sum_{r=k(s)}^{K} f^B(\beta_r)=\sum_{r=k(s)}^{K}\tilde{\varphi}(\beta_r)\cdot f^B(\beta_r)$ by Lemma~\ref{lem:myerson-phi}. Therefore, $\mathbb{E}_b\left[p^S(\Bb,\Bs)\right]\leq\mathbb{E}_b\left[p^B(\Bb,\Bs)\right]$. Taking expectation over $s$ on both sides finishes the proof.
\end{proof}

Similarly, we can prove that GBOM is ex-ante WBB. The trade happens if and only if the weight $b-\tilde{\tau}(s)>0$ in GBOM, which is exactly the same allocation rule with BOM. Using a similar argument as in Lemma~\ref{lem:GSOM in bt}, we can prove the following Lemma.
\begin{lemma}
In the Bilateral Trading setting, GBOM is an IR, DSIC and ex-ante WBB mechanism.
\end{lemma}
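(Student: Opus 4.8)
The plan is to mirror the proof of Lemma~\ref{lem:GSOM in bt} with the roles of buyers and sellers exchanged, exploiting the symmetry between Myerson's ironed virtual value $\tilde{\varphi}$ for buyers and $\tilde{\tau}$ for sellers. Recall that in GBOM the edge weight is $b-\tilde{\tau}(s)$, so in bilateral trading the (unique) pair trades if and only if $b>\tilde{\tau}(s)$, which is exactly the BOM allocation rule identified in Lemma~\ref{lem:single_ppm}. Monotonicity of this allocation rule (hence IR and DSIC) is already established via Lemma~\ref{lem:matching} together with the monotonicity of $\tilde{\varphi}$ and $\tilde{\tau}$, so the only thing left to check is ex-ante WBB, i.e., $\mathbb{E}_{b,s}[p^S(b,s)] \le \mathbb{E}_{b,s}[p^B(b,s)]$.

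First I would write down the threshold payments explicitly. Fix notation $T^B=\{\beta_1,\dots,\beta_K\}$ with $\beta_1<\dots<\beta_K$ and $T^S=\{\alpha_1,\dots,\alpha_{K'}\}$ with $\alpha_1<\dots<\alpha_{K'}$. When trade occurs under $(b,s)$: the seller's threshold payment $p^S(b,s)$ is the largest value $s'\in T^S$ for which the seller still sells when reporting $s'$, i.e., the largest $\alpha_{\ell}$ with $\tilde{\tau}(\alpha_\ell)<b$; call its index $\ell(b)$. The buyer's threshold payment $p^B(b,s)$ is the smallest $b'\in T^B$ for which the buyer still receives the item, i.e., the smallest $\beta_m$ with $\beta_m>\tilde{\tau}(s)$; call its index $m(s)$. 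Then I would fix $b$ and take expectation over $s$: trade happens exactly when $\tilde{\tau}(s)<b$, i.e., for $s=\alpha_r$ with $r\le \ell(b)$, so
\[
\mathbb{E}_s[p^B(b,s)] \;=\; \sum_{r=1}^{\ell(b)} \beta_{m(\alpha_r)}\, f^S(\alpha_r) \;\geq\; \sum_{r=1}^{\ell(b)} \tilde{\tau}(\alpha_r)\, f^S(\alpha_r),
\]
where the inequality uses $\beta_{m(\alpha_r)}>\tilde{\tau}(\alpha_r)$ by definition of $m(\cdot)$. On the other side,
\[
\mathbb{E}_s[p^S(b,s)] \;=\; \alpha_{\ell(b)}\cdot \sum_{r=1}^{\ell(b)} f^S(\alpha_r).
\]
Now I invoke the seller-side analogue of the virtual-value identity, Lemma~\ref{lem:myerson-tau}: since $\ell(b)$ is the largest index with $\tilde{\tau}(\alpha_{\ell(b)})<b\le \tilde{\tau}(\alpha_{\ell(b)+1})$, the type $\alpha_{\ell(b)}$ is not in the interior of any ironed interval, so Lemma~\ref{lem:myerson-tau} holds with equality: $\alpha_{\ell(b)}\cdot\sum_{r=1}^{\ell(b)} f^S(\alpha_r) = \sum_{r=1}^{\ell(b)} \tilde{\tau}(\alpha_r) f^S(\alpha_r)$. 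Combining, $\mathbb{E}_s[p^S(b,s)] = \sum_{r=1}^{\ell(b)}\tilde{\tau}(\alpha_r)f^S(\alpha_r) \le \mathbb{E}_s[p^B(b,s)]$ for every $b$, and taking expectation over $b$ finishes the proof.

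I do not expect a serious obstacle here; the only delicate point is making sure that the "not in the interior of an ironed interval" condition genuinely applies to $\alpha_{\ell(b)}$ — this is precisely the boundary-type argument used in the SOM analysis and in Lemma~\ref{lem:GSOM in bt}, and it goes through because $\ell(b)$ is defined by a strict-inequality cutoff on the monotone function $\tilde{\tau}$. A secondary subtlety is the tie-breaking convention for the threshold (whether the agent trades at value exactly equal to the threshold); as in the footnote of Lemma~\ref{lem:single_ppm} this is immaterial for the expectation, so I would just note it in passing rather than belabor it.
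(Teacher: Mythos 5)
Your proposal is correct and is precisely the mirror-image argument the paper intends when it states that GBOM follows ``using a similar argument as in Lemma~\ref{lem:GSOM in bt}'': you fix $b$, identify the seller's threshold $\alpha_{\ell(b)}$ and buyer's threshold $\beta_{m(s)}$, bound $\mathbb{E}_s[p^B(b,s)]$ from below by $\sum_{r=1}^{\ell(b)}\tilde{\tau}(\alpha_r)f^S(\alpha_r)$, and invoke the equality case of Lemma~\ref{lem:myerson-tau} for $\mathbb{E}_s[p^S(b,s)]$ since $\alpha_{\ell(b)}$ cannot lie in the interior of an ironed interval. The reasoning, including the boundary-type observation for $\alpha_{\ell(b)}$, matches the paper's GSOM proof with the roles of buyer and seller (and of $\tilde{\varphi}$/Lemma~\ref{lem:myerson-phi} versus $\tilde{\tau}$/Lemma~\ref{lem:myerson-tau}) exchanged.
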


\subsection{The General Case}
In this section, we consider the general double auction setting, and we first argue that both GSOM and GBOM are ex-ante WBB. The idea is to consider each pair $(i,j)$ separately and show that the expected payment of buyer $i$ for trading with seller $j$ is greater than the expected gains of seller $j$ for trading with buyer $i$. The formal proof uses similar techniques as in Lemma~\ref{lem:GSOM in bt}.

\begin{lemma}\label{lem:ewbb proof}
Both GSOM and GBOM are IR, DSIC and ex-ante WBB mechanisms.
\end{lemma}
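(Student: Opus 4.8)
\textbf{Proof plan for Lemma~\ref{lem:ewbb proof}.}

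The plan is to prove ex-ante WBB for GSOM (the argument for GBOM is symmetric, swapping the roles of buyers and sellers and replacing $\tilde\varphi$ by $\tilde\tau$). IR and DSIC already follow from the monotonicity of the allocation rule (Lemma~\ref{lem:matching}) together with the use of threshold payments, as noted right after Definition~\ref{def:som and bom}, so the only real content is the budget-balance inequality $\sum_j \mathbb{E}_{\Bb,\Bs}[p_j^S(\Bb,\Bs)] \le \sum_i \mathbb{E}_{\Bb,\Bs}[p_i^B(\Bb,\Bs)]$. Following the hint in the paragraph preceding the lemma, I would decompose both sides over buyer--seller pairs: for each pair $(i,j)$, let $R_{ij}$ be the expected gains seller $j$ receives on the event that she trades with buyer $i$, and $P_{ij}$ the expected payment buyer $i$ makes on the event that he trades with seller $j$. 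Since each trade involves exactly one buyer and one seller, $\sum_j \mathbb{E}[p^S_j] = \sum_{i,j} R_{ij}$ and $\sum_i \mathbb{E}[p^B_i] = \sum_{i,j} P_{ij}$, so it suffices to show $R_{ij}\le P_{ij}$ for every fixed pair $(i,j)$.

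Next I would fix $(i,j)$ and condition on the types $(b_{-i},s_{-j})$ of everyone else. With those held fixed, by Lemma~\ref{lem:matching} the set of $b_i$ for which $(i,j)$ is in the optimal matching is an up-set $\{b_i : b_i \ge \theta\}$ (for some threshold $\theta=\theta(b_{-i},s_{-j},s_j)$ depending on the frozen coordinates and on $s_j$), and the threshold payment buyer $i$ makes when he does trade is exactly the smallest type at which he still trades — call it $\beta_{k}$, the smallest element of $T^B_i$ with $\tilde\varphi_i(\beta_k)$ above the relevant cutoff. Symmetrically, freezing $b_i$ and $(b_{-i},s_{-j})$, the set of $s_j$ for which $(i,j)$ trades is a down-set, and seller $j$'s threshold payment is the largest $s_j'\in T^S_j$ at which she still trades. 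The key structural fact, exactly as in Lemma~\ref{lem:GSOM in bt}, is that because GSOM's edge weight is $\tilde\varphi_i(b_i)-s_j$, the event ``$(i,j)$ trades'' can be rewritten in a form where buyer $i$'s threshold is governed by $\tilde\varphi_i(b_i)$ crossing a value determined by the rest of the instance. Then I would invoke Lemma~\ref{lem:myerson-phi}: summing $\tilde\varphi_i$ over an up-tail of $T^B_i$ starting at a point $\beta_k$ that is not in the interior of an ironed interval gives exactly $\beta_k$ times the tail mass, while the seller's received payment, being the realized competing threshold, is pointwise at most the corresponding $\tilde\varphi_i(b_i)$ on the trade event. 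Averaging the buyer's payment over $b_i$ first gives $\beta_k \cdot \Pr[b_i\ge\beta_k]=\mathbb{E}_{b_i}[\tilde\varphi_i(b_i)\,\ind[\text{trade}]]$ (equality by Lemma~\ref{lem:myerson-phi} since $\beta_k$ sits at a boundary), and this dominates $R_{ij}$ conditioned on the same frozen coordinates; integrating out $(b_{-i},s_{-j})$ then gives $R_{ij}\le P_{ij}$.

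The main obstacle I anticipate is the bookkeeping needed to make the ``freeze the other coordinates'' reduction clean in the presence of an arbitrary downward-closed $\mathcal{F}$: unlike bilateral trading, buyer $i$'s trading threshold now depends in a complicated way on $s_j$ \emph{and} on all the other types through the max-weight matching, so one has to be careful that the threshold $\beta_k$ at which buyer $i$ enters the matching is indeed of the form guaranteed by Lemma~\ref{lem:matching} (an up-set boundary) and that it is never in the interior of an ironed interval — this last point is what lets Lemma~\ref{lem:myerson-phi} be applied with equality rather than just inequality, and it needs the observation that the matching, and hence the trade event, depends on $b_i$ only through $\tilde\varphi_i(b_i)$, so the boundary type is exactly where $\tilde\varphi_i$ jumps past the competing value. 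Once that is established, the comparison $R_{ij}\le P_{ij}$ is essentially the two-sided version of the inequality $\mathbb{E}_b[p^S]\le\mathbb{E}_b[p^B]$ proved in Lemma~\ref{lem:GSOM in bt}, and the proof for GBOM follows by the symmetry between $\tilde\varphi$ and $\tilde\tau$ and between the up-set/down-set structure of the two sides.
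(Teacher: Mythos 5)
Your proposal is essentially the paper's proof: the same pair-level decomposition $\sum_{i,j}(P_{ij}-R_{ij})$, the same observation that the seller's received threshold $\alpha_{c(b_i)}$ is pointwise dominated by $\tilde\varphi_i(b_i)$ because a traded edge must have nonnegative weight under a downward-closed $\mathcal{F}$, and the same application of Lemma~\ref{lem:myerson-phi} with equality at the boundary type $\beta_{k(s_j)}$. The one step you defer as ``bookkeeping'' --- that buyer $i$'s threshold payment, defined via $x^B_i$ (getting \emph{any} item), in fact equals $\beta_{k(s_j)}$, the pair-level threshold --- is a genuine proof obligation that the paper discharges with a short contradiction via Lemma~\ref{lem:matching} (if $i$ were matched with some $j'\neq j$ at a type below $\beta_{k(s_j)}$, raising $b_i$ would keep $(i,j')$ in the optimum, contradicting that $(i,j)$ is chosen at the higher type); filling that in, your outline and the paper's argument coincide.
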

\begin{proof}
We will give the proof for GSOM and the same argument applies to GBOM. Let $(x^B,x^S,p^B,p^S)$ be the allocation and payment rule for GSOM. For every $i,j$ and type profile $\Bb,\Bs$, let $x_{ij}(\Bb,\Bs)=\ind[(i,j)\in A^1(\Bb,\Bs)]$ representing whether buyer $i$ is trading with seller $j$. Clearly $x_i^B(\Bb,\Bs)=\sum_{j
\in [m]} x_{ij}(\Bb,\Bs)$, $x_j^S(\Bb,\Bs)=\sum_{i\in[n]} x_{ij}(\Bb,\Bs)$. We notice that with threshold payments, $p^B_i(\Bb,\Bs)$ (or $p^S_j(\Bb,\Bs)$) is non-zero only if $x^B_i(\Bb,\Bs)$ (or $x^S_j(\Bb,\Bs)$) is 1. Then the difference between all buyers' expected payments and sellers' expected gains can be written as
\begin{equation}\label{equ:diff of payment}
\begin{aligned}
\mathbb{E}_{\Bb,\Bs}\bigg[\sum_{i=1}^n x^B_i(\Bb,\Bs)p^B_i(\Bb,\Bs)-&\sum_{j=1}^m x^S_j(\Bb,\Bs)p^S_j(\Bb,\Bs)\bigg]\\
=&\mathbb{E}_{\Bb,\Bs}\bigg[\sum_{(i,j)\in A^1(\Bb,\Bs)}x_{ij}(\Bb,\Bs)\cdot\left(p^B_i(\Bb,\Bs)-p^S_j(\Bb,\Bs)\right)\bigg]\\
=&\mathbb{E}_{\Bb,\Bs}\bigg[\sum_{i,j}x_{ij}(\Bb,\Bs)\cdot\left(p^B_i(\Bb,\Bs)-p^S_j(\Bb,\Bs)\right)\bigg]\\
=&\sum_{i=1}^n\sum_{j=1}^m\mathbb{E}_{\Bb,\Bs}\left[x_{ij}(\Bb,\Bs)\cdot\left(p^B_i(\Bb,\Bs)-p^S_j(\Bb,\Bs)\right)\right]
\end{aligned}
\end{equation}

Now we fix $i,j$, $b_{-i}$ and $s_{-j}$. Lemma~\ref{lem:matching} states that if a pair $(i,j)$ is in the max weight matching then increasing the value of $b_i$ or decreasing the value of $s_j$ will not remove this pair from the max weight matching. In other words, $~x_{ij}(b_i,b_{-i},s_j,s_{-j})$ is non-decreasing in $b_i$ and non-increasing in $s_j$. {Next, we characterize the threshold payments of $i$ and $j$.} Again, let $T^B_i=\{\beta_1,...,\beta_{K}\}$ where $\beta_1<\beta_2<...<\beta_{K}$ and $T^S_j=\{\alpha_1,...,\alpha_{K'}\}$ where $\alpha_1<\alpha_2<...<\alpha_{K'}$. For every $b_i\in T^B_i$, define $c(b_i)$ to be the largest index $c$ such that $x_{ij}(b_i,b_{-i},\alpha_c,s_{-j})=1$. Notice that when $(i,j)\in A^1(b_i,b_{-i},\alpha_c,s_{-j})$, $\tilde{\varphi}_i(b_i)-\alpha_c$ must be non-negative, as the feasibility constraint $\mathcal{F}$ is downward-closed, so removing a pair with negative weight gives a strictly better matching. Thus, $\alpha_{c(b_i)}\leq \tilde{\varphi}_i(b_i)$.

Similarly, for every $s_j\in T_j^S$, define $k(s_j)$ to be the smallest index $k$ such that $x_{ij}(\beta_k,b_{-i},s_j,s_{-j})=1$. By the definition of threshold payments, given $b_i,s_j$, if $x_{ij}(\Bb,\Bs)=1$, $p^S_j(\Bb,\Bs)=\alpha_{c(b_i)}$. {As for the buyer,} $p^B_i(\Bb,\Bs)=\beta_{k(s_j)}$. {The reason is that when $s_j\geq \alpha_{c(b_i)}$ (or $b_i\leq \beta_{k(s_j)}$) then $x_j^S(\Bb,\Bs)$ (or $x_i^B(\Bb,\Bs)$) must be $0$. Imagine this is not the case, and $j$ (or $i$) is in the maximum matching with some other buyer $i'$ (or seller $j'$) under profile $(\Bb,\Bs)$, then clearly if we decrease the value of $s_j$ (or increase the value of $b_i$), $(i',j)$ (or $(i,j')$) should remain in the maximum matching according to Lemma~\ref{lem:matching}. Contradiction.}

Now fix $s_j$, $x_{ij}(\Bb,\Bs)=1$ if and only if $b_i\geq \beta_{k(s_j)}$ by the definition of $k(s_j)$.
we have
\begin{equation}
\mathbb{E}_{b_i}\left[x_{ij}(\Bb,\Bs)p^B_i(\Bb,\Bs)\right]=\beta_{k(s_j)}\cdot \sum_{k=k(s_j)}^K f^B_i(\beta_k)
\end{equation}
\begin{equation}
\mathbb{E}_{b_i}\left[x_{ij}(\Bb,\Bs)p^S_i(\Bb,\Bs)\right]=\sum_{k=k(s_j)}^K f^B_i(\beta_k)\cdot \alpha_{c(\beta_k)}\leq \sum_{k=k(s_j)}^K f^B_i(\beta_k)\cdot \tilde{\varphi}_i(\beta_k)
\end{equation}

Notice again that $\beta_{k(s_j)}$ does not lie in the interior of any ironed interval for all $s_j$. So $\beta_{k(s_j)}\cdot$\\
\noindent$\sum_{k=k(s_j)}^K f^B_i(\beta_k)= \sum_{k=k(s_j)}^K f^B_i(\beta_k)\cdot \tilde{\varphi}_i(\beta_k)$ by Lemma~\ref{lem:myerson-phi}. Hence, $$\mathbb{E}_{b_i}\left[x_{ij}(\Bb,\Bs)p^B_i(\Bb,\Bs)\right]\geq \mathbb{E}_{b_i}\left[x_{ij}(\Bb,\Bs)p^S_i(\Bb,\Bs)\right].$$

Take expectation over $s_j, b_{-i}, s_{-j}$, and sum over all $i,j$:
\begin{equation}\label{inequ:diff of payment}
\sum_{i=1}^n\sum_{j=1}^m\mathbb{E}_{\Bb,\Bs}\left[x_{ij}(\Bb,\Bs)(p^B_i(\Bb,\Bs)-p^S_j(\Bb,\Bs))\right]\geq 0
\end{equation}

Hence, GSOM is ex-ante WBB.
\end{proof}

Now we are ready to prove Theorem~\ref{thm:ewbb mechanism}. 

\begin{prevproof}{Theorem}{thm:ewbb mechanism}
We use $\gsom$ (or $\gbom$) to denote the expected GFT of GSOM (or GBOM).  According to Lemma~\ref{lem:ewbb proof}, both GSOM and GBOM are IR, BIC and ex-ante WBB, so we only need to prove that $\gsom+\gbom\geq \opt$. By Equation~\ref{equ:lagrangian with phi} and Lemma~\ref{lem:canonial flow}, when $\alpha=1$, for any IR, BIC and SBB mechanism $M=(A,x^B,x^S,p^B,p^S)$,
\begin{equation}\label{equ:gft-separate-general}
\begin{aligned}
\gft(M)&\leq \mathbb{E}_{\Bb,\Bs}\left[\sum_{i=1}^n x^B_i(\Bb,\Bs)\cdot (b_i+\tilde{\varphi}_i(b_i))-\sum_{j=1}^m x^S_j(\Bb,\Bs)\cdot (s_j+\tilde{\tau}_j(s_j))\right]\\
&=\mathbb{E}_{\Bb,\Bs}\left[\sum_{(i,j)\in A(\Bb,\Bs)}(b_i+\tilde{\varphi}_i(b_i)-s_j-\tilde{\tau}_j(s_j))\right]~~~~~~\text{(Definition of $A(\Bb,\Bs)$)}\\
&\leq \mathbb{E}_{\Bb,\Bs}\left[\sum_{(i,j)\in A^{1}(\Bb,\Bs)}(\tilde{\varphi}_i(b_i)-s_j)\right]+\mathbb{E}_{\Bb,\Bs}\left[\sum_{(i,j)\in A^{2}(\Bb,\Bs)}(b_i-\tilde{\tau}_j(s_j))\right]~~~\text{(Definition~\ref{def:som and bom})}\\
\end{aligned}
\end{equation}

For every $i, j$, fix $b_{-i},\Bs$, we continue to use the notation $k(s_j)$ and $x_{ij}(\Bb,\Bs)$ as in Lemma~\ref{lem:ewbb proof} according to GSOM.
\begin{equation}
\begin{aligned}
\mathbb{E}_{b_i}[(\tilde{\varphi}(b_i)-s_j)&\cdot x_{ij}(\Bb,\Bs)]\\
=&\sum_{r=k(s_j)}^K\tilde{\varphi}(\beta_r)\cdot f^B(\beta_r)-s_j\cdot \mathbb{E}_{b_i}[x_{ij}(\Bb,\Bs)]\\
=&\beta_{k(s_j)}\cdot \sum_{r=k(s_j)}^K f^B(\beta_r)-s_j\cdot \sum_{r=k(s_j)}^K f^B(\beta_r) \quad\text{($\beta_{k(s_j)}$ not in interior of any ironed interval)}\\
\leq &\sum_{r=k(s_j)}^K (\beta_r-s_j)\cdot f^B(\beta_r)\\
=&\mathbb{E}_{b_i}\left[(b_i-s_j)\cdot x_{ij}(\Bb,\Bs)\right]
\end{aligned}
\end{equation}

Take expectation on $b_{-i},\Bs$, and then sum up over all $i,j$:
\begin{equation}
\begin{aligned}
\mathbb{E}_{\Bb,\Bs}\left[\sum_{(i,j)\in A^{1}(\Bb,\Bs)}(\tilde{\varphi}_i(b_i)-s_j)\right]&=\sum_{i,j}\mathbb{E}_{\Bb,\Bs}\left[(\tilde{\varphi}(b_i)-s_j)\cdot x_{ij}(\Bb,\Bs)\right]\\
&\leq \sum_{i,j}\mathbb{E}_{\Bb,\Bs}\left[(b_i-s_j)\cdot x_{ij}(\Bb,\Bs)\right]= \gsom
\end{aligned}
\end{equation}
Similarly, we have $\mathbb{E}_{\Bb,\Bs}\left[\sum_{(i,j)\in A^{2}(\Bb,\Bs)}\left(b_i-\tilde{\tau}_j(s_j)\right)\right]\leq \gbom$. Combine this with Inequality~(\ref{equ:gft-separate-general}), we have $\opt\leq \gsom+\gbom$.
\end{prevproof}

Finally, by combining Theorem~\ref{thm:ex-ante WBB to SBB} with Theorem~\ref{thm:ewbb mechanism}, we have constructed a simple, IR, BIC and SBB mechanism whose GFT is at least $\frac{\opt}{2}$.
\begin{theorem}
	In any double auction with arbitrary buyer and seller value distributions and arbitrary downward-closed feasibility constraint $\mathcal{F}$, we can design an IR, BIC, SBB mechanism whose GFT is at least $\frac{\opt}{2}$.\end{theorem}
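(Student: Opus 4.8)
The final statement is an immediate corollary: it combines the ex-ante WBB mechanism constructed in Theorem~\ref{thm:ewbb mechanism} with the black-box transformation of Theorem~\ref{thm:ex-ante WBB to SBB}. So the plan is essentially a two-line composition argument, and the only real content is checking that the hypotheses of the transformation are met and that the transformation preserves the allocation rule (hence the GFT). Here is how I would write it.

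\textbf{Proof plan.} First, invoke Theorem~\ref{thm:ewbb mechanism} to obtain an IR, DSIC, ex-ante WBB mechanism $M = (x^B, x^S, p^B, p^S)$ whose expected gains from trade satisfy $\gft(M) \geq \frac{1}{2}\opt$; concretely, $M$ is whichever of GSOM and GBOM has the larger GFT, as shown in the proof of that theorem via $\gsom + \gbom \geq \opt$. Next I would observe that $M$ can be taken to have a non-negative payment rule: both GSOM and GBOM use threshold payments, so a buyer never pays more than her (reported) value and a seller never receives less than her (reported) value, and in particular $p^B_i, p^S_j \geq 0$ everywhere. This is exactly the hypothesis needed to apply Theorem~\ref{thm:ex-ante WBB to SBB}.

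\textbf{Applying the transformation.} Since $M$ is IR, BIC (DSIC implies BIC), ex-ante WBB, and has a non-negative payment rule, Theorem~\ref{thm:ex-ante WBB to SBB} yields a new payment rule $(p^{B'}, p^{S'})$ such that $M' = (x^B, x^S, p^{B'}, p^{S'})$ is IR, BIC and SBB. The crucial point is that the transformation modifies only the payments and leaves the allocation rule $(x^B, x^S)$ untouched; since $\gft$ depends on the mechanism only through its allocation rule (see the definition of $\gft(M)$ in Section~\ref{sec:prelim}), we get $\gft(M') = \gft(M) \geq \frac{1}{2}\opt$. This holds for an arbitrary double auction with arbitrary buyer and seller value distributions and arbitrary downward-closed feasibility constraint $\mathcal{F}$, since both Theorem~\ref{thm:ewbb mechanism} and Theorem~\ref{thm:ex-ante WBB to SBB} are stated at that level of generality, which completes the proof.

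\textbf{Main obstacle.} There is no substantive obstacle remaining here — all the work has been done in the two cited theorems. The only point requiring a moment's care is the non-negativity of the payment rule of $M$: one must note that threshold payments are automatically non-negative (a losing agent pays/receives $0$, a winning buyer pays a threshold value in her support, a winning seller receives a threshold value in her support), so the hypothesis of Theorem~\ref{thm:ex-ante WBB to SBB} is genuinely satisfied rather than merely assumed. Beyond that, the statement is purely a composition of earlier results.
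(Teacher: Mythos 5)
Your proof is correct and takes essentially the same route as the paper, which also obtains the result by composing Theorem~\ref{thm:ewbb mechanism} with Theorem~\ref{thm:ex-ante WBB to SBB}. Your additional check that threshold payments are non-negative (so the hypothesis of Theorem~\ref{thm:ex-ante WBB to SBB} is actually satisfied) is a worthwhile bit of diligence that the paper leaves implicit.
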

\notshow{\begin{proof}
	It follows from Theorem~\ref{thm:ex-ante WBB to SBB} and Theorem~\ref{thm:ewbb mechanism}.
\end{proof}}

\section{Transformation to SBB Mechanisms}\label{sec:transformation}
In this section, we argue how to transform an IR, BIC and ex-ante WBB mechanism to an IR, BIC and SBB mechanism without changing the allocation rule. Clearly, the GFT remains the same after the transformation. We want to emphasize that all transformations in this section are not restricted to double auctions and can be applied to general two-sided markets.

We first clarify the definition of a mechanism $M=(x^B, x^S, p^B, p^S)$ used in Section~\ref{sec:transformation}, for a general two-sided market. For every type profile $(\Bb,\Bs)$, for every buyer $i\in [n]$ (or seller $j\in [m]$), $x^B_i(\Bb,\Bs)$ (or $x_j^S(\Bb,\Bs)$) describes the allocation rule for buyer $i$ (or seller $j$) under this type profile. For example, in the two sided-market with multiple heterogeneous items, $x^B_i(\Bb,\Bs)$ contains the probability for buyer $i$ to receive any subset of items. We use  $v^B_i(x^B_i(\Bb,\Bs))$ (or $v^S_j(x_j^S(\Bb,\Bs))$) to denote buyer $i$'s (or seller $j$'s) expected value of the allocation under the type profile $(\Bb,\Bs)$.

If a mechanism is ex-ante WBB, the sum of buyers' expected payment is larger than the sum of sellers' expected gains. While fixing the allocation rule, if we simply give the surplus to each seller evenly in the end of the mechanism, the utility of any agent does not decrease under any type profile, hence the mechanism remains to be IR. Also, since the surplus is distributed evenly and independent of the reported type, the mechanism is still BIC. The following lemma provides the transformation from an arbitrary IR, BIC, ex-ante WBB mechanism to an IR, BIC, ex-ante SBB mechanism.

\begin{lemma}\label{lem:ewbb to ebb}
Given an IR, BIC, ex-ante WBB mechanism $M=(x^B, x^S, p^B, p^S)$ with non-negative payment rule, there exists another non-negative payment rule $(p^{B'}, p^{S'})$ such that mechanism $M'=(x^B, x^S, p^{B'}, p^{S'})$ is IR, BIC and ex-ante SBB.
\end{lemma}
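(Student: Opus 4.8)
The statement to prove is Lemma~\ref{lem:ewbb to ebb}: transforming an IR, BIC, ex-ante WBB mechanism into an IR, BIC, ex-ante SBB mechanism while keeping the allocation rule fixed.

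\textbf{Proposed approach.} The plan is to compute the ex-ante surplus of the original mechanism and then rebate it to the agents in a way that is independent of their reported types. Let $\Sigma = \mathbb{E}_{\Bb,\Bs}\left[\sum_{i=1}^n p^B_i(\Bb,\Bs) - \sum_{j=1}^m p^S_j(\Bb,\Bs)\right]$ be the expected surplus; by the ex-ante WBB assumption, $\Sigma \ge 0$. First I would define the new payments by leaving the buyers' payments untouched, $p^{B'}_i(\Bb,\Bs) = p^B_i(\Bb,\Bs)$ for all $i$ and all type profiles, and increasing each seller's gains by a fixed constant: $p^{S'}_j(\Bb,\Bs) = p^S_j(\Bb,\Bs) + \frac{\Sigma}{m}$ for all $j$ and all type profiles. (Alternatively one could spread $\Sigma$ over both sides; spreading it only among sellers is the cleanest.) Since $\Sigma \ge 0$ and the original payment rule is non-negative, $(p^{B'}, p^{S'})$ is non-negative.

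\textbf{Key steps in order.} (1) Verify ex-ante SBB: $\mathbb{E}_{\Bb,\Bs}\left[\sum_i p^{B'}_i - \sum_j p^{S'}_j\right] = \Sigma - m\cdot\frac{\Sigma}{m} = 0$, which is exactly ex-ante SBB. (2) Verify IR: for every buyer, the interim utility is unchanged since $p^{B'}_i = p^B_i$, so buyers' interim IR is inherited from $M$; for every seller $j$ and every type $s_j$, the interim utility increases by exactly $\frac{\Sigma}{m} \ge 0$ compared to $M$, so sellers' interim IR is preserved (in fact improved). (3) Verify BIC: the additive term $\frac{\Sigma}{m}$ is a constant that does not depend on seller $j$'s reported type, so it shifts every seller's interim utility function by the same constant and hence does not affect incentive comparisons; buyers' BIC is untouched. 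This establishes that $M'$ is IR, BIC, and ex-ante SBB with a non-negative payment rule.

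\textbf{Main obstacle.} There is no serious obstacle here — the lemma is essentially a bookkeeping argument, and the only subtlety is making sure the rebate is independent of reported types (so BIC survives) and that non-negativity is preserved (which follows from $\Sigma \ge 0$ plus non-negativity of the original payments). The genuinely harder part of Theorem~\ref{thm:ex-ante WBB to SBB} is the subsequent step (Theorem~\ref{thm:ebb to sbb}), which must massage the now-ex-ante-SBB payments into an \emph{ex-post} SBB rule while keeping all interim payments fixed; that is where the interesting linear transformation on payments is needed, but it is outside the scope of this particular lemma.
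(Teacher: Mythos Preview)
Your proposal is correct and essentially identical to the paper's own proof: compute the ex-ante surplus $\delta\ge 0$, leave buyers' payments unchanged, and add the constant $\delta/m$ to every seller's gain, then verify ex-ante SBB, IR, and BIC exactly as you outline. There is nothing to add.
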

\begin{proof}
Let $\delta=\mathbb{E}_{\Bb,\Bs}\left[\sum_{i=1}^n p^B_i(\Bb,\Bs)-\sum_{j=1}^m p^S_j(\Bb,\Bs)\right]\geq 0$. Define $(p^{B'}, p^{S'})$ as follows: for every $\Bb,\Bs$, $p^{B'}(\Bb,\Bs)=p^B(\Bb,\Bs)\geq 0$, $p^{S'}(\Bb,\Bs)=p^S(\Bb,\Bs)+\frac{\delta}{m}\geq 0$. Then
\begin{equation}
\mathbb{E}_{\Bb,\Bs}\left[\sum_{i=1}^n p^{B'}_i(\Bb,\Bs)-\sum_{j=1}^m p^{S'}_j(\Bb,\Bs)\right]=\mathbb{E}_{\Bb,\Bs}\left[\sum_{i=1}^n p^B_i(\Bb,\Bs)-\sum_{j=1}^m p^S_j(\Bb,\Bs)\right]-m\cdot \frac{\delta}{m}=0
\end{equation}
$M'$ is ex-ante SBB. In mechanism $M'$, it first gives $\frac{\delta}{m}$ to each seller and then follows the allocation rule and payment of mechanism $M$. Since each seller receives a fixed amount of money at the beginning of the mechanism, $M'$ will still be IR and BIC.
\end{proof}

The next theorem provides a transformation for turning an IR, BIC, ex-ante SBB mechanism to a SBB mechanism without modifying the allocation rule. 

\begin{theorem}\label{thm:ebb to sbb}
Given an IR, BIC and ex-ante SBB mechanism $M=(x^B, x^S, p^B, p^S)$ with non-negative payment rule, there exists another non-negative payment rule $(p^{B'}, p^{S'})$ such that the mechanism $M'=(x^B, x^S, p^{B'}, p^{S'})$ is IR, BIC and SBB.
\end{theorem}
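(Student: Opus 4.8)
The plan is to construct the new payment rule $(p^{B'}, p^{S'})$ from the interim payments of $M$ while preserving those interim payments exactly, so that IR and BIC are automatically inherited from $M$, and only the ex-post equality $\sum_i p^{B'}_i(\Bb,\Bs) = \sum_j p^{S'}_j(\Bb,\Bs)$ needs to be engineered. The key observation is that ex-ante SBB says $\mathbb{E}_{\Bb,\Bs}[\sum_i p^B_i(\Bb,\Bs) - \sum_j p^S_j(\Bb,\Bs)] = 0$, i.e. the ``surplus function'' $g(\Bb,\Bs) := \sum_i p^B_i(\Bb,\Bs) - \sum_j p^S_j(\Bb,\Bs)$ has expectation zero but is not identically zero. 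We want to redistribute $g$ among the agents type-profile-by-type-profile so that it vanishes pointwise, without disturbing anyone's interim (type-averaged) payment.

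Concretely, I would first set $p^{B'}_i(\Bb,\Bs) = p^B_i(\Bb,\Bs) - \Delta^B_i(\Bb,\Bs)$ and $p^{S'}_j(\Bb,\Bs) = p^S_j(\Bb,\Bs) + \Delta^S_j(\Bb,\Bs)$ for correction terms $\Delta^B_i, \Delta^S_j$ to be chosen, subject to two requirements: (i) pointwise balance, $\sum_j \Delta^S_j(\Bb,\Bs) + \sum_i \Delta^B_i(\Bb,\Bs) = g(\Bb,\Bs)$ for every $(\Bb,\Bs)$; and (ii) interim-neutrality, $\mathbb{E}_{b_{-i},\Bs}[\Delta^B_i(b_i,b_{-i},\Bs)] = 0$ for every $i$ and every $b_i$, and symmetrically $\mathbb{E}_{\Bb,s_{-j}}[\Delta^S_j(\Bb,s_j,s_{-j})] = 0$ for every $j$ and every $s_j$. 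Requirement (ii) guarantees every agent's interim payment is unchanged, hence $M'$ has the same interim utilities as $M$ and is therefore IR and BIC; requirement (i) gives SBB. The natural construction is a ``telescoping'' correction: using the product structure of $D$, write $g$ as a sum of terms each depending on a growing prefix of coordinates and subtracting off its conditional expectation, so that each increment has conditional mean zero — this is exactly a martingale / ANOVA-type decomposition of $g$ along the coordinates $(b_1,\dots,b_n,s_1,\dots,s_m)$, and assigning the $k$-th increment to the $k$-th agent yields corrections satisfying (ii), while the increments sum to $g - \mathbb{E}[g] = g$ giving (i).

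I would also need to verify non-negativity of the resulting payments, which is the one genuinely delicate point: the raw telescoping corrections need not keep $p^{B'}, p^{S'} \geq 0$. The fix is to first add a large enough constant shift to everyone's payments (as in Lemma~\ref{lem:ewbb to ebb}, give each agent a fixed lump sum in and take a fixed lump sum out so the mechanism stays ex-ante SBB and the interim payments have a comfortable positive buffer) and only then apply the redistribution; since the market is finite, the corrections $\Delta^B_i, \Delta^S_j$ are bounded, so a bounded uniform shift absorbs them. One must check the shift itself preserves ex-ante SBB and IR/BIC — it does, because it is type-independent.

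The main obstacle is arranging the redistribution to be simultaneously pointwise-exact and interim-neutral for \emph{both} sides at once; the product-distribution structure is what makes this possible, and getting the bookkeeping of the conditional-expectation decomposition right (so that the increments genuinely telescope to $g$ and each has the correct conditional mean zero) is where the real work lies. Everything else — inheriting IR and BIC from unchanged interim payments, and the constant-shift trick for non-negativity — is routine once the decomposition is in place.
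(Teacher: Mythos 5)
Your high-level framing is correct and matches the paper's: construct new payments that preserve every agent's interim payment (so IR and BIC are inherited automatically) while being pointwise balanced. But the specific construction you propose does not work. The martingale increment $\Delta_k = \mathbb{E}[g\mid t_1,\dots,t_k] - \mathbb{E}[g\mid t_1,\dots,t_{k-1}]$ has mean zero conditional on the \emph{previous} coordinates $t_1,\dots,t_{k-1}$, but your requirement (ii) demands mean zero conditional on agent $k$'s \emph{own} coordinate $t_k$. By independence, $\mathbb{E}_{t_{-k}}[\Delta_k] = \mathbb{E}[g\mid t_k]$, which is agent $k$'s interim surplus — generally nonzero. (The ANOVA/Hoeffding decomposition has the identical obstruction via the main-effect term $g_{\{k\}}(t_k) = \mathbb{E}[g\mid t_k]$.) So assigning the $k$-th increment to the $k$-th agent changes that agent's interim payment by exactly their interim surplus and thus breaks BIC. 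For a concrete counterexample with one buyer and one seller, take $b,s$ uniform on $\{0,1\}$, $p^B(1,0)=1$, $p^B=0$ otherwise, $p^S\equiv 1/4$; then $\mathbb{E}_s[\Delta_1(b=1)] = 1/4\neq 0$. The scheme can be rescued by \emph{shifting} the assignment (give $\Delta_k$ to agent $k-1$, and $\Delta_1$ to the last agent, using $\mathbb{E}[\Delta_k\mid t_l]=0$ for $l<k$), but that is not what you wrote and you would need to identify and justify it.

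Your non-negativity patch is also problematic: adding a positive constant to buyers' payments lowers their utility and can violate IR, and the Lemma~\ref{lem:ewbb to ebb} trick of gifting a lump sum is unavailable here because there is no positive ex-ante surplus to distribute. The paper avoids both difficulties with a \emph{multiplicative}, not additive, construction: $p^{B'}_i(\Bb,\Bs)$ is set proportional to the product $\prod_{i'} p^B_{i'}(b_{i'})\cdot\prod_{j'} p^S_{j'}(s_{j'})$ of interim payments, normalized so that averaging over $(b_{-i},\Bs)$ returns $p^B_i(b_i)$. Interim-neutrality then follows directly from the product structure of $D$ (the expectations factor), SBB follows because the common product factor makes $\sum_i p^{B'}_i = \sum_j p^{S'}_j$ reduce to the ex-ante identity $\sum_i\mathbb{E}[p^B_i]=\sum_j\mathbb{E}[p^S_j]$, and non-negativity is automatic since the construction is a product of non-negative quantities — no constant shift is needed.
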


\begin{proof}
We will construct $p^{B'}$ such that for every buyer $i$ and her type $b_i$, the expected payment for buyer $i$ to report type $b_i$ in $M'$ is the same as her payment in $M$. Formally,
$$\mathbb{E}_{b_{-i},\Bs}\left[p^{B'}_i(b_i,b_{-i},\Bs)\right]=\mathbb{E}_{b_{-i},\Bs}\left[p^{B}_i(b_i,b_{-i},\Bs)\right].$$
Similarly, for each seller $j$ and her type $s_j$, the expected gains for seller $j$ to report type $s_j$ in $M'$ is the same as her gains in
$M$, that is, $$\mathbb{E}_{s_{-j},\Bb}\left[p^{S'}_j(\Bb,s_j,s_{-j})\right]=\mathbb{E}_{s_{-j},\Bb}\left[p^{S}_j(\Bb,s_j,s_{-j})\right].$$

 This property guarantees that the expected utility for buyer $i$ (or seller $j$) when reporting type $b_i$ (or type $s_j$) stays unchanged. Since $M$ is BIC and IR, $M'$ is also  BIC and IR. 

Suppose we are given an IR, BIC and ex-ante SBB mechanism $M=(x^B, x^S, p^B, p^S)$. Define the payment rule $(p^{B'}, p^{S'})$ as follows. Let $\Omega^B=\left\{i\in [n]: \mathbb{E}_{\Bb',\Bs'}[p^B_{i}(\Bb',\Bs')]>0\right\}$ and $\Omega_S=\{j\in [m] : \mathbb{E}_{\Bb',\Bs'}[p^S_{j}(\Bb',\Bs')]>0\}$.

For $i\not\in \Omega^B$, since all payments $p^B_{i}(\Bb',\Bs')$ are non-negative, we must have $p^B_{i}(\Bb',\Bs')=0$ for all $\Bb',\Bs'$. We define $p^{B'}_i(\Bb,\Bs)=0$ for every type profile $(\Bb,\Bs)$. Similarly, for $j\not\in \Omega^S$, let $p^{S'}_j(\Bb,\Bs)=0$ for every type profile $(\Bb,\Bs)$. We use $p_i^B(b_i)$ to denote $\mathbb{E}_{b_{-i},\Bs}\left[p^{B}_i(b_i,b_{-i},\Bs)\right]$ and $p_j^S(s_j)$ to denote $\mathbb{E}_{s_{-j},\Bb}\left[p^{S}_j(\Bb,s_j,s_{-j})\right]$.

For $i\in\Omega^B$ and $j\in \Omega^S$, define

\begin{equation}\label{equ:ebb to sbb-buyer payment}
p^{B'}_i(\Bb,\Bs)=\frac{\prod_{i'\in \Omega^B} p^B_{i'}(b_{i'})\cdot \prod_{j'\in\Omega^S} p^S_{j'}(s_{j'})}{\prod_{i'\in \Omega^B} \mathbb{E}_{\Bb',\Bs'}[p^B_{i'}(\Bb',\Bs')]\cdot \prod_{j'\in\Omega^S} \mathbb{E}_{\Bb',\Bs'}[p^S_{j'}(\Bb',\Bs')]}\cdot \mathbb{E}_{\Bb',\Bs'}[p_i^B(\Bb',\Bs')]
\end{equation}

\begin{equation}
p^{S'}_j(\Bb,\Bs)=\frac{\prod_{i'\in \Omega^B} p^B_{i'}(b_{i'})\cdot \prod_{j'\in\Omega^S} p^S_{j'}(s_{j'})}{\prod_{i'\in \Omega^B} \mathbb{E}_{\Bb',\Bs'}[p^B_{i'}(\Bb',\Bs')]\cdot \prod_{j'\in\Omega^S} \mathbb{E}_{\Bb',\Bs'}[p^S_{j'}(\Bb',\Bs')]}\cdot \mathbb{E}_{\Bb',\Bs'}[p_j^S(\Bb',\Bs')]
\end{equation}

For every $\Bb,\Bs$, since $M$ is ex-ante SBB, $\sum_{i\in \Omega^B} \mathbb{E}_{\Bb',\Bs'}[p^B_{i}(\Bb',\Bs')]=\sum_{j\in \Omega^S} \mathbb{E}_{\Bb',\Bs'}[p_j^S(\Bb',\Bs')]$, which implies that $\sum_{i\in \Omega^B} p^{B'}_i(\Bb,\Bs)=\sum_{j\in \Omega^S} p^{S'}_j(\Bb,\Bs)$. Since the payments of buyers (or sellers) that are not in $\Omega_B$ (or $\Omega_S$) are $0$, Mechanism $M'$ is SBB. Moreover, for every $i\in \Omega^B$ and type $b_i$, if we take expectation of $p^{B'}_i(\Bb,\Bs)$ over all $b_{-i},\Bs$, we have
\begin{equation}
\begin{aligned}
\mathbb{E}_{b_{-i},\Bs}[p^{B'}_i(\Bb,\Bs)]&=\frac{\displaystyle{p^B_i(b_i)\cdot \prod_{i'\not=i, i'\in\Omega^B} \mathbb{E}_{b_{i'}\sim D^B_{i'}}[p^B_{i'}(b_{i'})]\cdot \prod_{j'\in\Omega^S} \mathbb{E}_{s_{j'}\sim D^S_{j'}}[p^S_{j'}(s_{j'})]\cdot \mathbb{E}_{\Bb',\Bs'}[p_i^B(\Bb',\Bs')]}}{\displaystyle{\prod_{i'\in\Omega^B} \mathbb{E}_{\Bb',\Bs'}[p^B_{i'}(\Bb',\Bs')]\cdot \prod_{j'\in\Omega^S} \mathbb{E}_{\Bb',\Bs'}[p^S_{j'}(\Bb',\Bs')]}}\\
&=p^B_i(b_i)
\end{aligned}
\end{equation}

If $i\not\in \Omega^B$, $\mathbb{E}_{b_{-i},\Bs}[p^{B'}_i(\Bb,\Bs)]=0=p^B_i(b_i)$. Similarly, for every seller $j$ and any of her type $s_j$, $\mathbb{E}_{\Bb,s_{-j}}[p^{S'}_j(\Bb,\Bs)]=p^S_j(s_j)$. Thus, $M'$ is an IR, BIC and SBB mechanism.
\end{proof}

With Lemma~\ref{lem:ewbb to ebb} and Theorem~\ref{thm:ebb to sbb}, one can show that under IR and BIC constraints, mechanisms with all variants of the Budget-Balance constraint can be transformed to one another, without changing the allocation rule (and thus not affecting the GFT). Figure~\ref{fig:transformation} describes the transformation between mechanisms with different Budget-Balance constraints. In the figure, all the simple arrows are directed from a stronger constraint to a weaker one. 

\begin{figure}
\centering
\includegraphics[scale=0.5]{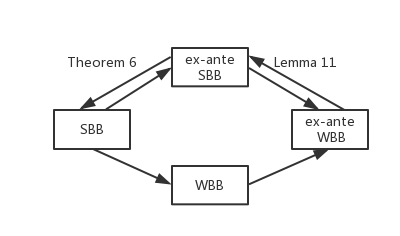}
\caption{Transformation between Mechanisms with Different BB Constraints}
\label{fig:transformation}
\end{figure}

\section{Characterizing the Implementable Allocation Rules in Double Auctions}\label{sec:characterization}

In this section, we characterize the set of allocation rules that are implementable by an IR, BIC, SBB mechanism in Theorem~\ref{thm:characterization}. It generalizes Myerson and Satterthwaite's result~\cite{Myerson81} to double auctions. In particular, their result is a special case of ours when $n=m=1$.

Up till now, all agents' distributions are assumed to have finite support. For simplicity, in Theorem~\ref{thm:characterization} we assume all agents have continuous distributions. In other words, for every buyer $i$ (or seller $j$), $f^B_i(\cdot)$ (of $f^S_j(\cdot)$) is a continuous function and positive in its domain $[\underline{b_i},\overline{b_i}]$ (or $[\underline{s_j},\overline{s_j}]$).

\begin{theorem}\label{thm:characterization}
Given an allocation rule $x=(x^B, x^S)$, there exists payment rule $(p^B, p^S)$ such that the mechanism $M=(x^B, x^S, p^B, p^S)$ is IR, BIC and SBB if and only if
\begin{itemize}
\item For every $i$, $x^B_i(b_i)$ is non-decreasing on $b_i$. For every $j$, $x^S_j(s_j)$ is non-increasing on $s_j$.
\item
\begin{equation}\label{equ:thm-characterization}
\mathbb{E}_{\Bb,\Bs}\left[\sum_{i=1}^n x^B_i(\Bb,\Bs)\cdot \varphi_i(b_i)-\sum_{j=1}^m x^S_j(\Bb,\Bs)\cdot \tau_j(s_j)\right]\geq 0
\end{equation}
\end{itemize}
\end{theorem}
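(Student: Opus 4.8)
The plan is to prove the two directions separately; the necessity direction is a standard single-dimensional revenue-equivalence computation, while the sufficiency direction will lean on the budget-balance transformations of Section~\ref{sec:transformation}.

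\textbf{Necessity.} Suppose $M=(x^B,x^S,p^B,p^S)$ is IR, BIC and SBB. Since every agent is single-dimensional, the usual Myerson-style argument forces $x^B_i(b_i)$ to be non-decreasing and $x^S_j(s_j)$ to be non-increasing, and pins down the interim utilities: $u^B_i(b_i)=u^B_i(\underline{b_i})+\int_{\underline{b_i}}^{b_i}x^B_i(t)\,dt$ and symmetrically $u^S_j(s_j)=u^S_j(\overline{s_j})+\int_{s_j}^{\overline{s_j}}x^S_j(t)\,dt$. Because $u^B_i$ is non-decreasing and $u^S_j$ is non-increasing, interim IR is equivalent to $u^B_i(\underline{b_i})\ge 0$ and $u^S_j(\overline{s_j})\ge 0$. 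A routine integration by parts — the same one behind Lemma~\ref{lem:flow without ironing} — gives $\mathbb{E}_{b_i}[p^B_i(b_i)]=\mathbb{E}_{b_i}[\varphi_i(b_i)\,x^B_i(b_i)]-u^B_i(\underline{b_i})$ and $\mathbb{E}_{s_j}[p^S_j(s_j)]=\mathbb{E}_{s_j}[\tau_j(s_j)\,x^S_j(s_j)]+u^S_j(\overline{s_j})$, with $\varphi_i,\tau_j$ exactly the (un-ironed) virtual value functions of Lemma~\ref{lem:flow without ironing}. Since SBB implies ex-ante SBB, $\sum_i\mathbb{E}[p^B_i(b_i)]=\sum_j\mathbb{E}[p^S_j(s_j)]$; substituting and rearranging yields $\mathbb{E}_{\Bb,\Bs}\big[\sum_i x^B_i(\Bb,\Bs)\varphi_i(b_i)-\sum_j x^S_j(\Bb,\Bs)\tau_j(s_j)\big]=\sum_i u^B_i(\underline{b_i})+\sum_j u^S_j(\overline{s_j})\ge 0$, which is exactly Inequality~\eqref{equ:thm-characterization}.

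\textbf{Sufficiency.} Conversely, given a monotone $x$ satisfying \eqref{equ:thm-characterization}, the idea is to first build an ex-ante SBB mechanism with the Myerson payment formula and a carefully placed IR slack, then invoke Theorem~\ref{thm:ebb to sbb} to upgrade it to SBB. Set the interim payments so that $u^B_i(\underline{b_i})=0$ for every buyer while $u^S_j(\overline{s_j})=\frac{1}{m}\mathbb{E}_{\Bb,\Bs}\big[\sum_i x^B_i\varphi_i(b_i)-\sum_j x^S_j\tau_j(s_j)\big]\ge 0$ for every seller; concretely $\bar p^B_i(b_i)=b_i x^B_i(b_i)-\int_{\underline{b_i}}^{b_i}x^B_i(t)\,dt$ and $\bar p^S_j(s_j)=s_j x^S_j(s_j)+\int_{s_j}^{\overline{s_j}}x^S_j(t)\,dt+u^S_j(\overline{s_j})$. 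Realize these ex post by the type-independent rule $p^B_i(\Bb,\Bs):=\bar p^B_i(b_i)$ and $p^S_j(\Bb,\Bs):=\bar p^S_j(s_j)$. These are non-negative: $\bar p^S_j\ge 0$ trivially, and monotonicity of $x^B_i$ gives $\int_{\underline{b_i}}^{b_i}x^B_i(t)\,dt\le(b_i-\underline{b_i})x^B_i(b_i)$, so $\bar p^B_i(b_i)\ge\underline{b_i}x^B_i(b_i)\ge 0$ (using non-negativity of the supports). Monotonicity of the allocation plus these payments makes $M$ BIC; the non-negative constants make it interim IR; and the revenue-equivalence identities of the necessity part show $\sum_i\mathbb{E}[\bar p^B_i]=\mathbb{E}[\sum_i\varphi_i x^B_i]=\sum_j\mathbb{E}[\bar p^S_j]$, so $M$ is ex-ante SBB. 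Applying Theorem~\ref{thm:ebb to sbb} then replaces the payment rule with one that is SBB under every type profile while keeping all interim payments — hence the IR and BIC guarantees and the allocation rule — unchanged.

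\textbf{Where the difficulty lies.} The delicate point is the per-profile SBB constraint: the Myerson machinery only constrains interim payments, so on its own it can deliver ex-ante SBB at best. The argument therefore hinges on Theorem~\ref{thm:ebb to sbb}, which is precisely the tool that bridges ex-ante SBB to pointwise SBB without perturbing interim quantities, and on arranging the construction so its hypotheses hold — namely that the mechanism fed to it has non-negative payments. This is exactly why the IR slack in \eqref{equ:thm-characterization} is loaded entirely onto the seller side, where gains are automatically non-negative, rather than split with the buyers, whose Myerson payments could otherwise be driven negative. The continuity assumption is only used so that the envelope formula and the integration-by-parts identities hold in their clean integral form.
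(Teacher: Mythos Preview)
Your proof is correct and follows essentially the same route as the paper: necessity via the envelope formula and the revenue-equivalence identities (the paper packages this as Lemma~\ref{lem:charac}), and sufficiency by building an IR, BIC mechanism with Myerson payments and then invoking the budget-balance transformation. The only variation is in how you set up the sufficiency construction: the paper uses ex-post threshold payments to get an ex-ante WBB mechanism and then applies Lemma~\ref{lem:ewbb to ebb} followed by Theorem~\ref{thm:ebb to sbb}, whereas you load all of the virtual-surplus slack onto the sellers up front so that the mechanism is already ex-ante SBB and only Theorem~\ref{thm:ebb to sbb} is needed---a harmless and slightly more direct shortcut.
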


\begin{prevproof}{Theorem}{thm:characterization}

{The following lemma characterizes the payments in an IR, BIC mechanism. The proof is similar to the analysis for bilateral trading in~\cite{MyersonS83}.}

\begin{lemma}\label{lem:charac}
Suppose a mechanism $M=(x^B,x^S,p^B,p^S)$ is IR and BIC, then
\begin{itemize}
\item For every $i$, $x^B_i(b_i)$ is non-decreasing on $b_i$. For every $j$, $x^S_j(s_j)$ is non-increasing on $s_j$.
\item For every buyer $i$ and any of her type $b_i$,
\begin{equation}\label{equ:threshold payment-buyer}
p_i^B(b_i)=b_i\cdot x_i^B(b_i)-\int_{\underline{b_i}}^{b_i}x_i^B(t)dt-\theta_i,
\end{equation}
$\theta_i$ is some non-negative constant.
\item For every seller $j$ and any of her type $s_j$,
\begin{equation}\label{equ:threshold payment-seller}
p_j^S(s_j)=s_j\cdot x_j^S(s_j)+\int_{s_j}^{\overline{s_j}}x_j^S(t)dt+\eta_j,
\end{equation}
$\eta_j$ is some non-negative constant.
\end{itemize}
Furthermore, if $(p^B,p^S)$ satisfies Equation~\eqref{equ:threshold payment-buyer} and \eqref{equ:threshold payment-seller}, then
\begin{equation}\label{equ:threshold payment-diff}
\begin{aligned}
\sum_{i=1}^n\mathbb{E}_{b_i}[p_i^B(b_i)]-\sum_{j=1}^m\mathbb{E}_{s_j}[p_j^S(s_j)]&=\mathbb{E}_{\Bb,\Bs}\left[\sum_{i=1}^n x^B_i(\Bb,\Bs)\cdot \varphi_i(b_i)-\sum_{j=1}^m x^S_j(\Bb,\Bs)\cdot \tau_j(s_j)\right]\\
&-\left(\sum_{i=1}^n \theta_i+\sum_{j=1}^m \eta_j\right)
\end{aligned}
\end{equation}
\end{lemma}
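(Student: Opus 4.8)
The plan is to run Myerson's single-dimensional characterization separately on each buyer and each seller, and then derive the ``furthermore'' identity by taking interim expectations and exchanging the order of integration. The only genuinely two-sided ingredient is that the seller's cost enters her utility with a negative sign, which flips monotonicity and moves the boundary constant to the top type.

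First, for the buyers. Since $M$ is BIC, buyer $i$'s interim utility is
$u_i^B(b_i):=b_i\cdot x_i^B(b_i)-p_i^B(b_i)=\max_{b_i'\in T_i^{B+}}\left(b_i\cdot x_i^B(b_i')-p_i^B(b_i')\right)$,
a pointwise maximum of affine functions of $b_i$ with slopes $x_i^B(b_i')$; hence $u_i^B$ is convex, locally Lipschitz, differentiable at all but countably many points, and $(u_i^B)'(b_i)=x_i^B(b_i)$ wherever the derivative exists (the envelope identity). Convexity forces $x_i^B(\cdot)$ to be non-decreasing, the first claimed property. Being convex, $u_i^B$ is absolutely continuous on $[\underline{b_i},\overline{b_i}]$, so $u_i^B(b_i)=u_i^B(\underline{b_i})+\int_{\underline{b_i}}^{b_i}x_i^B(t)\,dt$; rearranging $p_i^B(b_i)=b_i\cdot x_i^B(b_i)-u_i^B(b_i)$ gives Equation~\eqref{equ:threshold payment-buyer} with $\theta_i:=u_i^B(\underline{b_i})$, which is nonnegative because reporting $\varnothing$ yields utility $0$ and $M$ is interim IR.

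For the sellers, the argument is the mirror image. Seller $j$'s interim utility is $u_j^S(s_j):=p_j^S(s_j)-s_j\cdot x_j^S(s_j)=\max_{s_j'\in T_j^{S+}}\left(p_j^S(s_j')-s_j\cdot x_j^S(s_j')\right)$, a pointwise maximum of affine functions of $s_j$ with slopes $-x_j^S(s_j')$; thus $u_j^S$ is convex with $(u_j^S)'(s_j)=-x_j^S(s_j)$ a.e., so $x_j^S(\cdot)$ is non-increasing. Integrating the derivative from $s_j$ up to $\overline{s_j}$ gives $u_j^S(s_j)=u_j^S(\overline{s_j})+\int_{s_j}^{\overline{s_j}}x_j^S(t)\,dt$, and $p_j^S(s_j)=u_j^S(s_j)+s_j\cdot x_j^S(s_j)$ yields Equation~\eqref{equ:threshold payment-seller} with $\eta_j:=u_j^S(\overline{s_j})\geq 0$ by interim IR at $s_j=\overline{s_j}$. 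For the last identity, take the expectation of Equation~\eqref{equ:threshold payment-buyer} over $b_i\sim D_i^B$ and use Fubini on the region $\{\underline{b_i}\leq t\leq b_i\leq\overline{b_i}\}$ to get $\mathbb{E}_{b_i}\big[\int_{\underline{b_i}}^{b_i}x_i^B(t)\,dt\big]=\int_{\underline{b_i}}^{\overline{b_i}}x_i^B(t)\big(1-F_i^B(t)\big)\,dt=\mathbb{E}_{b_i}\big[\tfrac{1-F_i^B(b_i)}{f_i^B(b_i)}x_i^B(b_i)\big]$, which turns $\mathbb{E}_{b_i}[p_i^B(b_i)]$ into $\mathbb{E}_{b_i}[\varphi_i(b_i)\,x_i^B(b_i)]-\theta_i$ with $\varphi_i$ the continuous Myerson virtual value of Lemma~\ref{lem:flow without ironing}. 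The same computation on Equation~\eqref{equ:threshold payment-seller}, now over $\{\underline{s_j}\leq s_j\leq t\leq\overline{s_j}\}$, gives $\mathbb{E}_{s_j}[p_j^S(s_j)]=\mathbb{E}_{s_j}[\tau_j(s_j)\,x_j^S(s_j)]+\eta_j$. Subtracting, using $x_i^B(b_i)=\mathbb{E}_{b_{-i},\Bs}[x_i^B(\Bb,\Bs)]$ and $x_j^S(s_j)=\mathbb{E}_{\Bb,s_{-j}}[x_j^S(\Bb,\Bs)]$, and summing over $i$ and $j$ yields exactly Equation~\eqref{equ:threshold payment-diff}.

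This is the classical single-dimensional envelope/Myerson toolkit, so no step is deep; the care required is (a) justifying the envelope identity and the fundamental theorem of calculus for the convex, a.e.\ differentiable $u_i^B$ and $u_j^S$ under the paper's continuity hypotheses on $f_i^B,f_j^S$, and (b) keeping the seller's sign conventions straight, since $s_j$ enters seller $j$'s utility negatively so monotonicity flips and the boundary constant is pinned at $\overline{s_j}$ rather than at $\underline{s_j}$. I expect (b) to be the only place where a slip would be easy, but it is entirely mechanical.
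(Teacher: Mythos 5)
Your proof is correct and takes essentially the same route as the paper: both apply the standard single-dimensional Myerson payment-identity/monotonicity characterization separately to each buyer and each seller (with the sign flip pinning the seller's IR constant at $\overline{s_j}$), then compute interim expectations by swapping the order of integration. The only difference is packaging — you phrase the integral formula via convexity of the envelope $u=\max(\cdot)$ and the envelope theorem, whereas the paper derives the sandwich inequality $(b_i'-b_i)x_i^B(b_i)\leq U_i^B(b_i')-U_i^B(b_i)\leq(b_i'-b_i)x_i^B(b_i')$ directly from two BIC constraints and takes a limit; these are the same argument.
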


\begin{proof}
For every $i$ and $b_i$, let $U^B_i(b_i)$ be buyer $i$'s expected utility when she reports her true type $b_i$. Since $M$ is BIC, for every buyer $i$, and two types $b_i,b_i'\in [\underline{b_i},\overline{b_i}]$,
\[U^B_i(b_i)=b_i\cdot x_i^B(b_i)-p_i^B(b_i)\geq b_i\cdot x_i^B(b_i')-p_i^B(b_i')\]
\[U^B_i(b_i')=b_i'\cdot x_i^B(b_i')-p_i^B(b_i')\geq b_i'\cdot x_i^B(b_i)-p_i^B(b_i)\]
The two inequalities imply
\[(b_i'-b_i)\cdot x_i^B(b_i)\leq U^B_i(b_i')-U^B_i(b_i)\leq (b_i'-b_i)\cdot x_i^B(b_i')\]

when $b_i'>b_i$, $x_i^B(b_i')\geq x_i^B(b_i)$. $x_i^B(b_i)$ is non-decreasing on $b_i$ and thus Riemann integrable. Let $b_i'=b_i+\epsilon$,
\begin{equation}
\epsilon\cdot x_i^B(b_i)\leq U^B_i(b_i+\epsilon)-U^B_i(b_i)\leq \epsilon\cdot x_i^B(b_i+\epsilon)
\end{equation}

For any value $z$, taking integral of $b_i$ on $[\underline{b_i},z]$ and let $\epsilon\to 0$, we have
\begin{equation}
U^B_i(z)=U^B_i(\underline{b_i})+\int_{\underline{b_i}}^z x_i^B(b_i)db_i
\end{equation}
\begin{equation}
p^B_i(z)=z\cdot x_i^B(z)-\int_{\underline{b_i}}^z x_i^B(b_i)db_i-U^B_i(\underline{b_i})
\end{equation}

Clearly, $U^B_i(\underline{b_i})\geq 0$, as $M$ is interim IR. Equation~\eqref{equ:threshold payment-buyer} follows from setting $\theta_i=U^B_i(\underline{b_i})$. Similarly, we can show that Equation~\eqref{equ:threshold payment-seller} holds for every $j$.

Furthermore, for every $i$, by Equation~\eqref{equ:threshold payment-buyer},
\begin{equation}
\begin{aligned}
\mathbb{E}_{b_i}[p_i^B(b_i)]&=\int_{\underline{b_i}}^{\overline{b_i}}b_i x_i^B(b_i)f^B_i(b_i)db_i-\int_{\underline{b_i}}^{\overline{b_i}}\int_{\underline{b_i}}^{b_i}x_i^B(t)f^B_i(b_i)dtdb_i-\theta_i\\
&=\int_{\underline{b_i}}^{\overline{b_i}}b_i x_i^B(b_i)f^B_i(b_i)db_i-\int_{\underline{b_i}}^{\overline{b_i}}x_i^B(t)\int_{t}^{\overline{b_i}}f^B_i(b_i)db_i dt-\theta_i\\
&=\int_{\underline{b_i}}^{\overline{b_i}}b_i x_i^B(b_i)f^B_i(b_i)db_i-\int_{\underline{b_i}}^{\overline{b_i}}x_i^B(b_i) \left(1-F_i^B(b_i)\right)db_i-\theta_i\\
&=\int_{\underline{b_i}}^{\overline{b_i}}\varphi_i(b_i)x_i^B(b_i)f^B_i(b_i)db_i-\theta_i
\end{aligned}
\end{equation}
Similarly, for every seller $j$, by Equation~\eqref{equ:threshold payment-seller},
\begin{equation}
\begin{aligned}
\mathbb{E}_{s_j}[p_j^S(s_j)]&=\int_{\underline{s_j}}^{\overline{s_j}}s_j x_j^S(s_j)f^S_j(s_j)ds_j+\int_{\underline{s_j}}^{\overline{s_j}}\int_{s_j}^{\overline{s_j}}x_j^S(t)f^S_j(s_j)dtds_j+\eta_j\\
&=\int_{\underline{s_j}}^{\overline{s_j}}s_j x_j^S(s_j)f^S_j(s_j)ds_j\int_{\underline{s_j}}^{\overline{s_j}}x_j^S(t)\int_{\underline{s_j}}^{t}f^S_j(s_j)ds_j dt+\eta_j\\
&=\int_{\underline{s_j}}^{\overline{s_j}}s_j x_j^S(s_j)f^S_j(s_j)ds_j\int_{\underline{s_j}}^{\overline{s_j}}x_j^S(s_j)F_j^S(s_j) ds_j+\eta_j\\
&=\int_{\underline{s_j}}^{\overline{s_j}}\tau_j(s_j)x_j^S(s_j)f^S_j(s_j)ds_j+\theta_j
\end{aligned}
\end{equation}

Equation~\eqref{equ:threshold payment-diff} directly follows from the two equations above.

\end{proof}

\hspace{0.1in}

If there exists a payment rule $(p^B, p^S)$ such that the mechanism $M=(x^B, x^S, p^B, p^S)$ is IR, BIC and SBB, by Lemma~\ref{lem:charac}, $x$ is monotone and there exists non-negative $\theta_i$'s and $\eta_j$'s such that Equation~\eqref{equ:threshold payment-diff} holds. Since $M$ is SBB,
\begin{equation}
\mathbb{E}_{\Bb,\Bs}\left[\sum_{i=1}^n x^B_i(\Bb,\Bs)\cdot \varphi_i(b_i)-\sum_{j=1}^m x^S_j(\Bb,\Bs)\cdot \tau_j(s_j)\right]=\sum_{i=1}^n \theta_i+\sum_{j=1}^m \eta_j\geq 0
\end{equation}

If the given allocation rule $x$ is monotone, and satisfies Inequality~\eqref{equ:thm-characterization}, define payment rule $p=(p^B, p^S)$ as follows:
\begin{equation}
p_i^B(\Bb,\Bs)=b_i\cdot x_i^B(\Bb,\Bs)-\int_{\underline{b_i}}^{b_i}x_i^B(t,b_{-i},\Bs)dt
\end{equation}
\begin{equation}
p_j^S(\Bb,\Bs)=s_j\cdot x_j^S(\Bb,\Bs)+\int_{s_j}^{\overline{s_j}}x_j^S(\Bb,t,s_{-j})dt
\end{equation}

This is a threshold payment and thus $M=(x,p)$ is DSIC and IR. According to Equation~\eqref{equ:threshold payment-diff} and Inequality~\eqref{equ:thm-characterization}, $M$ is ex-ante WBB. By Lemma~\ref{lem:ewbb to ebb} and Theorem~\ref{thm:ebb to sbb}, there exists payment rule $(p^{B'}, p^{S'})$ such that $M'=(x^B, x^S, p^{B'}, p^{S'})$ is IR, BIC, and SBB.
\end{prevproof}

\bibliographystyle{plain}	
\bibliography{Yang}
\newpage
\appendix
\section{Missing Proofs from Section~\ref{sec:canonical flow}}\label{appx:ironing}
\begin{prevproof}{Lemma}{lem:canonial flow}

For any buyer $i$, if $D^B_i$ is irregular, i.e., $\varphi_i(\cdot)$ is not monotonically non-decreasing, we iron the virtual value function $\Phi_i^B(b_i)$ as in~\cite{CaiDW16}: for every $b_i,b_i'$ such that $b_i<b_i'$ but $\Phi_i^B(b_i)-b_i>\Phi_i^B(b_i')-b_i'$, we add a loop between $b_i$ and $b_i'$ with proper weight $w$. Then by Definition~\ref{def:virtual value}, $\Phi_i^B(b_i)$ decreases by $\frac{w\cdot(b_i'-b_i)}{f^B_i(b_i)}$ while $\Phi_i^B(b_i')$ increases by $\frac{w\cdot(b_i'-b_i)}{f^B_i(b_i')}$. By choosing an appropriate weight $w$, we can make $\Phi_i^B(b_i)-b_i=\Phi_i^B(b_i')-b_i'$. If we keep adding loops, we can make sure $\Phi_i^B(b_i)-b_i$ is monotone, and it equals to the Myerson's ironed virtual value function $\tilde{\varphi}_i(b_i)$ as defined in~\cite{CaiDW16}.

For any seller $j$, if $D^S_j$ is irregular, i.e., $\tau_j(\cdot)$ is not monotonically non-decreasing, we iron the virtual value function $\Phi_j^S(s_j)$ in the following way: for every $s_j,s_j'$ such that $s_j>s_j'$ but $\Phi_j^S(s_j)-s_j<\Phi_j^S(s_j')-s_j'$, we add a loop between $s_j$ and $s_j'$ with proper weight $w$. Then by Definition~\ref{def:virtual value}, $\Phi_j^S(s_j)$ increases by $\frac{w\cdot(s_j-s_j')}{f^S_j(s_j)}$ while $\Phi_j^S(s_j')$ decreases by $\frac{w\cdot(s_j-s_j')}{f^S_j(s'_j)}$. If we choose $w$ appropriately, we can make $\Phi_j^S(s_j)-s_j=\Phi_j^S(s_j')-s_j'$. If we keep adding loops, we can make sure $\Phi_j^S(s_j)-s_j$ is monotone, and $\Phi_j^S(s_j)-s_j=\alpha\cdot\tilde{\tau}_j(s_j)$, where $\tilde{\tau}_j(s_j)$ is the Myerson's ironed virtual value function for the seller defined as follows.

Let $F(\cdot)$ be cdf for $D^S_j$. Define $U(q)=q\cdot F^{-1}(q)$ for any $q\in[0,1]$. It is not hard to show that when $\tau_j(s_j)$ is not monotone $U(q)$ is not convex. Let $\{\alpha_1,...,\alpha_{K}\}$ be the support set of $D^S_j$, $\alpha_1<\alpha_2<...<\alpha_{K}$. For every $k\in [K]$, define $\widetilde{U}(F(\alpha_k))$ as follows:
\begin{equation}
\widetilde{U}(F(\alpha_k))=\min_{k_1,k_2\in [K]}\big(\delta\cdot U(F(\alpha_{k_1}))+(1-\delta)\cdot U(F(\alpha_{k_2})\big)
\end{equation}
where $\delta\in [0,1]$ is the unique value such that $F(\alpha_{k})=\delta\cdot F(\alpha_{k_1})+(1-\delta)\cdot F(\alpha_{k_2})$.

With the definition above, the set of points $(F(s_j),\widetilde{U}(F(s_j)))$ forms a convex curve and stays below the original points $(F(s_j),U(F(s_j)))$. The ironed virtual value is then defined as
\begin{equation}
\tilde{\tau}_j(\alpha_k)=\frac{\widetilde{U}(F(\alpha_{k}))-\widetilde{U}(F(\alpha_{k-1}))}{F(\alpha_{k})-F(\alpha_{k-1})}
\end{equation}
for $2\leq k\leq K$ and $\tilde{\tau}_j(\alpha_1)=\tau_j(\alpha_1)=\alpha_1$. Then $\tilde{\tau}_j(\cdot)$ is a monotonically non-decreasing function.

\end{prevproof}

\section{Missing Proofs from Section~\ref{sec:single}}\label{sec:proof single}
\begin{prevproof}{Lemma}{lem:myerson-phi}
If $D^B$ is regular, $\tilde{\varphi}(b)=\varphi(b)$ for all $b$. By the definition of Myerson's virtual value in Lemma~\ref{lem:flow without ironing},
\begin{equation}
\begin{aligned}
\sum_{r=k}^{K}\varphi(\beta_r)\cdot f^B(\beta_r)&=\sum_{r=k}^{K-1} \left(\beta_rf^B(\beta_r)-(\beta_{r+1}-\beta_r)\cdot \sum_{l=r+1}^{K} f^B(\beta_l)\right)+\beta_Kf^B(\beta_{K})\\
&=\sum_{r=k}^{K-1} \left(\beta_r\cdot \sum_{l=r}^Kf^B(\beta_l)-\beta_{r+1}\cdot \sum_{l=r+1}^{K} f^B(\beta_l)\right)+\beta_Kf^B(\beta_K)\\
&=\beta_k\cdot \sum_{r=k}^{K} f^B(\beta_r)
\end{aligned}
\end{equation}

If $D^B$ is not regular, consider the revenue curve in the quantile space. For each $\beta_k$, $\sum_{r=k}^{K}\varphi(\beta_r)\cdot f^B(\beta_r)$ is the expected revenue for selling the item at price $\beta_k$ which corresponds to the value of the revenue curve at $1-F_i^B(\beta_k)$. By the definition of the ironed virtual value, $\sum_{r=k}^{K}\tilde{\varphi}(\beta_r)\cdot f^B(\beta_r)$ corresponds to the value of the ironed revenue curve at $1-F_i^B(\beta_k)$. Since the ironed revenue curve never goes below the revenue curve,
\begin{equation}
\sum_{r=k}^{K}\tilde{\varphi}(\beta_r)\cdot f^B(\beta_r)\geq \sum_{r=k}^{K}\varphi(\beta_r)\cdot f^B(\beta_r)=\beta_k\cdot \sum_{r=k}^{K} f^B(\beta_r)
\end{equation}

If type $\beta_k$ does not lie in the interior of any ironed interval, the two curves meet at $\beta_k$. Hence, the equality sign holds.
\end{prevproof}

\begin{prevproof}{Lemma}{lem:myerson-tau}
If $D^S$ is regular, $\tilde{\tau}(s)=\tau(s)$ for all $s$. By the definition of Myerson's virtual value in Lemma~\ref{lem:flow without ironing},
\begin{equation}
\begin{aligned}
\sum_{r=1}^{k}\tau(\alpha_r)\cdot f^S(\alpha_r)&=\sum_{r=2}^{k} \left(\alpha_rf^S(\alpha_r)+(\alpha_{r}-\alpha_{r-1})\cdot \sum_{l=1}^{r-1} f^S(\alpha_l)\right)+\alpha_1f^S(\alpha_1)\\
&=\sum_{r=2}^{k} \left(\alpha_r\cdot \sum_{l=1}^rf^S(\alpha_l)-\alpha_{r-1}\cdot \sum_{l=1}^{r-1} f^S(\alpha_l)\right)+\alpha_1f^S(\alpha_1)\\
&=\alpha_k\cdot \sum_{r=1}^{k} f^S(\alpha_r)
\end{aligned}
\end{equation}

If $D^S$ is not regular, notice that in the quantile space, the ironed revenue curve is convex and never goes above the original revenue curve. Similarly, we have
\begin{equation}
\sum_{r=1}^{k}\tilde{\tau}(\alpha_r)\cdot f^S(\alpha_r)\leq \sum_{r=1}^{k}\tau(\alpha_r)\cdot f^S(\alpha_r)=\alpha_k\cdot \sum_{r=1}^{k} f^S(\alpha_r)
\end{equation}

If type $\alpha_k$ does not lie in the interior of any ironed interval, the two curves meet at $\alpha_k$. Hence, the equality sign holds.
\end{prevproof}



\end{document}
